\PassOptionsToPackage{unicode}{hyperref}
\PassOptionsToPackage{hyphens}{url}
\PassOptionsToPackage{dvipsnames,svgnames,x11names}{xcolor}
\documentclass[
  12pt,
  letterpaper,
]{article}
\usepackage{rotating}
\usepackage{pdflscape}
\usepackage{afterpage}
\usepackage{amsmath,amssymb}
\usepackage{setspace}
\usepackage{iftex}
\ifPDFTeX
  \usepackage[T1]{fontenc}
  \usepackage[utf8]{inputenc}
  \usepackage{textcomp} 
\else 
  \usepackage{unicode-math} 
  \defaultfontfeatures{Scale=MatchLowercase}
  \defaultfontfeatures[\rmfamily]{Ligatures=TeX,Scale=1}
\fi
\usepackage{lmodern}
\ifPDFTeX\else
\fi
\IfFileExists{upquote.sty}{\usepackage{upquote}}{}
\IfFileExists{microtype.sty}{
  \usepackage[protrusion=false]{microtype}
  \UseMicrotypeSet[protrusion]{basicmath} 
}{}
\usepackage{xcolor}
\usepackage[margin=1in]{geometry}
\usepackage{cmap}
\usepackage[T1]{fontenc}
\usepackage[utf8]{inputenc}
\input{glyphtounicode}
\usepackage{mathpazo}
\usepackage[authoryear,round]{natbib}
\bibpunct{(}{)}{;}{a}{,}{,}
\usepackage{caption}
\usepackage{placeins}
\usepackage{float}
\usepackage{graphicx}
\usepackage{booktabs}
\usepackage{tabularx}
\usepackage{xcolor}
\definecolor{LJETBlue}{HTML}{0080AC}
\newtheorem{theorem}{Theorem}
\newtheorem{proposition}{Proposition}
\newtheorem{lemma}{Lemma}

\newtheorem{corollary}{Corollary}

\ifLuaTeX
  \usepackage{selnolig}  
\fi
\IfFileExists{bookmark.sty}{\usepackage{bookmark}}{\usepackage{hyperref}}
\IfFileExists{xurl.sty}{\usepackage{xurl}}{} 
\hypersetup{
  colorlinks=true,
  linkcolor={Maroon},
  filecolor={Maroon},
  citecolor={Blue},
  urlcolor={Blue},
  pdfcreator={LaTeX via pandoc}}

\author{}
\date{}

\begin{document}

\setstretch{1.5}
\hypersetup{
  pdftitle={The Division of Understanding: Specialization and Democratic Accountability},
  pdfauthor={Giampaolo Bonomi},
  linkcolor=LJETBlue,
  citecolor=LJETBlue,
  filecolor=LJETBlue,
  urlcolor=LJETBlue
}

\title{The Division of Understanding: Specialization and Democratic Accountability\thanks{I am grateful to Arjada Bardhi, Renee Bowen, Roberto Corrao, Carlo Cusumano, Kfir Eliaz, Germán Gieczewski, Aram Grigoryan, Matias Iaryczower, Gleason Judd, Samuel Kapon, Alessandro Lizzeri, John Londregan, Pietro Ortoleva, Freddie Papazyan, Harry Pei, John Quah, Kris Ramsay, Alessandro Riboni, Denis Shishkin, Joel Sobel, Guido Tabellini, Isabel Trevino, Joel Watson, Galina Zudenkova, and participants in the Political Economic Theory session at SITE 2026 for helpful comments and suggestions.}}
\author{Giampaolo Bonomi\thanks{Princeton University. Email: bonomi@princeton.edu}}
\date{September 2026}
\maketitle
\vspace{-1em}
\begin{abstract}
\vspace{.5em}
\small
\begin{spacing}{1.1}
I develop a model in which occupational knowledge links the organization of production to democratic performance. In production, specialized knowledge allows workers to perform narrow tasks efficiently, while cross-field knowledge allows coordinators to keep the system coherent. In politics, voters bring their occupational knowledge to the ballot. Voters lacking the relevant knowledge have difficulty distinguishing good policies from bad ones, relaxing the electoral pressure towards high-quality policymaking. I show that productive efficiency can sacrifice knowledge needed to understand cross-domain policies. Competitive markets reward knowledge for its productive contribution while leaving its political value unpriced; I identify when this externality creates over-specialization and show that its welfare cost can grow with policy complexity. When policies are in the common interest, I show that expanding coordinators' responsibilities can maximize welfare by increasing their employment share and voters' average policy understanding. When policies have redistributive consequences, broadening specialists' knowledge can instead be preferable: reducing inequality in policy understanding improves the allocation of public spending. I use the framework to study the political effects of technology shocks, showing how AI adoption at different layers of the organizational hierarchy affects welfare and when firms' incentives conflict with social efficiency.

\vspace{.20cm}

\noindent \textit{JEL} Codes: D72, L22, J24, D83, H41.

\noindent \textbf{Keywords:} specialization; hierarchy; knowledge; voting; democratic accountability; artificial intelligence.
\end{spacing}
\end{abstract}

\newpage

\hypertarget{introduction}{%
\section{Introduction}\label{introduction}}
\label{sec:introduction}

\begin{flushright}
\begin{minipage}{0.74\textwidth}
\small\itshape
``The understandings of the greater part of men are necessarily formed by their ordinary employments.''

\par\smallskip
\raggedleft\upshape --- Adam Smith \citeyearpar{smith-1776}, Book V, Chapter I
\end{minipage}
\end{flushright}
\vspace{0.4em}

\emph{The Wealth of Nations} contains two accounts of the division of labor.
Book I explains its productive appeal: narrow assignments allow workers to avoid switching tasks, specialize in an operation, and master tools designed for it.
Book V asks what the same organization does to citizens. A worker confined
to a few narrow operations, Smith warns, may become ``incapable of judging
the great and extensive interests of his country.'' Together, the two
passages motivate the question studied here: can the division of labor be
efficient for production yet inefficient for the functioning of democracy?

Consider a port. A customs specialist needs to understand inspection
procedures; a terminal operator needs to understand how to load and unload
ships. Someone must also work out how inspections, berth capacity, and
inland transport fit together. Assigning this coordinating responsibility
to a small group allows everyone else to concentrate on a narrower job.
The distinction between system-wide and local knowledge matters when these workers evaluate a proposed port expansion policy or a change in trade policy. Being an expert in one operation is different from understanding how a policy will affect the system through the connections among its parts. Hence, the central tension of this paper: voters may need to combine knowledge that productive incentives divide among workers. 

I study this tension in a model in which citizens work and vote. The knowledge
they use in production determines which public
policies they understand well.\footnote{Section
\ref{sec:understanding} reports descriptive evidence. In the Dutch LISS
panel, workers answer questions about family policies more accurately
when their occupational knowledge better matches the questions. In the ECB Consumer
Expectations Survey, supervisory and strategic responsibilities are
associated with greater knowledge of inflation and the ECB's remit.
Online Appendix OF reports the data and specifications. These associations
do not establish causality or test the electoral mechanism.}
The electorate's understanding, in turn, shapes the quality of equilibrium policymaking:

\begin{center}
\emph{organization of production}
\(\rightarrow\)
\emph{knowledge}
\(\rightarrow\)
\emph{policy understanding}
\(\rightarrow\)
\emph{electoral accountability}
\end{center}
I characterize when organizations that maximize output are socially too
specialized, which changes in work, political authority, and information
sharing can improve welfare, and where those remedies fall short. Finally, I study how technology shocks affect democracy by changing
the knowledge workers acquire. The case of artificial intelligence serves as the leading example.

The model starts by deriving the productively efficient distribution of
knowledge. Workers have a fixed stock of knowledge to distribute across
knowledge domains, while the production system specifies how much of each domain is required to produce and the domain interdependencies. An organization is a tree of responsibilities, specifying who performs each part and
with what expertise. Within this tree, coordinators draw on several
knowledge domains to resolve interdependencies and reduce the complexity of the
remaining work, allowing it to be divided into narrower assignments that
require less labor per unit of work. Efficiency requires minimizing labor
per unit of output while preserving the required knowledge mix in
aggregate at every division, giving the problem a
martingale structure.\footnote{The mathematical antecedents include
efficient unbiased sparsification and the \(k\)-support norm
(\citealt{barnes-et-al-2024};
\citealt*{argyriou-foygel-srebro-2012}),
and martingale optimal transport
\citep{beiglbock-juillet-2016}.
Here, intermediate responsibilities, delegation points, and staffing
are endogenous.}
I show that any productively efficient
organization is a hierarchy of progressively narrower responsibilities,
with workers' knowledge matched to their assignments. Broad knowledge is valuable precisely because it
allows much of the remaining work to be done efficiently without it.

Each knowledge profile has a political value, which depends on the policy questions citizens face. At the port, customs expertise may suffice to assess an
inspection rule. Evaluating a port expansion requires understanding how
inspections, berth capacity, and inland transport interact. In an
efficient hierarchy, specialists understand issues confined to one knowledge domain
as well, on average, as the coordinators above them.\footnote{The deep understanding of inspection rules by a customs expert is compensated by the lack of understanding of the same rules by specialists in other knowledge domains.} When an issue
requires several knowledge domains to be considered together, expertise in one knowledge domain does not compensate a voter for
missing knowledge from another. Coordinators can therefore understand
these issues better on average, even when specialists know more about
their own knowledge domains.

Equipped with their occupational knowledge and the resulting policy understanding, the workforce becomes the electorate, called to evaluate the platforms of two office-motivated politicians competing in an election. I first consider policies that serve a common interest: everyone benefits
equally from an improvement in their quality. Competing candidates choose
how much effort to devote to designing better policies, and voters
with a better policy understanding assess quality more accurately. A more discerning electorate
therefore gives both candidates a greater incentive to improve their platforms.
In the unique pure strategy equilibrium, effort and quality are strictly increasing in the electorate's
average understanding. The distribution of knowledge in society, therefore, affects citizens' welfare not
only through the income and public resources it generates, but also through
the quality of government that elections sustain.

Knowledge acquisition therefore impacts utilitarian welfare via two margins: the productive margin and the civic one. Can its decentralization balance such margins optimally? Unsurprisingly, competitive markets reward occupational knowledge for its productive
contribution. Its contribution to better government, however, represents an externality: it is absent from the wage comparisons of individual
workers and the staffing decisions of competing firms, as their impact on election outcomes is too little to justify the corresponding losses in income and productivity. Markets, therefore, implement a productively efficient organization of knowledge, disregarding any effect on democracy. 

Different organizations may produce the
same output with different distributions of knowledge, allowing a social
planner to improve accountability at no production cost. Beyond such improvements, the planner faces a real
trade-off. To further improve understanding, the planner has different options. She can assign more coordination work to positions with the broadest responsibilities, replacing some narrower-scope workers. This increases the share of workers
who need to understand the whole system. Alternatively, she can cross-train
specialists, broadening their knowledge while leaving their jobs narrow.
Neither is free: the first gives up some of the labor savings from
delegation; the second diverts part of a worker's limited knowledge from
the knowledge domains her job requires. Which adjustment delivers the greater
political benefit for a given production cost?

I show that, under some symmetry conditions, assigning more workers to positions with the broadest responsibilities is at least as effective as cross-training narrow workers when
policy serves a common interest. An optimum follows a simple cutoff rule: while
increasing employment in system-wide positions is worthwhile, the
planner begins by augmenting system-wide responsibilities with the broadest responsibilities that would
otherwise be delegated downstream. Responsibilities narrower than a cutoff are delegated to specialists, as the political value of undoing delegation is not worth the production cost. The more the political agenda becomes complex, covering system-wide policies, the stronger the incentive to further reduce delegation and convert intermediate-breadth positions into system-wide ones. If all policy issues are sufficiently complex and broad coordination is sufficiently inexpensive, the resulting organization strictly improves welfare.

The answer changes when policy issues have strong redistributive components, so that the policy platforms differ not only in the total value generated for citizens, but also in how such a value is distributed. The groups with higher policy understanding are
more responsive to policy changes, giving
candidates an incentive to direct more spending toward them. Unequal understanding then has a
distributive cost: because targeted spending has a diminishing marginal value for voters,
concentrating it on the most electorally responsive groups lowers
welfare. Raising societal understanding is no longer the planner's only
political concern. Cross-training can address the allocative margin by improving the judgment
of workers who would otherwise be politically disadvantaged. I show that
an optimum can leave their productive responsibilities narrow while broadening the
knowledge they possess.\footnote{Section
\ref{sec:institutions} characterizes optimal organizations in a symmetric
benchmark with targeted spending. Online Appendix OD gives an example in
which cross-training both specialist groups is globally optimal.}
The workers give up some productive expertise to better evaluate policies
outside their immediate assignments. Hence, a consequential insight:
the political case for broader knowledge need not be a case for less
specialized jobs.

Reorganizing production comes with an output loss. Could changing political institutions instead achieve these gains for free? One could rightly observe that the market failure is a failure of democracy. In production, specialists entrust the coordinators at the top of the knowledge hierarchy with the right system-wide decisions; in democracy, each vote is worth the same. It is the latter, the argument goes, that should be more hierarchical. Indeed, giving greater weight to the vote of the ``knowledge elite'' would result in better designed policies, which would, ceteris paribus, benefit everyone. Once the redistributive margin is accounted for, however, the same remedy also redirects resources toward
those whose authority increases, exacerbating the allocative inefficiency, potentially leading to an overall welfare loss. Information sharing faces a related obstacle. By disclosing their policy analysis, a better informed group can help other citizens
judge policy, increasing electoral discipline. The same disclosure, however, makes the receivers more aware of their redistributive stakes. When such stakes are large
enough, informed groups prefer to keep others uninformed. Neither greater authority for the knowledgeable nor voluntary
communication is therefore assured to remove the political cost of how
work is organized.

The framework connects the organization of production to the health of the political system. As such, it naturally lends itself to studying the political effects of
technology shocks. A new technology can change the productive value of
different skills and thereby alter the knowledge people acquire for their
jobs and bring to the ballot. AI is a natural and salient example and its effect, I argue, depends on whether
it takes over the narrow tasks performed by specialists or the
synthesis work performed by coordinators. I study the implications of AI adoption for the composition of
human occupational knowledge. Firms allocate a limited AI capacity where it has a comparative advantage relative to humans. Automating specialist work shifts
employment toward system-wide responsibilities, increasing human understanding of complex issues; automating coordination
shifts it toward narrower ones. Even a small
productive gain from automating coordination can induce firms to
choose it when automating specialist work yields higher welfare.

In short, the division of labor is also a division of understanding. Modern economies can be expected to push workers towards the type of knowledge that makes production efficient, not the one that makes them empowered citizens. Policy and technological changes should be evaluated accordingly.

The remainder of the paper is organized as follows. Section \ref{sec:related-literature} discusses the related literature.
Section \ref{sec:production} studies the productively efficient organization.
Section \ref{sec:understanding} links occupational knowledge to policy
understanding and presents descriptive evidence.
Section \ref{sec:political-competition} analyzes electoral competition.
Section \ref{sec:social-optimum} compares the competitive and socially
optimal organizations of production.
Section \ref{sec:institutions} studies targeted spending, political
institutions, and optimal organization.
Section \ref{sec:ai} examines AI adoption.
Section \ref{sec:conclusion} concludes.

\hypertarget{related-literature}{%
\section{Related Literature}\label{related-literature}}
\label{sec:related-literature}

The paper contributes to theories of how coordination sustains
specialization. \citet{becker-murphy-1992}
identify the trade-off between specialization and coordination costs;
\citet{garicano-2000} and
\citet{garicano-rossi-hansberg-2006}
derive hierarchies that refer problems to workers with the expertise
to solve them. \citet{ide-talamas-2025} use such a framework to study
AI's effects on organization and wages. In the present paper, coordination changes what can be delegated:
resolving interdependencies makes delegation to specialists possible.\footnote{
Related models study coordination and adaptation to local information
\citep{dessein-santos-2006}, authority over coordinating decisions
\citep{hart-moore-2005} and the breadth of
managerial expertise \citep{ferreira-sah-2012}.
The approach also connects to decomposability and modular production
\citep*{simon-1962,matouschek-powell-reich-2025}
and to architectural knowledge and systems integration
\citep*{henderson-clark-1990,brusoni-prencipe-pavitt-2001}.}
I solve jointly for the division of production into responsibilities,
the knowledge each job requires, and its staffing. Although several
responsibility trees can minimize costs, the analysis pins down labor
demand at each breadth of expertise.

The closest political-economy connection is to
\citet*{ponzetto-petrova-enikolopov-2020}: both papers treat political knowledge as a byproduct
of economic activity.\footnote{This
mechanism complements research linking work to political preferences
\citep*{iversen-soskice-2001,kitschelt-rehm-2014} and civic
participation (\citealt*{brady-verba-schlozman-1995}; \citealt{selenko-et-al-2025}).
It also identifies a role for knowledge breadth alongside the civic
returns to schooling \citep*{dee-2004,milligan-moretti-oreopoulos-2004,glaeser-ponzetto-shleifer-2007}, connecting with arguments for liberal education
\citep*{nussbaum-2010,zakaria-2015}.}
In their model, producers learn proposed trade policies to guide
investment, and politicians favor the sectors whose beneficiaries are
better informed. Their mechanism concerns which policies voters learn
about; mine concerns the expertise with which voters evaluate policy
consequences. Productive requirements determine which knowledge domains
workers master, while policy requirements determine which domains they
must understand together. The resulting mismatch exposes a political
externality in the division of labor. By deriving both the allocation
of expertise and the productive cost of changing it, the model identifies
when changing responsibilities or cross-training improves welfare.

The electoral mechanism shares the emphasis on informed evaluation in
models of accountability \citep*{barro-1973,ferejohn-1986} and research on media and
responsiveness \citep*{stromberg-2004,snyder-stromberg-2010}.
\citet{matejka-tabellini-2021}
show how voters allocate attention according to their political stakes.
Specialized learning can reduce responsiveness to platforms
\citep{yuksel-2022}, and media competition can shift
information toward divisive issues
\citep{perego-yuksel-2022}.
In my model, firms' demand for skills determines the expertise voters bring to
policy evaluation. Technological shocks can therefore alter electoral
accountability by reallocating knowledge across jobs.

The distribution of this expertise also matters for distributive
politics. Electoral incentives shape the provision of public goods and
targeted benefits \citep{lizzeri-persico-2001}, favoring responsive groups
\citep*{lindbeck-weibull-1987,dixit-londregan-1996}; informational
heterogeneity can consequently affect the allocation of resources
\citep{blumenthal-2026}.
I endogenize this source of influence through productive organization.
Unequal understanding then has an allocative cost, so the social return
to occupational knowledge depends on its distribution as well as its
mean. This creates a motive for cross-training specialists even when
their productive responsibilities remain narrow.

These distributive effects also qualify the political use of expertise.
While scholars have argued that credible advice can guide an otherwise ignorant electorate
\citep{lupia-mccubbins-1998}, it is well known that
conflicting interests can impede information transmission
\citep*{crawford-sobel-1982,agranov-eilat-sonin-2025}.
Here an informed group may withhold even verifiable, costless policy analysis
because informing others strengthens their claim on targeted spending.
\citet{maskin-tirole-2004} study how
accountability affects the allocation of decisions between elected and
insulated officials. I show that distributive concerns can make it
optimal to give better-informed groups less formal political authority.

\hypertarget{production-and-the-allocation-of-responsibility}{%
\section{Production and the Allocation of
Responsibility}\label{production-and-the-allocation-of-responsibility}}

\label{sec:production}

Specialists can work separately only if someone keeps their tasks coherent. At a port, a customs delay affects both berth availability and rail departures; a coordinator who resolves such interdependencies allows clearance, terminal operations, and inland transport to be run by separate specialists. The model describes this division of responsibilities, the knowledge each requires, and the workers who perform them. The firm chooses among such organizations the one that produces a unit of output with the least labor.


\hypertarget{production-systems-and-responsibilities}{%
\subsection{Production Systems and
Responsibilities}\label{production-systems-and-responsibilities}}

The firm operates a constant-returns production system that draws on
\(K\geq2\) knowledge domains. Write
\(\Delta_K\) for the unit simplex.
An exogenous production profile \(q\in\Delta_K\) records the share of
the knowledge required for one unit of output that is drawn from each
domain. At a port, the domains include terminal operations, inland
transport, capacity planning, and environmental regulation. These
requirements interact: freight flows must be reconciled with berth
capacity and environmental rules. Production therefore involves two
kinds of work: executing components and resolving their interfaces.

The production system can be broken down into different responsibilities. The division is represented by a finite rooted tree \(T\),
with root \(0\) and nodes \(v\in T\). Each node is a responsibility
with scale \(\omega_v>0\) and requirement profile
\(p_v\in\Delta_K\):
the scale measures the requirements it covers, and the profile their mix
across domains. Its requirement from domain \(k\) is therefore
\(\omega_vp_{vk}\). Together, these objects define a responsibility
tree \((T,\omega,p)\), whose root \((\omega_0,p_0)=(1,q)\)
covers the whole system. An internal node divides its responsibility
among its children; a terminal node performs it without further division.
Writing \(\operatorname{ch}(v)\) for the children of an internal node
\(v\), the tree is coherent if every division satisfies
\[
\sum_{c\in\operatorname{ch}(v)}\omega_c=\omega_v,
\qquad
\sum_{c\in\operatorname{ch}(v)}\omega_cp_c=\omega_vp_v.
\]

Coherence requires the children's requirements to add up, domain by
domain, to those of the parent. The parent always covers the children. When a freight-flow responsibility is
divided into terminal operations and inland transport, for example, it still covers both activities. Its own work is, however, only resolving the interfaces that make such activities separable; each child receives
its requirements profile and scale, and the corresponding unresolved interfaces.

For a profile \(p\), the interface load per unit of scale is
\(d(p)=1-\lVert p\rVert_2^2\), the probability that two requirements
drawn independently from \(p\) fall in different domains. It is zero
for a single-domain profile and largest for a uniform one, so a more
balanced responsibility carries more interface work: \(\omega d(p)\)
units at scale \(\omega\). When a responsibility is divided, the
parent's own work is the load it removes: its interface load less what
its children inherit. A terminal responsibility instead executes its
components and resolves all the interfaces it retains. The work
performed at node \(v\) is therefore
\[
\begin{aligned}
z_v&=
\begin{cases}
\displaystyle
\omega_vd(p_v)-\sum_{c\in\operatorname{ch}(v)}\omega_cd(p_c),
&\text{if \(v\) is divided},\\[8pt]
\omega_v[1+d(p_v)],
&\text{if \(v\) is terminal},
\end{cases}\\[3pt]
\sum_{v\in T}z_v&=
\underbrace{1}_{\substack{\text{component}\\\text{execution}}}
+
\underbrace{d(q)}_{\substack{\text{interface}\\\text{resolution}}}.
\end{aligned}
\tag{1}\label{eq:work-conservation}
\]
Coherence and the concavity of \(d\) make \(z_v\geq0\). Equation~\eqref{eq:work-conservation}
counts each unit of work once, since interfaces resolved by the parent
are removed from what its children inherit. Every organization
therefore performs the same total work; what differs is how much of it
can be entrusted to narrow specialists.
\citet{brusoni-prencipe-pavitt-2001} document this pattern in aircraft
engine controls, where firms retain knowledge of outsourced components
in order to coordinate specialized suppliers. In the model,
responsibilities spanning several domains carry the work that makes
such specialization possible. The next step is to assign this work to
jobs and determine the labor it requires.

\hypertarget{occupational-profiles-and-processing}{%
\subsection{Positions, Knowledge, and Staffing}\label{occupational-profiles-and-processing}}

The firm staffs the tree by grouping responsibilities into positions.
Formally, it chooses a finite set of positions \(\mathcal H\) and an onto map
\(\iota:T\to\mathcal H\) assigning each responsibility to a position.
The responsibilities in a position form a connected set:
if it contains two nodes, it contains every node on the path between
them. A position is a job type; how many workers it employs depends on
the labor its assigned work requires. A division of \(v\) among its
children is a handoff to child \(c\) when \(\iota(c)\neq\iota(v)\);
otherwise the work on both sides of the division belongs to the same
job.

Workers in position \(h\) share an occupational knowledge profile
\(s_h\in\Delta_K\), which allocates each worker's fixed unit stock of
knowledge across domains. Whereas \(p_v\) describes what knowledge mix a
responsibility requires, \(s_h\) describes what the workers assigned
to it know. Responsibilities grouped into one position are therefore
performed with the same knowledge, even when their requirements differ,
while different positions may use the same profile. In this section, I derive the knowledge profiles firms demand and assume that they are
supplied by the citizens; Section~\ref{sec:social-optimum} shows how competitive wages
induce workers to supply them.

\afterpage{%
    \clearpage
\begin{landscape}
\begin{figure}[p]
\centering
{\footnotesize
\(\omega_v\): responsibility scale\hfill
\(p_v\): requirement profile\hfill
\(s_h\): knowledge profile\par}
\vspace{12pt}
\begin{minipage}[t]{0.45\linewidth}
\centering
{\small\textbf{A. Combined position}\par}
\vspace{8pt}
\includegraphics[width=\linewidth]{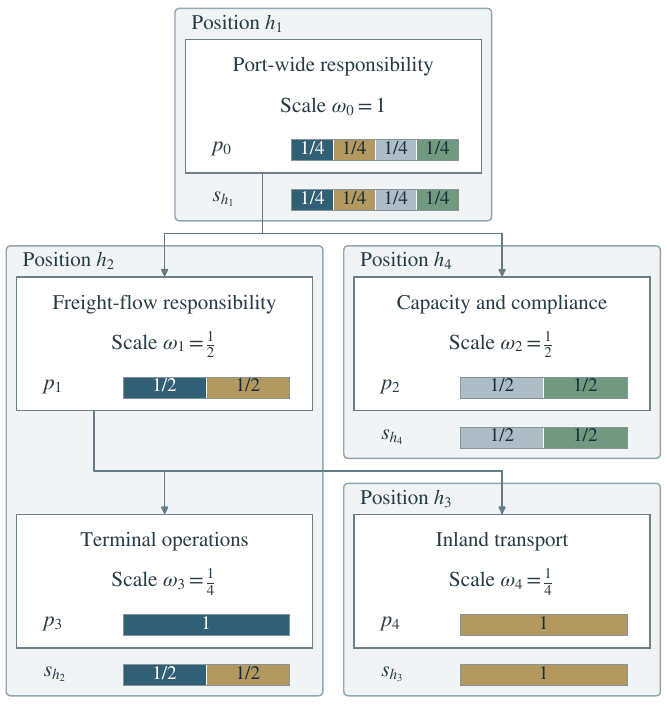}
\end{minipage}\hfill
\begin{minipage}[t]{0.45\linewidth}
\centering
{\small\textbf{B. Separate positions}\par}
\vspace{8pt}
\includegraphics[width=\linewidth]{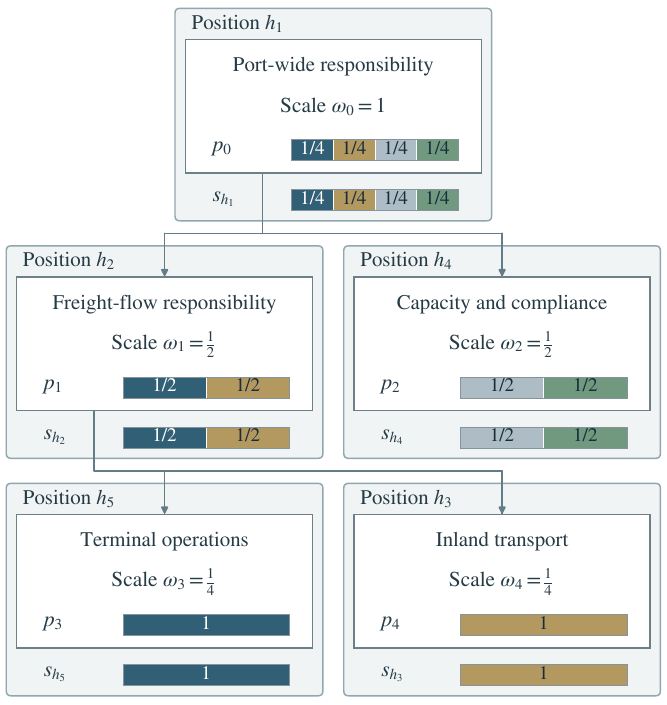}
\end{minipage}
\par\vspace{12pt}
{\footnotesize
\begin{tabular}{@{}ll@{\qquad}ll@{}}
\textcolor[HTML]{315F75}{\rule{8pt}{7pt}} & Terminal operations &
\textcolor[HTML]{B39860}{\rule{8pt}{7pt}} & Inland transport\\[2pt]
\textcolor[HTML]{AEBCC6}{\rule{8pt}{7pt}} & Capacity planning &
\textcolor[HTML]{709980}{\rule{8pt}{7pt}} & Environmental regulation
\end{tabular}\par}
\vspace{12pt}
\caption{A port-wide responsibility and its division into jobs.
White boxes are responsibilities and shaded regions are positions.
Both panels use the same responsibility tree. In A, position \(h_2\)
combines freight coordination and terminal execution; in B, terminal
execution moves to position \(h_5\) with matching knowledge.
Bars show domain shares. Arrows divide a parent's requirements among its
children; those crossing position boundaries are handoffs.}
\label{fig:port-responsibilities}
\end{figure}
\end{landscape}
}

Figure~\ref{fig:port-responsibilities} applies these distinctions to the
port-wide example. The capacity-and-compliance responsibility remains
undivided, performing its components and resolving the interfaces it retains. In panel A,
\(\iota(1)=\iota(3)=h_2\): freight coordination and terminal execution
share workers with profile \(s_{h_2}=(1/2,1/2,0,0)\), although the
latter requires only \(p_3=(1,0,0,0)\). Panel B assigns terminal
execution to a separate position \(h_5\) with \(s_{h_5}=p_3\).
The labor cost of this choice depends on how well workers' knowledge
fits their assignments.

Consider a responsibility with profile \(p\), performed by workers
with profile \(s\). Its scope, \(r(p)=|\{k:p_k>0\}|\), is the number
of domains it requires. Let \(\ell_r\), with
\(1=\ell_1<\cdots<\ell_K\), be the labor intensity of activating and
combining knowledge from \(r\) domains. Productive fit and the labor
required per unit of work are
\[
\beta(s,p)=\min_{k:p_k>0}\frac{s_k}{p_k},
\qquad
a(p,s)=\frac{\ell_{r(p)}}{\beta(s,p)}.
\tag{2}
\]
Fit is bounded by the domain in which the workers' knowledge is scarcest
relative to the responsibility's needs; knowledge in other domains
cannot make up the shortfall. In panel A of
Figure~\ref{fig:port-responsibilities}, the workers in position
\(h_2\) match freight coordination exactly, but their fit for terminal
execution is only \(1/2\). Panel B assigns that work to terminal
specialists, which raises fit to one and halves its labor requirement.
The tree is the same in both panels, the gain comes from
staffing. Even when \(s=p\), work that draws on several domains
takes longer, because the worker must switch between bodies of
knowledge and combine their implications. The increasing \(\ell_r\)
captures this burden, consistent with evidence on the time cost of
switching cognitive rules \citep*{rubinstein-meyer-evans-2001}. It is
also what gives firms a reason to delegate: once coordination makes a
narrower assignment feasible, performing the work at that scope saves
labor, the gain from concentration emphasized by
\citet{becker-murphy-1992}. Component execution and interface
resolution use the same technology.

An organization combines the division of responsibilities, their
assignment to positions, and the knowledge used in each position:
\[
\mathcal A=\bigl((T,\omega,p),\mathcal H,\iota,(s_h)_{h\in\mathcal H}\bigr).
\]
It is feasible if the tree is coherent and rooted at \((1,q)\),
positions are connected, and \(\beta(s_{\iota(v)},p_v)>0\) at every
node; Online Appendix OA shows that the last two restrictions are
without loss. Write \(s_v=s_{\iota(v)}\) for the knowledge profile
used at node \(v\). Its work requires \(z_va(p_v,s_v)\) units of
labor, so the organization's labor requirement per unit of output is
\[
C(\mathcal A)
=\sum_{v\in T}z_v
\frac{\ell_{r(p_v)}}{\beta(s_v,p_v)}.
\]

The workforce is a unit mass of citizens. Each citizen \(i\in[0,1]\)
supplies one unit of labor inelastically, so organization
\(\mathcal A\) produces \(Y(\mathcal A)=1/C(\mathcal A)\) units of
output; each position employs \(Y(\mathcal A)\) times the labor its
assigned nodes require. The tree thus allocates work, knowledge
determines the time it takes, and staffing assigns the workforce
accordingly. I normalize the output price and its conversion into
disposable income so that \(Y\) also measures aggregate disposable
income.\footnote{Section~\ref{sec:political-competition}
  states an alternative fiscal normalization.}

Letting \(\Omega(q)\) denote the set of feasible organizations, the
minimum labor requirement and maximal output for the production system \(q\) are
\[
C^M(q)
=\min_{\mathcal A\in\Omega(q)}\quad C(\mathcal A),
\qquad
Y^M(q)=\frac{1}{C^M(q)}.
\tag{3}
\]

Note that, since \(\beta(s,p)\leq1\), with equality only at \(s=p\), and
positions can be split at no cost, every labor minimizer sets
\(s_v=p_v\) wherever \(z_v>0\). Hence, efficient production requires workers' profiles to be exactly matched to each of their responsibilities.\footnote{Different positive-work responsibilities can still be combined in the same position if their profiles coincide, as pooling such nodes leaves the labor requirement unchanged.} What remains to be chosen is how to
divide production and where to resolve its interfaces. 

Multiple organizations may achieve \(C^M(q)\). As a selection criterion, I minimize the total scale of handoffs between
distinct positions. Let
\(\mathcal M(q)=\{\mathcal A\in\Omega(q):C(\mathcal A)=C^M(q)\}\)
be the set of labor minimizers and \(
E^{\mathrm{ext}}(\mathcal A)
=
\{(v,c):c\in\operatorname{ch}(v),\ \iota(v)\neq\iota(c)\}
\) the set of handoffs between positions in organization \(\mathcal{A}\). The total handoff scale and the selection rule are, respectively,
\[
J(\mathcal A)
\equiv
\sum_{(v,c)\in E^{\mathrm{ext}}(\mathcal A)}\omega_c,
\qquad
\mathcal A^M
\in
\arg\min_{\mathcal A\in\mathcal M(q)}\quad J(\mathcal A).
\]

The tie-break rule identifies minimizers that survive small communication and supervision costs at handoffs (Online Appendix OA). Pruning zero-work nodes and
merging connected nodes assigned to the same position turns a selected
organization into a tree of positions, which I call a \textit{productively
optimal hierarchy}. The next subsection characterizes it.\footnote{Without selection,
  labor minimizers still have single-domain execution, exact matching at
  positive-work responsibilities, and identical workloads and employment
  by scope, labor requirements, and output. Trees and within-scope
  profiles---and policy understanding under general agendas---can differ
  (see the two-domain example in Online Appendix OA). Adjacent-scope
  handoffs, the layer ordering, and post-AI employment composition require
  selection. The planner ranges over \(\Omega(q)\).}

\hypertarget{the-efficient-organization-of-production}{%
\subsection{The Efficient Organization of
Production}\label{the-efficient-organization-of-production}}

Let \(n=r(q)\geq2\). In a productively optimal hierarchy, \(g_r(q)\)
is the interface work performed at scope \(r=2,\ldots,n\) per unit
of output; \(g_1(q)=1\) is component execution.


\begin{theorem}[The efficient organization of production]\label{thm:efficient-organization}
A productively optimal hierarchy exists. Every such hierarchy passes
responsibility only between adjacent scopes, dropping one knowledge
domain at each handoff. Single-domain specialists execute components,
and broader coordinators resolve interfaces. At every positive-work
responsibility, the assigned occupational profile exactly matches the
responsibility profile. The aggregate workloads \(g(q)\), minimum labor
requirement, and output satisfy
\begin{gather}
d(q)=\sum_{r=2}^n g_r(q),
\qquad
0<g_n(q)<\cdots<g_2(q)<g_1(q)=1.
\tag{4}\\[6pt]
C^M(q)=
\underbrace{\ell_1g_1(q)}_{\substack{\text{component}\\\text{execution}}}
+
\underbrace{\sum_{r=2}^n\ell_rg_r(q)}_{\substack{\text{interface}\\\text{resolution}}},
\qquad
Y^M(q)=\frac{1}{C^M(q)}.
\tag{5}
\end{gather}
\end{theorem}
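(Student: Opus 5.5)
The plan is to reduce the labor-minimization problem to a family of sparsification problems, one for each scope threshold. I would then show that a single nested tree solves all of them at once.

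\emph{Step 1: reduce to a cost in work by scope.} As noted just before the theorem, every minimizer has exact matching $s_v=p_v$ at positive-work nodes. So $C(\mathcal A)=\sum_v z_v\ell_{r(p_v)}$. Write $\ell_r=\sum_{j\le r}\delta_j$ with $\delta_1=1$ and $\delta_j=\ell_j-\ell_{j-1}>0$. Then
\[
C=\sum_{j=1}^{n}\delta_j W_{\ge j},\qquad W_{\ge j}=\sum_{v:\,r(p_v)\ge j}z_v .
\]
By \eqref{eq:work-conservation}, $W_{\ge1}=1+d(q)$ for every tree. For $j\ge2$, cut the tree at the frontier of maximal subtrees whose roots have scope $<j$; telescoping $z_v$ above that frontier gives
\[
W_{\ge j}\ \ge\ d(q)-\mathbb E[d(P)] .
\]
Here $P$ is the random profile at the frontier, drawn with probability $\omega_u$. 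It satisfies $\mathbb E[P]=q$ by coherence and has at most $j-1$ domains in its support. Equality holds when no terminal node has scope $\ge j$. Any multi-domain terminal node also adds an extra $\omega_v d(p_v)$ that a further split would push to scope $1$. Hence each $W_{\ge j}$ is bounded below by $d(q)-V_{j-1}(q)$, where
\[
V_{k}(q)=\max\{\mathbb E[d(P)]:\mathbb E P=q,\ |\mathrm{supp}\,P|\le k\}=1-\min\mathbb E\lVert P\rVert_2^2 .
\]
This minimization is the unbiased-sparsification / $k$-support-norm problem cited in the introduction. It has an explicit solution: keep the largest coordinates and spread the tail mass.

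\emph{Step 2: simultaneous attainment (the main obstacle).} The total-cost lower bound $\sum_j\delta_j(d(q)-V_{j-1}(q))$, plus execution, is achieved only if one coherent tree attains every $V_{j-1}$ at the same time. For this I need a nestedness property: the optimal $(k-1)$-sparse decompositions of the atoms of an optimal $k$-sparse decomposition of $q$ must mix back to an optimal $(k-1)$-sparse decomposition of $q$. I would first use the closed form of the optimal sparsifier. The optimal $k$-sparse law places its atoms on supports that drop the smallest active coordinates. Replacing each atom by its own optimal $(k-1)$-sparse law should therefore reproduce the threshold structure of the $(k-1)$ problem for $q$. If a direct verification proves messy, I would instead use a martingale argument. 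The value function $k\mapsto\min\mathbb E\lVert P\rVert^2$ of a Doob-type martingale from $q$ down to vertices is achieved by a path that removes one coordinate at a time (a Beiglböck--Juillet-style construction). Building the tree from $n$ down to $1$ then yields existence of a minimizer. Defining $g_r(q)=V_r(q)-V_{r-1}(q)$ for $r\ge2$ (with $V_1=0$ and $V_n=d(q)$) gives $d(q)=\sum_{r\ge2}g_r$ and the formula \eqref{eq:work-conservation}-based expression (5) for $C^M$.

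\emph{Step 3: structure and inequalities.} Several facts follow from the selection rule together with pruning. Each node with positive work has exactly the scope prescribed by its layer. A handoff that jumps from scope $r$ to scope $<r-1$ can be rerouted through a zero-cost intermediate node assigned to the parent's position, which does not increase $J$. Comparing with the optimality of each $V_k$ shows such jumps cannot occur in a $J$-minimizer. This gives adjacent-scope handoffs, one domain dropped per handoff, and single-domain execution, since terminal nodes have scope $1$ and carry $g_1=1$. Existence of a $J$-minimizer follows because the set of attainable handoff structures is finite up to scale. Finally, positivity and strict monotonicity, $0<g_n<\dots<g_2<1$, would follow from strict concavity of $k\mapsto V_k(q)$ on $\{1,\dots,n\}$, read off from the explicit sparsifier formula. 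Strict increase uses $r(q)=n$, and $g_2<1$ uses $d(q)<1$. I expect this concavity check, together with the nestedness in Step 2, to take the most work.
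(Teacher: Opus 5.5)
Your Steps 1--2 and the workload ordering follow the paper's route. The paper uses cut bounds at the first node of scope at most $r$, summation by parts with $\ell_j-\ell_{j-1}>0$ (which also forces every minimizer to bind every bound, so workloads are unique), and nested attainment. The nestedness you flag does go through by the direct check you propose. Write $F_k=\{i:q_i\geq\theta_k\}$. The identity $\sum_{i\in F_k}q_i+(k-|F_k|)\theta_k=1$ gives $\sum_{i\in\operatorname{supp}p}\min\{1,p_i/\theta_{k-1}\}=k-1$ for every atom $p$ of the optimal $k$-sparse law, so each atom inherits the threshold $\theta_{k-1}$ of $q$. This is exactly the paper's deletion rule $\delta_i^r(S)=1-w_i^r/w_i^{r-1}$. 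The paper carries out your concavity check through $g_r=\int_{r-1}^{r}\theta_q(z)^2\,dz$ with $\theta_q$ strictly decreasing. Your bound $g_2=V_2\leq V_n=d(q)<1$ is a valid shortcut.

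The gap is in Step 3, which carries the theorem's first two sentences.

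\emph{Existence.} A minimum-handoff labor minimizer does not exist merely because handoff structures are ``finite up to scale''. Labor minimizers form an infinite family on which $J$ takes a continuum of values. For example, placing part of a node's responsibility in a same-profile copy staffed by a new position raises $J$ by the copy's scale at no labor cost. So attainment of $\inf J$ needs an argument.

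\emph{Adjacency.} The rerouting argument fails. An internal node has $z_v=\sum_c\omega_c\lVert p_c-p_v\rVert_2^2$, so a zero-work node has all children equal to itself and cannot bridge two scopes.

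Two ingredients are missing.

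First, you need a characterization of \emph{every} frontier minimizer, not just one explicit solution. Each has exactly $k$ active coordinates almost surely. This follows from the conditional-Jensen relaxation: every minimizer must attain its unique optimum, which has $\sum_i\Pr(P_i>0)=k$, while $|\operatorname{supp}P|\leq k$ almost surely. Since every labor minimizer binds every cut bound, no path in any labor minimizer skips a scope.

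Second, you need the handoff count. On each path, the last node of each scope $r\geq2$ has a smaller-scope child and hence positive work, as does the scope-one terminal. Exact matching places these $n$ nodes in distinct positions, and connectedness forbids re-entry. So every path has at least $n-1$ handoffs and $J\geq n-1$, with equality for the nested tree with one position per node. This bound gives existence, and its attainment forces exactly one position per scope on every path. That is what excludes same-scope handoffs, such as the scope-two to scope-two handoffs in the labor-minimizing refinement of Online Appendix OA. It also yields adjacent-scope handoffs once zero-work nodes are pruned and positions contracted.
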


The efficient hierarchy assigns each layer only the coordination work
needed to make narrower responsibilities possible. These workloads
depend only on the production requirements \(q\), even when several
trees are optimal. To characterize them, consider any coherent responsibility
tree and a \emph{cut}: a set of nodes containing exactly one node on
each path from the root to a terminal node. The responsibilities on a
cut cover the system's requirements: the interfaces between them have
been resolved above the cut, and those within each remain. A cut whose
responsibilities span at most \(r\) domains thus shows how much
interface work can be left to workers who know \(r\) or fewer domains.
By coherence, the scales on a cut sum to one and the scale-weighted
average of its profiles is \(q\); selecting a responsibility with
probability equal to its scale therefore gives a random profile
\(P\in\Delta_K\) with \(\mathbb E[P]=q\). For \(1\leq r\leq n\), define the scope-\(r\)
frontier by

\[
\begin{aligned}
\Phi_r(q)
=\min_P\quad
&\mathbb E\lVert P\rVert_2^2\\
\text{subject to}\quad
&\mathbb E[P]=q,\\
&r(P)\leq r\quad\text{almost surely}.
\end{aligned}
\]

Since \(d(P)=1-\lVert P\rVert_2^2\), the quantity \(1-\Phi_r(q)\) is
the greatest interface load that can be left to responsibilities
spanning at most \(r\) domains. The efficient organization attains
every frontier at once: the scope-\(r\) layer resolves
\(g_r(q)=\Phi_{r-1}(q)-\Phi_r(q)\) units of interface work, the
difference between the load that can be left at scope \(r\) and the
load that can be left at scope \(r-1\), and passes the rest down. With
exact matching, this layer requires \(\ell_rg_r(q)\) units of labor
per unit of output, so its employment share,
\(m_r(q)=\ell_rg_r(q)/C^M(q)\), is also its share of the electorate.

\begin{corollary}[Narrower layers employ more workers]\label{cor:employment-hierarchy}
If \(q\) is uniform over all \(K\) knowledge domains, every scope-\(r\) profile is uniform over its support. The scope
workloads are
\vspace{-1em}
\[
g_1=1,
\qquad
g_r=\frac{1}{r(r-1)}
\quad\text{for }r=2,\ldots,K.
\tag{6}\label{eq:uniform-scope-workloads}
\]
For any production profile \(q\), if the labor intensities are affine,
\(\ell_r=1+\kappa(r-1)\) with \(0<\kappa<1\), the corresponding
employment shares satisfy

\[
m_1>m_2>\cdots>m_n>0.
\]
\end{corollary}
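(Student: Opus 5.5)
The corollary has two parts, and I would prove them separately using the frontier characterization behind Theorem~\ref{thm:efficient-organization}: the efficient hierarchy resolves $g_r(q)=\Phi_{r-1}(q)-\Phi_r(q)$ at scope $r$, with $\Phi_1(q)=1$. The first part follows from the fact that single-domain profiles have $\lVert P\rVert_2^2=1$.

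\emph{Uniform case.} A profile with support of size at most $r$ satisfies $\lVert p\rVert_2^2\ge 1/r$ by Cauchy--Schwarz. Equality holds only if $p$ is uniform on exactly $r$ domains. Hence $\Phi_r(q)\ge 1/r$ for every $q$. When $q$ is uniform on $K$ domains, this bound is attained. Take $P$ uniform over the $\binom{K}{r}$ subsets of size $r$, and let $P$ be uniform on the drawn subset. By symmetry $\mathbb E[P]=q$. So $\Phi_r=1/r$, and
\[
g_r=\frac{1}{r-1}-\frac1r=\frac{1}{r(r-1)},\qquad g_1=1.
\]
Since the efficient organization attains every frontier at once, the cut of responsibilities of scope at most $r$ attains $\Phi_r$. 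Then $\lVert P\rVert_2^2=1/r$ almost surely, which by the equality case forces each scope-$r$ profile to be uniform on its support. This gives the first claim.

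\emph{Employment shares.} Since $m_r\propto \ell_r g_r$, it suffices to show $\ell_{r+1}g_{r+1}<\ell_r g_r$ for $1\le r<n$.

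For $r=1$, the bound $\Phi_2\ge 1/2$ gives $g_2\le 1/2$. Then $\ell_2g_2\le(1+\kappa)/2<1=\ell_1g_1$.

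For $r\ge 2$, note that $\ell_r/\ell_{r+1}=(1+\kappa(r-1))/(1+\kappa r)$ is decreasing in $\kappa$. It therefore exceeds its value at $\kappa=1$, namely $r/(r+1)$. So it suffices to prove the ratio bound
\[
g_{r+1}(q)\le \frac{r}{r+1}\,g_r(q)\qquad\text{for all } q .
\]
The uniform case gives $g_{r+1}/g_r=(r-1)/(r+1)$, which suggests real slack.

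To prove the ratio bound I would use the closed form of the frontier from the $k$-support norm literature (\citealt*{argyriou-foygel-srebro-2012}). Sort $q_1\ge\cdots\ge q_n$. Then
\[
\Phi_r(q)=\sum_{i\le r-t-1}q_i^2+\frac{\bigl(\sum_{i\ge r-t}q_i\bigr)^2}{t+1},
\]
where $t\in\{0,\ldots,r-1\}$ is the unique index satisfying the standard threshold condition. With this formula, $g_r$ and $g_{r+1}$ become explicit. I would compare them case by case, according to whether the threshold index $t$ stays the same or shifts between $r-1$, $r$ and $r+1$. In each case the target is the inequality $(r+1)(\Phi_r-\Phi_{r+1})\le r(\Phi_{r-1}-\Phi_r)$, which is a discrete convexity property of $r\mapsto\Phi_r$ weighted by $r$.

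This case analysis is the main obstacle. If it becomes unwieldy, I would try a direct splitting argument instead. Take an optimal scope-$(r+1)$ distribution. Decompose each profile of support $r+1$ into $r+1$ profiles of support $r$, one for each dropped domain, with the same mean. Bound the resulting increase in $\mathbb E\lVert P\rVert_2^2$ from below by $\tfrac{r+1}{r}$ times the gain from passing from scope $r$ to $r+1$. Then use the strict ordering $g_{n}<\cdots<g_2$ from Theorem~\ref{thm:efficient-organization} to make the inequalities strict. Finally, positivity $m_n>0$ follows from $g_n>0$ in (4).
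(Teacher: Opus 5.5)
Your uniform-case argument (Cauchy--Schwarz with its equality case, applied to the frontier cuts) is fine, and so are your reductions. You have $\ell_2g_2\le(1+\kappa)/2<1$, and since $\ell_r/\ell_{r+1}>r/(r+1)$ for $\kappa<1$, it does suffice to show $g_{r+1}(q)\le\frac{r}{r+1}\,g_r(q)$ for $2\le r<n$. Strictness then comes from $\kappa<1$ and $g_r>0$, not from the ordering $g_n<\cdots<g_2$.

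\textbf{The gap.} That ratio bound is the whole content of the employment claim for general $q$, and you leave it unproved.

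The closed-form route is only announced. It is also not a short case split on whether the threshold index ``stays or shifts''. The set of coordinates saturated at the threshold can gain several elements between consecutive scopes: for $q=(.3,.3,.3,.05,.05)$ it is empty at $r=3$ and has three elements at $r=4$. You would therefore need to control jumps of arbitrary size.

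The splitting fallback targets the wrong quantity. Splitting an optimal scope-$(r+1)$ law into scope-$r$ profiles bounds $\Phi_r$ from above, so it only gives an upper bound on $g_{r+1}=\Phi_r-\Phi_{r+1}$. Your inequality also needs a \emph{lower} bound on $g_r=\Phi_{r-1}-\Phi_r$, and nothing in that construction provides one. Comparing the increase with ``$\frac{r+1}{r}$ times the gain from $r$ to $r+1$'' only relates $g_{r+1}$ to itself.

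\textbf{The missing idea} is to interpolate the frontier between integer scopes. Relax the inclusion budget to a real $z\in[1,n]$:
\[
V_q(z)=\min\Bigl\{\sum_iq_i^2/\pi_i:\ q_i\le\pi_i\le1,\ \sum_i\pi_i=z\Bigr\},
\]
so that $V_q(r)=\Phi_r(q)$ at integers. On each interval where the saturated set $\{i:q_i\ge\theta_q(z)\}$ has fixed size $c$ and the remaining mass is $S$, the threshold is $\theta_q(z)=S/(z-c)$ and $V_q'(z)=-\theta_q(z)^2$. Since $\theta_q$ is continuous across the finitely many breakpoints, this gives $g_r=\int_{r-1}^{r}\theta_q(z)^2\,dz$.

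Moreover, $z\theta_q(z)=Sz/(z-c)$ has derivative $-Sc/(z-c)^2\le0$. So $H(z)=[z\theta_q(z)]^2$ is nonincreasing on $[1,n]$, and therefore
\[
g_r\ge H(r)\int_{r-1}^{r}\frac{dz}{z^2}=\frac{H(r)}{r(r-1)},
\qquad
g_{r+1}\le H(r)\int_{r}^{r+1}\frac{dz}{z^2}=\frac{H(r)}{r(r+1)}.
\]
Hence $g_{r+1}/g_r\le(r-1)/(r+1)$. This is stronger than your target and holds no matter how many breakpoints fall in $[r-1,r+1]$. The uniform profile has $c=0$ and $H\equiv1$ on $[1,K]$, which is why your uniform computation attains exactly this ratio.
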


With three equally important domains and \(\kappa=1/2\), workloads
\((1,1/2,1/6)\) at scopes \(1,2,3\) require labor
\((1,3/4,1/3)\) per unit of output. Employment shares are therefore
\(48\%\), \(36\%\), and \(16\%\): specialists form the largest layer. In general cases, whether narrower layers employ more workers therefore depends on how
quickly labor intensity rises with breadth.

\hypertarget{occupational-knowledge-and-policy-understanding}{%
\section{Occupational Knowledge and Policy
Understanding}\label{occupational-knowledge-and-policy-understanding}}

\label{sec:understanding}

Citizens can draw on the knowledge they acquire at work to evaluate
policies, but policies differ in the knowledge they call for. An
inspection rule can be judged with customs expertise alone; its
effects on rail arrivals and berth congestion can be traced only by
someone who knows how port operations connect, as a port coordinator
does. Many policies are of the second kind: integrated models of
climate and pandemic policy trace consequences across energy,
emissions, health, behavior, employment, and consumption, knowledge
that no single occupation supplies
\citep*{nordhaus-2013,eichenbaum-rebelo-trabandt-2021}.

To capture these requirements, represent each policy issue by a
profile \(u\in\Delta_K\), where \(u_k\) is the share of the required
knowledge that comes from domain \(k\). A domain-specific issue draws
on one domain, whereas a cross-domain issue draws on several at once. The policy agenda is an exogenous probability measure \(\nu\) over issue profiles. It
determines which combinations of expertise matter for evaluating
policy, hence the political value of a worker's occupational
knowledge.

For a worker with profile \(s\), understanding of issue \(u\) is
\(\beta(s,u)/\ell_{r(u)}\): citizens evaluate policy with the same
cognitive technology they use for their responsibilities at work.
Expertise missing in a required domain cannot be replaced by expertise
in another, and combining more domains is more burdensome. Averaging
over the agenda gives the worker's policy understanding. The average policy understanding in society depends on the employment share of different positions. Under organization
\(\mathcal A\), position \(h\) has employment share

\[
m_h(\mathcal A)
=
\frac{\displaystyle\sum_{v:\iota(v)=h}z_va(p_v,s_h)}{C(\mathcal A)}.
\]

The numerator gives its labor requirement per unit of output; dividing
by total labor gives its share of the electorate. Individual and average
understanding are therefore

\[
\begin{aligned}
b(s;\nu)&=\int\frac{\beta(s,u)}{\ell_{r(u)}}\,d\nu(u),
& b_i&=b(s_i;\nu),\\
B(\mathcal A;\nu)
&=\int_0^1b_i\,di
=\sum_{h\in\mathcal H}m_h(\mathcal A)b(s_h;\nu).
\end{aligned}
\tag{7}
\]

Both \(b_i\) and \(B(\mathcal A;\nu)\) lie in \([0,1]\). When the
agenda is fixed, I write \(b(s)=b(s;\nu)\) and
\(B(\mathcal A)=B(\mathcal A;\nu)\); when the organization is also
fixed, I write \(B\). In a productively optimal hierarchy, let
\(m_r=m_r(q)\) be the employment share at scope \(r\) and \(b_r\)
the average understanding of its workers, so
\(B=\sum_{r=1}^nm_rb_r\).

Delegation separates knowledge that a coordinator holds together. Under
exact matching and coherence, a coordinator's profile
\(p=\sum_j\alpha_jp_j\) averages her children's profiles \(p_j\) with
responsibility shares \(\alpha_j\). Concavity of understanding then
gives \(b(p)\geq\sum_j\alpha_jb(p_j)\): dividing complementary
knowledge weakly reduces average understanding. This holds issue by
issue for every coherent division, selected or not. In a productively optimal hierarchy, work within each scope is
proportional to scale and matched workers face the same labor
intensity, so responsibility shares are employment shares within a
layer, and the inequality yields the following result.

\begin{theorem}[The division of understanding]\label{thm:division-understanding}
In every productively optimal hierarchy, average understanding of any
given policy issue weakly increases with the scope of the layer. Hence,
for every policy agenda,
\[
b_1\leq b_2\leq\cdots\leq b_n.
\tag{8}\label{eq:scope-understanding-order}
\]
On a domain-specific issue in domain \(k\), every layer has the same
average understanding, \(q_k\).
\end{theorem}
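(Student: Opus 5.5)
The plan is to fix an issue \(u\) and work in a productively optimal hierarchy from Theorem~\ref{thm:efficient-organization}. Every positive-work responsibility at scope \(r\) is staffed by workers with \(s_v=p_v\), and handoffs connect only adjacent scopes, each dropping one domain. Write \(f(p)=\beta(p,u)/\ell_{r(u)}\) for the understanding of issue \(u\) by a worker with profile \(p\). The first step is to note that \(f\) is concave and positively homogeneous of degree one on the cone. It is a minimum of linear functions \(p_k/u_k\) over \(k\) with \(u_k>0\), so it is superadditive: \(f(\sum_j\alpha_jp_j)\ge\sum_j\alpha_jf(p_j)\). Integrating over \(\nu\) then gives the agenda version, so it suffices to prove the layer ordering issue by issue.

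Next I would express each layer's average understanding as a scale-weighted average over its responsibilities. Within a layer, work is proportional to scale and the labor intensity is the common \(\ell_r\). So the employment weight of a node \(v\) at scope \(r\) is \(\omega_v\) times a layer constant. I would check this against the frontier characterization: the scope-\(r\) layer's work at node \(v\) is \(\omega_v(d(p_v)-\mathbb E[d(P_c)\mid v])\), and the claim in the text preceding the theorem is that this is proportional to \(\omega_v\) in a selected hierarchy. The text grants this, so I would cite it rather than rederive it.

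Then I would define the scope-\(r\) cut \(\mathcal C_r\): the set of positive-work scope-\(r\) nodes, together with a pass-through convention for zero-work nodes. I would show that the children of \(\mathcal C_r\)'s nodes, aggregated, form \(\mathcal C_{r-1}\) with matching scale weights. This follows from adjacency of handoffs and coherence, and it is the step I expect to be the main obstacle. The difficulty is that a scope-\(r\) node may have descendants that reach scope \(r-1\) through several zero-work intermediate nodes, or may retain scale at scope \(r\) as a terminal. I would handle this by pruning zero-work nodes, using the fact that pruning preserves coherence because scale and profile sums telescope. After pruning, each scope-\(r\) node's children all have scope exactly \(r-1\), except at scope \(1\), where nodes are terminal executors. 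Coherence then gives \(\omega_vp_v=\sum_c\omega_cp_c\). Summing concavity over \(v\in\mathcal C_r\) and normalizing by total scale (each cut has total scale one) yields \(b_r\ge b_{r-1}\) for issue \(u\).

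Finally, for a domain-specific issue \(u=e_k\), \(f(p)=p_k\) is linear, so the inequality holds with equality. Each cut has scale-weighted mean profile \(q\), so every layer's average understanding equals \(q_k\). Averaging over \(\nu\) gives (8).
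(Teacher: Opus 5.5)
Your proof is correct and is essentially the paper's argument. The paper samples a root-to-terminal path with child probabilities $\omega_c/\omega_v$, so that the layer profiles satisfy $P_r=\mathbb E[P_{r-1}\mid P_r]$, and then applies conditional Jensen to the concave map $\beta(\cdot,u)/\ell_{r(u)}$, with scale weights equal to within-layer employment weights; this is the probabilistic form of your cut-by-cut summation of concavity. The pruning obstacle you anticipate is already settled by Theorem~\ref{thm:efficient-organization}: the selected hierarchy has positive work at every node, terminals only at scope one, and exactly one node of each scope on every path, so each scope-$r$ layer is a cut whose children form the scope-$(r-1)$ layer.
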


Broader layers therefore hold no more expertise, on average, about any
single domain; their advantage lies in combining domains.\footnote{Online
  Appendix OA extends the layer ordering to any issue-understanding
  measure concave in occupational knowledge.} The next corollary isolates this advantage
with an agenda that mixes domain-specific issues with the cross-domain
issue.

\begin{corollary}[The coordinator's premium]\label{cor:coordinator-premium}
Suppose a matched coordinator with profile \(p\) spanning at least two
domains divides her responsibility among matched children with profiles
\(p_j\) and shares \(\alpha_j\), so that \(p=\sum_j\alpha_jp_j\).
Let the agenda place weight \(1-\lambda\) on domain-specific
issues drawn with probabilities \(p\) and weight \(\lambda\in[0,1]\)
on the cross-domain issue with profile \(p\). The coordinator's
understanding exceeds the share-weighted average of her children's by
\[
\frac{\lambda}{\ell_{r(p)}}
\left[
1-\sum_j\alpha_j\beta(p_j,p)
\right].
\tag{9}\label{eq:coordinator-premium}
\]
For \(\lambda>0\), it is strictly positive if and only if some
positive-share child's profile differs from \(p\).
\end{corollary}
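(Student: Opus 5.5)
Compute both sides issue by issue and subtract. The agenda has two parts: domain-specific issues \(e_k\) drawn with probability \(p_k\) (total weight \(1-\lambda\)), and the single cross-domain issue \(p\) (weight \(\lambda\)). By linearity of (7) in \(\nu\), the coordinator's premium splits into a domain-specific part and a cross-domain part, which I treat separately.

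\textbf{Domain-specific issues.} For \(u=e_k\) we have \(r(u)=1\), \(\ell_1=1\) and \(\beta(s,e_k)=s_k\). Hence the coordinator's understanding of \(e_k\) is \(p_k\), and the children's share-weighted average is \(\sum_j\alpha_jp_{jk}=p_k\) by coherence, \(p=\sum_j\alpha_jp_j\). This matches the last claim of Theorem~\ref{thm:division-understanding}, and the domain-specific part of the premium is exactly zero for every \(k\).

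\textbf{The cross-domain issue.} For \(u=p\), the coordinator is matched, so \(\beta(p,p)=\min_{k:p_k>0}p_k/p_k=1\) and her understanding is \(1/\ell_{r(p)}\). Each child has understanding \(\beta(p_j,p)/\ell_{r(p)}\), where the labor intensity is the issue's \(\ell_{r(p)}\) and does not depend on the child. Weighting by \(\lambda\) and subtracting gives exactly (9).

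\textbf{Strict positivity.} This is the only step needing care. Note that \(\beta(p_j,p)\le 1\) always, because \(\sum_{k:p_k>0}p_{jk}\le 1=\sum_k p_k\) forces some ratio \(p_{jk}/p_k\le1\). Equality \(\beta(p_j,p)=1\) requires \(p_{jk}\ge p_k\) for all \(k\) in the support of \(p\). Summing these inequalities gives \(\sum_{k\in\operatorname{supp}p}p_{jk}\ge1\), so all of the child's mass lies on \(\operatorname{supp}p\) and every inequality binds, i.e.\ \(p_j=p\). Therefore \(1-\sum_j\alpha_j\beta(p_j,p)\ge0\), with equality iff every positive-share child has \(p_j=p\). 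For \(\lambda>0\) this yields the stated iff, using \(\sum_j\alpha_j=1\) so the bracket equals \(\sum_j\alpha_j[1-\beta(p_j,p)]\). The assumption that \(p\) spans at least two domains is not needed in the argument; it only makes the setting nontrivial.
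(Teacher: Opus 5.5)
Your proposal is correct and follows the paper's proof: the agenda is split linearly, the domain-specific issues cancel by coherence (\(\sum_j\alpha_jp_{jk}=p_k\)), and the cross-domain issue \(u=p\) alone yields (9). The strict-positivity step then rests on \(\beta(p_j,p)\le1\) with equality iff \(p_j=p\); the paper cites its weighted-average bound \(\beta(s,p)\le\sum_{k:p_k>0}s_k\le1\) for this, while you prove the equality case directly, which amounts to the same thing.
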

The premium comes entirely from the cross-domain issue: the coordinator
holds knowledge in the required proportions, whereas delegation
preserves average expertise in each domain but splits it across
workers. Importantly, coordinators do not understand
every issue better. With \(p=(1/2,1/2)\), each specialist understands
an issue confined to her own domain best, and the coordinator is better
placed on one that requires both.

I use survey data to test two implications of the theory: workers should
know more about policies close to their occupational expertise, and
those with coordinating responsibilities should know more about
system-level questions. For the first, the Dutch LISS panel asks
respondents about parental leave and family policies. I relate
correct answers to these questions to the share of their occupational knowledge (from
O*NET) that lies in the domains each question draws on. I include respondent and
question fixed effects, so the comparison is whether the same person
does better on questions closer to her expertise. A
one-standard-deviation increase in this share is associated with 1.63
percentage points greater accuracy (\(p=.006\)), against a mean of 49 percent.\footnote{
  The outcome covers 26 family-policy components. The regressor is scaled
  by its standard deviation after removing respondent and component
  fixed effects.}

For the second, the ECB Consumer Expectations Survey records the duties
workers hold and tests their knowledge of inflation and the ECB's role.
In the pooled 2022--2025 waves, each of two coordinating duties a
worker reports, supervising a team and contributing to investment or
strategy decisions, is associated with 2.57 percentage points greater
accuracy on its eight questions, controlling for age, gender,
education, income, financial literacy, job characteristics, other
workplace duties, and country-by-year effects
(\(p<.001\)).\footnote{Job characteristics include employment type,
  industry, employer sector, and firm size; the other duties, price and
  wage setting and hiring, have small and imprecise coefficients.}
Within person, comparing years in which the same worker reports
different duties, the estimate is 1.30 points (\(p=.007\)). Neither
association is causal, but both run in the direction the theory
predicts. Online Appendix OF reports the data and specifications. 

I now turn to the link between policy understanding and democratic accountability.

\hypertarget{policy-understanding-and-electoral-accountability}{%
\section{Policy Understanding and Electoral
Accountability}\label{policy-understanding-and-electoral-accountability}}

\label{sec:political-competition}

The organization of production hands politics two things: the output
that funds policy and the knowledge with which voters judge it. To
connect them, consider an election in which two office-seeking
candidates \(j\in\{1,2\}\) choose policy effort, and the winner
implements her policy. Given \(\mathcal A\in\Omega(q)\), write
\(Y\equiv Y(\mathcal A)=\int_0^1y_i\,di=1/C(\mathcal A)\) for
aggregate personal income and \(B\equiv B(\mathcal A)=\int_0^1b_i\,di\)
for mean understanding. In this section, citizens' incomes \(y_i\)
and knowledge are taken as given.

\hypertarget{voter-evaluation-and-candidate-effort}{%
\subsection{Voter Evaluation and Candidate
Effort}\label{voter-evaluation-and-candidate-effort}}

Candidate \(j\) chooses effort \(e_j\geq0\), one choice for the whole
agenda, which determines how well she designs her policy platform.
The resources effectively available for public services rise with
both output and this effort: a fixed tax rate makes the public budget
proportional to \(Y\), with a factor I normalize to one,\footnote{At
  a fixed tax rate \(\tau\), the public budget is \([\tau/(1-\tau)]Y\),
  where \(Y\) is aggregate disposable income. I absorb this constant
  ratio into the units of \(e_j\).} and higher policy quality makes
each unit of that budget deliver more. If she wins, effective
public-service provision and voter \(i\)'s utility are
\[
R_j=Ye_j,
\qquad
u_i(j)=y_i+\log R_j,
\qquad
\int_0^1y_i\,di=Y.
\tag{10}\label{eq:policy-primitives}
\]
At the port, \(Y\) is the funding for a modernization and \(e_j\) the
quality of its planning, procurement, and implementation. This is the
common-interest environment: every citizen receives the same benefit
from better policy. Online Appendix OB allows a general
policy technology \(G(Y,e)\), including nonfiscal policies.

Citizens agree about desirable outcomes but may misjudge the platform's implications. Voter \(i\) perceives the payoff difference between the candidates' platform with an error of scale \(1/b_i\): the better she understands the
policy, the more accurate she is. Let
\(\eta_i\sim\mathrm{Logistic}(0,1)\). For \(b_i,e_1,e_2>0\), her perceived payoff difference and vote probabilities
are
\[
\begin{aligned}
\widetilde\Delta_i
&=u_i(1)-u_i(2)+\frac{\eta_i}{b_i}
=\log\frac{R_1}{R_2}+\frac{\eta_i}{b_i}
=\log\frac{e_1}{e_2}+\frac{\eta_i}{b_i},\\
\pi_{i1}
&=\Pr(\widetilde\Delta_i\geq0)
=\frac{e_1^{b_i}}{e_1^{b_i}+e_2^{b_i}},
\qquad \pi_{i2}=1-\pi_{i1}.
\end{aligned}
\tag{11}\label{eq:vote-probability}
\]
In the \(b_i=0\) limit she supports each candidate with probability
one half.\footnote{Appendix~\ref{app:proofs-politics} states the zero-effort
conventions.} Personal income and the size of the budget are common to
the two candidates and cancel from the comparison, although both enter
welfare.

Each candidate values office and bears a private effort cost \(c(e)\),
which she incurs whether or not she wins. A common popularity shock,
uniform over a range covering every feasible vote margin, makes her
probability of winning affine in her expected vote share; I normalize
its slope to one. With \(e_{-j}\) the opponent's effort and
\(\pi_{ij}(e_j,e_{-j})\) the support probability in
equation~\eqref{eq:vote-probability}, candidate \(j\) chooses
\[
\max_{e_j\geq0}\quad
\left\{
\int_0^1\pi_{ij}(e_j,e_{-j})\,di-c(e_j)
\right\}.
\tag{12}\label{eq:candidate-problem}
\]
The cost function is \(C^1\) at zero and \(C^2\) for positive effort,
with \(c(0)=c'(0)=0\) and \(c''(e)>0\) for \(e>0\), and effort uses no
productive labor or public-service resources. Online Appendix OB
derives the winning probability from the shock and discusses
extensions.
\hypertarget{equilibrium-policy-quality}{%
\subsection{Equilibrium Policy
Quality}\label{equilibrium-policy-quality}}

A voter's understanding makes her vote more responsive to improvements in
either candidate's policy. A
better-informed electorate therefore pushes candidates harder to
compete on policy, and average understanding \(B\) measures that
pressure.

\begin{proposition}[Understanding and policy effort]\label{thm:political-equilibrium}
The electoral game has a unique pure-strategy equilibrium. It is
symmetric, and the distribution of understanding affects equilibrium
effort only through its mean \(B\). The common effort is zero at
\(B=0\). For \(B>0\), it is positive, strictly increasing in \(B\),
and satisfies

\[
e^*(B)c'\!\left(e^*(B)\right)=\frac{B}{4}.
\tag{13}\label{eq:effort-condition}
\]

\end{proposition}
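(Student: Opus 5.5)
The plan is to work directly with candidate $j$'s objective $U_j(e_j,e_{-j})=\int_0^1\pi_{ij}\,di-c(e_j)$, where $\pi_{ij}=e_j^{b_i}/(e_j^{b_i}+e_{-j}^{b_i})$. Write $x=\log(e_j/e_{-j})$. Then $\pi_{ij}=\Lambda(b_ix)$ with $\Lambda$ the logistic cdf, and
\[
\frac{\partial \pi_{ij}}{\partial e_j}=\frac{b_i}{e_j}\Lambda(b_ix)\bigl[1-\Lambda(b_ix)\bigr].
\]
At a symmetric profile $e_j=e_{-j}=e>0$ we have $x=0$, so this derivative equals $b_i/(4e)$. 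Integrating over voters, the first-order condition becomes $B/(4e)=c'(e)$, which is \eqref{eq:effort-condition}. In particular, only the mean $B$ enters.

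\textbf{Existence and uniqueness of the symmetric solution.} The function $e\,c'(e)$ is $0$ at $0$, continuous, and strictly increasing for $e>0$, since its derivative is $c'(e)+e c''(e)>0$. It also grows without bound: $c'$ is increasing and positive beyond any $e_0>0$, so $e c'(e)\ge e\,c'(e_0)\to\infty$. Hence for each $B>0$ there is a unique $e^*(B)>0$ solving \eqref{eq:effort-condition}, and it is strictly increasing in $B$ by the inverse function argument.

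\textbf{Equilibrium verification.} I need to show that $e^*$ is a global best response to $e^*$. Fix $e_{-j}=e^*$ and pass to the log variable $t=\log e_j$. The vote-share term is $V(t)=\int\Lambda(b_i(t-t^*))\,di$. Because $\Lambda$ is convex below $0$ and concave above, $V$ is S-shaped in $t$. The marginal benefit in levels is $V'(t)/e_j$, and the marginal cost is $c'(e_j)$, which is increasing. For the FOC I will compare $e\,c'(e)$, which is strictly increasing, with $V'(\log e)=\int b_i\Lambda'(b_i(\log e-t^*))\,di$. Since $\Lambda'$ is maximized at $0$, $V'(\log e)\le B/4=e^*c'(e^*)$, with equality only at $e=e^*$. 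At any other $e$, the FOC would require $e c'(e)=V'\le B/4$, so $e\le e^*$.

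The remaining task is to rule out other critical points below $e^*$ that give a higher payoff. The cleanest approach is to compare payoffs directly. Let $\phi(e)=V'(\log e)/e-c'(e)$. Write it as $[V'(\log e)-e c'(e)]/e$. For $e>e^*$, the bracket is negative because $V'\le B/4<e c'(e)$. For $e<e^*$, I would show the bracket is positive, i.e. $e c'(e)<V'(\log e)$. This is where the main obstacle lies, since $V'$ decays as $e\to0$. The fallback is to rule out profitable downward deviations by a payoff comparison: $U_j(e^*)-U_j(e)=[V(t^*)-V(t)]-[c(e^*)-c(e)]$. Using $V(t^*)-V(t)\ge$ the tangent-type bound from concavity of $\Lambda$ on the relevant side, or the symmetry $\Lambda(-y)=1-\Lambda(y)$, I would try to bound this against $\int_e^{e^*}c'$. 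If the general case resists, I would note that the paper's Appendix may impose the convention at $e\to0$. I flag this global best-response step as the key difficulty.

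\textbf{Asymmetric equilibria.} To exclude them, I would subtract the two FOCs. Interior FOCs give $e_1c'(e_1)=\int b_i\Lambda(b_ix)(1-\Lambda(b_ix))\,di=e_2c'(e_2)$, because $\Lambda(y)(1-\Lambda(y))$ is symmetric in $y$, so the right-hand sides coincide for $x$ and $-x$. Strict monotonicity of $e\,c'(e)$ then forces $e_1=e_2$. Corner profiles with $e_j=0$ are ruled out for $B>0$, because a small positive deviation wins a share jump (with $c'(0)=0$) under the stated zero-effort conventions.

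For $B=0$, every $\pi_{ij}=1/2$, so the vote share is constant and the unique optimum is $e=0$.
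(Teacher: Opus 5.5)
\textbf{Verdict: there is a genuine gap.} Your first-order condition, the symmetric equation \(e^*c'(e^*)=B/4\) with its unique, increasing solution, and the symmetry of interior equilibria via the evenness of \(\Lambda(y)[1-\Lambda(y)]\) all match the paper. The gap is the step you flag yourself: you never show that \(e^*\) is a global best response to \(e^*\). Without that you do not have existence.

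Working in \(t=\log e\), where support is S-shaped, hides the fact that closes the step. Because \(b_i\in(0,1]\), each voter's support is concave in effort measured in levels. The map \(e\mapsto e^{b}\) is concave, and \(z\mapsto z/(z+a)\) is increasing and concave for \(a=\bar e^{\,b}>0\). Directly,
\[
\frac{\partial^2\pi}{\partial e^2}=\frac{b\,e^{b-2}\bar e^{\,b}\bigl[(b-1)\bar e^{\,b}-(b+1)e^b\bigr]}{(e^b+\bar e^{\,b})^3}\leq0 .
\]
Aggregate support is therefore concave in \(e_j\), and the objective is strictly concave. Your \(\phi(e)=V'(\log e)/e-c'(e)\) is then strictly decreasing with \(\phi(e^*)=0\), so \(\phi>0\) on \((0,e^*)\). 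This is exactly the inequality \(ec'(e)<V'(\log e)\) you could not establish; the decay of \(V'\) near zero is irrelevant.

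To exclude the deviation to \(e=0\), note that against positive opponent effort each informed voter's support tends to zero as \(e\downarrow0\), matching the zero-effort convention. Aggregate support is therefore continuous at zero (dominated convergence), and concavity extends to \([0,\infty)\). The restriction \(b_i\leq1\) is what makes this work; no convention at \(e=0\) can replace it.

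Your corner argument also needs repair. A ``share jump'' from a small positive deviation exists only at \((0,0)\). Against opponent effort \(\bar e>0\), support \(e^{b_i}/(e^{b_i}+\bar e^{\,b_i})\) rises continuously from zero, so there is no jump. The paper handles this case in two ways:
\begin{enumerate}
\item It rules out the profile \((0,\bar e)\): the positive-effort candidate can lower her effort, keeping it positive, without losing any support.
\item It shows zero is not a best response to \(\bar e>0\). Support rises by at least \(\mu_+e/(1+\max\{1,\bar e\})\), where \(\mu_+\) is the mass of voters with \(b_i>0\), while \(c(e)=o(e)\) because \(c'(0)=0\).
\end{enumerate}
With concavity in levels and this corrected treatment of corners, your remaining steps complete the proof along the paper's lines.
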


Better-informed voters give candidates a stronger reason to improve
policy. When both candidates choose \(e>0\), increasing one candidate's
effort raises voter \(i\)'s support probability at rate \(b_i/(4e)\).
The candidate therefore equates the aggregate electoral return
\(B/(4e)\) to her marginal cost. The full distribution of understanding
can matter away from a tie, but its mean becomes a sufficient statistic in equilibrium.\footnote{Here better understanding strengthens incentives to improve
a common policy outcome; the interaction with political selection can
give different welfare effects \citep{ashworth-bueno-de-mesquita-2014}.}
A related empirical link appears in
\citet{snyder-stromberg-2010}: weaker local press coverage reduces voter
knowledge, and less-covered representatives do less constituency work.

For the remaining analysis, I assume an isoelastic effort cost function, yielding the following:

\[
c(e)=\frac{\epsilon}{4}e^{1/\epsilon},
\qquad
e^*(B)=B^\epsilon,
\qquad
R(Y,B)=YB^\epsilon,
\qquad 0<\epsilon<1.
\tag{14}\label{eq:effective-resources}
\]

The elasticity \(\epsilon\) measures how strongly equilibrium policy
quality responds to aggregate understanding. Effective services thus depend
on both the economy's resources and the electoral incentives to use them
well. 

I evaluate social welfare by aggregating citizens' realized payoffs,
which depend on the quality delivered and not on their evaluation
errors.\footnote{The two candidates form a measure-zero subset of the
  population, so their private effort costs do not enter utilitarian
  welfare.} In equilibrium \(R=YB^\epsilon\), so a citizen's payoff is
\(y_i+\log Y+\epsilon\log B\), and integrating over the unit mass of
citizens, an organization \(\mathcal A\in\Omega(q)\) yields
\[
\begin{aligned}
W_0(\mathcal A)
&\equiv W_0\!\left(Y(\mathcal A),B(\mathcal A)\right)\\
&=
\underbrace{Y(\mathcal A)+\log Y(\mathcal A)}_{\substack{\text{welfare value}\\\text{of output}}}
+
\underbrace{\epsilon\log B(\mathcal A)}_{\substack{\text{quality return}\\\text{to understanding}}},
\end{aligned}
\tag{15}\label{eq:common-welfare}
\]
where \(W_0(Y,B)\equiv Y+\log Y+\epsilon\log B\), the subscript
\(0\) denotes the common-interest environment, and
\(W_0(Y,0)=-\infty\) is the limit as \(B\downarrow0\). The linear
term aggregates personal income, \(\log Y\) values the public budget,
and \(\epsilon\log B\) values the discipline that an informed
electorate imposes on government. The organization of
production must therefore be evaluated for both the resources it
generates and the knowledge it leaves with voters.

\hypertarget{competitive-implementation-and-the-social-optimum}{%
\section{Competitive Implementation and the Social
Optimum}\label{competitive-implementation-and-the-social-optimum}}
\label{sec:social-optimum}
Competitive wages reward knowledge for what it produces; what it
contributes to the quality of government accrues to everyone but is not priced. I first establish
which organizations competition supports, then how a utilitarian
planner maximizing welfare in equation~\eqref{eq:common-welfare} would
change responsibilities and knowledge.
\hypertarget{competitive-implementation}{%
\subsection{Competitive
Implementation}\label{competitive-implementation}}
Workers may choose any occupational profile \(s\in\Delta_K\), and
\(w(s)\) is the wage for one unit of labor supplied with profile
\(s\). Given this schedule, firms choose \(\mathcal A\in\Omega(q)\)
to minimize payroll per unit of output; the output price is one. A
positive-output competitive equilibrium requires that active firms
minimize payroll, that workers choose wage-maximizing profiles, that
the labor market clears, and that free entry drives
profits to zero. 

\begin{proposition}[Competitive implementation]\label{thm:competitive-implementation}
A feasible organization can be supported in a positive-output competitive equilibrium if and only if it minimizes the unit labor requirement. In any such equilibrium, equilibrium output is concentrated on organizations satisfying \(C(\mathcal A)=C^M(q)\); aggregate output is \(Y^M(q)\), and every occupied profile earns the common wage \(1/C^M(q)\).
\end{proposition}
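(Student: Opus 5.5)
The plan is to treat the competitive economy as a linear-technology economy in which every unit of labor, whatever its profile, is a perfect substitute in supply. Each citizen can choose any \(s\in\Delta_K\) at no cost, so in equilibrium every occupied profile must earn the maximal wage \(\bar w=\sup_s w(s)\). Unoccupied profiles can carry any wage \(w(s)\leq\bar w\). Free entry with constant returns then requires that the minimal payroll per unit of output equal the output price of one. The proof proceeds in three steps.

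\emph{Sufficiency.} Take a labor minimizer \(\mathcal A^*\) with \(C(\mathcal A^*)=C^M(q)\), and post the flat schedule \(w(s)\equiv 1/C^M(q)\) for all \(s\). Under a flat wage, the payroll of any feasible \(\mathcal A\) per unit of output is \(C(\mathcal A)/C^M(q)\geq 1\), with equality exactly at labor minimizers. Hence \(\mathcal A^*\) minimizes payroll and earns zero profit, and no entrant can earn positive profit. Workers are indifferent across profiles, so they can supply exactly the profile mix that \(\mathcal A^*\) demands, namely the employment shares \(m_h(\mathcal A^*)\). The labor market clears with output \(Y=1/C^M(q)\). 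This step is routine.

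\emph{Necessity.} This is the main obstacle, because a firm faces a non-flat wage schedule, and payroll minimization alone need not imply labor minimization. Fix an equilibrium with wage schedule \(w\) and let \(\bar w\) be the common wage of occupied profiles. I would establish two inequalities.

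1. Active firms hire only occupied profiles, so the payroll of an active organization \(\mathcal A\) is \(\bar w\,C(\mathcal A)\). Zero profit gives \(\bar w\,C(\mathcal A)=1\).
2. Consider a deviating firm using a labor minimizer \(\mathcal A^M\), whose profiles may be unoccupied. Its payroll is \(\sum_v z_v a(p_v,s_v)w(s_v)\leq \bar w\,C^M(q)\), since \(w(s)\leq\bar w\) for every \(s\) (otherwise workers would switch to the higher-paid profile). No positive profit then requires \(1\leq \bar w\,C^M(q)\).

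Combining the two gives \(C(\mathcal A)\leq C^M(q)\), hence equality. So every active organization, and any organization supported in equilibrium, is a labor minimizer, and \(\bar w=1/C^M(q)\).

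The claim that occupied wages are weakly maximal needs care when an unoccupied profile carries a higher posted wage. That configuration contradicts worker optimality, since any worker could switch profiles at no cost; I will state this explicitly, using the unit labor endowment and the free choice of profile.

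\emph{Aggregates.} With several active firms, each uses a labor minimizer and total labor is one. Summing across firms, aggregate output is \(\sum_f L_f/C^M(q)=1/C^M(q)=Y^M(q)\), where \(L_f\) is the labor employed by firm \(f\). This gives the stated output, the output concentration on organizations with \(C(\mathcal A)=C^M(q)\), and the common wage \(1/C^M(q)\).
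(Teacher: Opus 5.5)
Your proposal is correct and follows essentially the same route as the paper: a flat wage \(1/C^M(q)\) supports any labor minimizer, and for necessity the chain \(1\leq c_w(\widetilde{\mathcal A})\leq \bar w\,C^M\leq \bar w\,C(\mathcal A)=1\) forces every active organization to minimize labor and pins down the common wage; labor-market clearing then gives output \(Y^M(q)\). The only difference is presentation: the paper works with an output measure over organizations and uses market clearing to show that almost every active organization demands no labor at below-maximal-wage profiles, so its conclusions hold almost everywhere, whereas you argue firm by firm.
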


Free occupational choice equalizes wages across the profiles used, so
competition implements the organizations that require the least labor.
Since an individual worker cannot affect political outcomes, she
chooses her profile to maximize her wage. Atomistic firms likewise
cannot affect political outcomes through their labor demand, so they
are taken to maximize profits. When firms are sizeable and maximize their workforce's utility, free entry by sufficiently small worker-owned firms still implies that only productively efficient organizations are implementable.\footnote{
Online Appendix~OC.1 establishes this result for firms
that distribute profits to their workers as dividends. If an incumbent
uses a labor-inefficient organization, an arbitrarily small group of its
workers can form an efficient firm and share the resulting income gain,
while the associated change in public-service provision vanishes with the
entrant's size.}

Competition supports every
organization in \(\mathcal M(q)\), including those outside the
minimum-handoff selection, without choosing among them for
understanding. Write \(C^M=C^M(q)\) and define the best understanding
available at maximal output by

\[
B^M
\equiv
\sup_{\substack{\mathcal A\in\Omega(q)\\C(\mathcal A)=C^M}}
\quad B(\mathcal A).
\]

The benchmark \(B^M\) gives competition the most favorable allocation
of knowledge consistent with maximal output. The market does not favor
this allocation, but raising understanding above it necessarily requires
\(C>C^M\).

\hypertarget{the-planners-trade-off}{%
\subsection{The Planner’s Trade-off}\label{the-planners-trade-off}}

The planner has two instruments. She can delegate less, so that more interface work is retained in higher layers, increasing their employment share; or she can
cross-train, staffing a narrow responsibility with workers whose
knowledge extends beyond it. Neither changes total work, \(1+d(q)\),
and each costs labor: a unit of work at scope \(r'\) rather than
\(r\) costs \(\ell_{r'}\) instead of \(\ell_r\), and a cross-trained
worker's fit falls below one, so her unit costs \(\ell_r/\beta>\ell_r\).
Either therefore raises labor per unit of output and lowers
\(Y(\mathcal A)=1/C(\mathcal A)\). In general, the planner solves
\[
\sup_{\substack{\mathcal A\in\Omega(q)\\B(\mathcal A)>0}}
\quad W_0(\mathcal A).
\]
The least labor at which understanding \(\bar B>0\) can be delivered
is the frontier
\[
\mathcal C_q(\bar B)
=
\inf_{\substack{\mathcal A\in\Omega(q)\\B(\mathcal A)\geq\bar B}}
\quad C(\mathcal A).
\tag{16}\label{eq:understanding-cost-frontier}
\]

This frontier records the least labor needed to deliver a given level
of understanding, allowing both the division of responsibilities and
workers' knowledge to adjust. The infimum is over finite organizations;
it need not be attained and equals \(+\infty\) if the target is
infeasible. Online Appendix OC establishes existence in a relaxed
formulation and illustrates why a finite optimum need not exist.

A feasible organization with outcomes \((C,B)\), where \(C\geq C^M\) and
\(B>B^M>0\), yields welfare above the supremum attainable by
labor-minimizing organizations if and only if\footnote{If \(B^M\) is not
  attained, equation \eqref{eq:reform-hurdle} compares the redesign with
  the supremum over labor minimizers. If \(B^M=0\), every feasible
  organization with positive understanding dominates them under
  logarithmic welfare.}

\[
W_0(1/C,B)>W_0(1/C^M,B^M)
\quad\Longleftrightarrow\quad
\epsilon\log\frac{B}{B^M}
>
\left(\frac{1}{C^M}-\frac{1}{C}\right)
+\log\frac{C}{C^M}.
\tag{17}\label{eq:reform-hurdle}
\]

The electoral gain from better understanding must compensate for both
the loss of private income, \(1/C^M-1/C\), and the reduction in public
services caused by a smaller budget, \(\log(C/C^M)\). The more
strongly candidate effort responds to understanding, as measured by
\(\epsilon\), the greater the output loss the understanding gain can justify.

\hypertarget{the-planners-delegation-cutoff}{%
\subsection{The Planner’s Delegation
Cutoff}\label{the-planners-delegation-cutoff}}

To solve the planner's problem, I impose two technical restrictions.
First, symmetry: all production domains are equally important, and the
agenda is \emph{symmetric and dimensional}, giving equal probability to
issues of the same scope and weighting their domains equally, with
arbitrary weights across scopes. Matched scope-\(r\) workers in a
productively optimal hierarchy then share understanding \(b_r\), given
in equation~\eqref{eq:proof-br}. Second, labor intensities are affine
in scope. These restrictions yield the cutoff
characterization below. In the theorem, scope-\(r\) workload refers to
the amount \(g_r\) in a productively optimal hierarchy.

\begin{theorem}[A cutoff representation of an optimum]\label{thm:optimal-responsibility}
Suppose \(q\) is uniform over all \(K\) knowledge domains; the agenda is symmetric and dimensional; and labor intensities satisfy \(\ell_r=1+\kappa(r-1)\), with \(0<\kappa<1\). The planner's common-interest problem then has an optimal organization characterized by a cutoff \(\widehat r\in\{0,\ldots,K-1\}\). All workload generated at scopes \(r>\widehat r\) is performed in system-wide positions, and all workload generated at scopes \(r<\widehat r\) is performed in exactly matched scope-\(r\) positions. The workload generated at scope \(\widehat r\) may be split between system-wide and exactly matched scope-\(\widehat r\) positions. The case \(\widehat r=0\) assigns all workload to system-wide positions.
\end{theorem}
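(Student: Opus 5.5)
The plan is to reduce the planner's problem to a one-dimensional allocation of the scope workloads $g_r$ and then use the structure of $W_0$. Fix the uniform $q$ and the symmetric dimensional agenda. For any feasible organization, the total work is $1+d(q)$, and it can be decomposed by the scope at which the productively optimal hierarchy would generate it, namely the amounts $g_1=1$ and $g_r=1/(r(r-1))$ from Corollary~\ref{cor:employment-hierarchy}.

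\textbf{Step 1: a restricted class of organizations.} I first argue that it is without loss to consider organizations in which each unit of scope-$r$ workload is performed in one of two ways. It is either done by exactly matched scope-$r$ workers, at labor $\ell_r$ and understanding $b_r$, or it is absorbed into system-wide positions with the uniform profile, at labor $\ell_K$ and understanding $b_K$. Absorbing work at the root is always coherent: the undivided uniform responsibility retains the interface load. This means undoing delegation of a subtree and pulling its workload up into a scope-$K$ position.

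Two other options must be dominated:
\begin{itemize}
\item intermediate re-assignments, where scope-$r$ work is done at some scope $r'$ with $r<r'<K$;
\item cross-training, where workers have a fit $\beta<1$.
\end{itemize}
For these I would use Theorem~\ref{thm:division-understanding} and concavity of $b$. Under symmetry, understanding per unit of labor at scope $r'$ is $b_{r'}$, and a cross-trained worker with profile $s$ on a responsibility $p$ costs $\ell_r/\beta(s,p)$. The best $s$ for understanding at a given cost is a mixture toward the uniform profile. I expect a linear-in-$r$ comparison to show that spending the same extra labor on converting work to scope $K$ yields at least as much $B$. The reason is that $\ell$ is affine, so the cost of raising work from scope $r$ to $r'$ is $\kappa(r'-r)$. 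If $b_r$ is convex enough in the relevant parameterization, or if the frontier in $(C, B\cdot C)$ space is spanned by its extreme points, the endpoint $K$ dominates. This dominance step is the main obstacle. I would attempt it by writing $B\cdot C=\sum_v z_v a_v b(s_v)$, so that total understanding-weighted labor and total labor are both linear in the allocation. Any option is then a point $(\text{labor},\ \text{labor}\times\text{understanding})$ per unit of work, and I would show that every option lies below the chord joining the matched-scope-$r$ point $(\ell_r,\ell_r b_r)$ and the system-wide point $(\ell_K,\ell_K b_K)$.

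\textbf{Step 2: reduction to a linear-fractional program.} With $x_r\in[0,g_r]$ denoting the scope-$r$ workload moved to system-wide positions, the outcomes are
\[
C(x)=\sum_r \ell_r g_r+\sum_r(\ell_K-\ell_r)x_r,\qquad
B(x)=\frac{\sum_r \ell_r b_r g_r+\sum_r(\ell_K b_K-\ell_r b_r)x_r}{C(x)}.
\]
Welfare is $W_0=1/C+\log(1/C)+\epsilon\log B$. For any fixed target value of $C$, maximizing $W_0$ is equivalent to maximizing the numerator of $B$ subject to a linear budget constraint in $x$. This is a fractional knapsack problem. Its solution fills items in decreasing order of the ratio
\[
\rho_r=\frac{\ell_K b_K-\ell_r b_r}{\ell_K-\ell_r},
\]
and at most one item is split. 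I will show that $\rho_r$ is nonincreasing in $r$, so broader scopes are absorbed first. This is exactly the cutoff structure: full absorption for $r>\widehat r$, a split at $\widehat r$, and matched positions below.

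Monotonicity of $\rho_r$ follows from the explicit $b_r$ in equation~\eqref{eq:proof-br} with affine $\ell_r$. Since $\ell_K-\ell_r=\kappa(K-r)$, it amounts to showing that $\ell_r b_r$ is concave in $r$. I would verify this issue by issue. For an issue of scope $t$, a matched scope-$r$ worker's $\beta$ is $0$ if $r<t$, and otherwise depends on $t$ and $r$ through the uniform-support combinatorics. I would then check that the induced sequence $\ell_r b_r$ has decreasing increments.

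\textbf{Step 3: the outer problem and existence.} The outer problem is one-dimensional in $C$ over a compact interval $[C^M,\ell_K(1+d(q))]$, and $W_0$ is continuous there wherever $B>0$. An optimum therefore exists within the class, and Step 1 transfers it to the planner's problem over $\Omega(q)$. The case $\widehat r=0$ corresponds to all items being filled. Finally, I note that the ties in $\rho_r$ explain why only "an" optimum is claimed.
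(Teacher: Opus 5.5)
You follow the paper's architecture: bound every organization by a transport of the benchmark workloads $g_t$, solve the resulting knapsack by ordered secant slopes, then optimize along the chain. But the proposal has the key curvature backwards. For the knapsack to absorb the broadest sources first, your ratio $\rho_r$ (the paper's $\sigma_r=(n_K-n_r)/(\ell_K-\ell_r)$, with $n_r=\ell_rb_r$) must be \emph{nondecreasing} in $r$, so that $\sigma_{K-1}$ is largest. Since $\ell_K-\ell_r=\kappa(K-r)$, this means $n_r$ must be discretely \emph{convex}. You propose instead that $\rho_r$ is nonincreasing and $n_r$ concave. That would make the knapsack move scope-$1$ work first and produce the reverse cutoff. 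The concavity you plan to verify is also false: if every issue requires all $K$ domains, then $n_r=0$ for $r<K$ and $n_K=1$. Convexity is also what your Step 1 needs to discard intermediate destinations $r<r'<K$. The point $(\ell_{r'},n_{r'})$ lies below the chord from $(\ell_r,n_r)$ to $(\ell_K,n_K)$ only if $n$ is convex, so as written your two steps require opposite curvature. The paper obtains convexity issue by issue from $[1+\kappa(r-1)]\binom{r-1}{h-1}=[1+\kappa(h-1)]\binom{r-1}{h-1}+\kappa h\binom{r-1}{h}$ and the nonnegative second differences of binomial sequences in $r$.

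The second gap is Step 1 itself, which you leave as an expectation; it needs two ingredients. First, your premise that an arbitrary organization's work can be attributed to benchmark scopes and only moved upward must be proved. This is not automatic for trees with non-uniform intermediate profiles, skipped scopes, or broad terminals. The paper uses a workload-prefix bound: mark the first node of scope at most $r$ on each path. Each marked subtree contains $\omega[1+d(p)]\le\omega(2-1/r)$ units of work by Cauchy--Schwarz, so work at scopes $\le r$ never exceeds $\sum_{t\le r}g_t=2-1/r$. That bound is equivalent to an upward transport of the $g_t$. Second, dominance of cross-training needs an explicit envelope. The paper writes any ordered profile as $s=\sum_r\alpha_ru_r$, with $u_r$ uniform on its top $r$ domains, and shows understanding is linear along this decomposition, $b(s)=\sum_r\alpha_rb_r$. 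It bounds $\beta(s,p)\le\sum_r\alpha_r\min\{1,t/r\}$ for a scope-$t$ responsibility and uses that $\ell_r/r$ decreases. Together these give $a(p,s)[b(s)-\bar b]\le\max_{t\le r\le K}\ell_r[b_r-\bar b]$ for every $\bar b\ge b_K$. Stating dominance in this support-line form, rather than ``below the chord,'' also covers staffings with labor per unit above $\ell_K$. It likewise covers organizations with $C>\ell_K(1+d(q))$, which your compact-interval reduction in Step 3 omits. Finally, the split at $\widehat r$ needs a finite implementation; the paper places adjacent cutoff trees as parallel branches under a zero-work root with profile $q$.
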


Moving a unit of scope-\(r\) work to the root costs \(\ell_K-\ell_r\)
extra labor and raises labor-weighted understanding by
\(\ell_Kb_K-\ell_rb_r\). The planner ranks source scopes by the ratio,
the understanding obtained per unit of extra labor, and under the
symmetric agenda it weakly rises with \(r\): a cross-domain issue counts only
for a worker covering all of its domains, so each additional domain
adds weakly more understanding the more domains she already spans. The
planner therefore reassigns the broadest delegated work to system-wide
positions first, then moves down, scope by scope, until it no longer pays.

For any feasible labor requirement along the cutoff frontier, an organization with exactly matched knowledge delivers at least as much average understanding as any alternative, including cross-training narrow specialists, which is therefore not required. If the cutoff sits at \(\widehat r=K-1\), the resulting organization can be a labor minimizer. A sufficient condition for a strict improvement over any productive optimum is that the return from the first marginal reassignment exceeds its cost. Write
\[
\sigma_{K-1}
\equiv
\frac{\ell_Kb_K-\ell_{K-1}b_{K-1}}
{\ell_K-\ell_{K-1}}.
\]

Moving a unit of scope-$(K-1)$ work to system-wide positions raises labor-weighted understanding by $\ell_K b_K-\ell_{K-1}b_{K-1}$ and labor required per unit of output by $\ell_K-\ell_{K-1}$, so $\sigma_{K-1}$ is the marginal increase in labor-weighted understanding per unit of additional labor; average understanding therefore rises precisely when $\sigma_{K-1}>B^M$, while by equation~\eqref{eq:common-welfare} each
additional unit of labor costs income and public resources in proportion to
\(1+1/C^M\).
\begin{corollary}[Strict over-specialization]\label{cor:strict-over-specialization}
Under the assumptions of Theorem \ref{thm:optimal-responsibility}, suppose
\[
\epsilon\bigl(\sigma_{K-1}-B^M\bigr)
>
B^M\left(1+\frac{1}{C^M}\right).
\]
Then every labor minimizer---and hence every organization implemented
by a positive-output competitive equilibrium---is strictly dominated in
common-interest welfare. An optimal cutoff assigns to system-wide
positions some of the workload generated at scope \(K-1\), and hence
strictly more workload than any labor minimizer.
\end{corollary}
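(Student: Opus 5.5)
The plan is to construct an explicit perturbation of a labor minimizer and show that welfare rises to first order. By symmetry every labor minimizer is exactly matched with scope workloads $g_r$, so $C=C^M$, and its understanding is at most $B^M$. The argument compares every labor minimizer with the supremum $W_0(1/C^M,B^M)$ and produces an organization that beats that value.

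\emph{The perturbation.} Take a productively optimal hierarchy attaining $B^M$; by Theorem~\ref{thm:optimal-responsibility} and the remark after it, an exactly matched one whose understanding equals $B^M$ is available. Reassign a fraction $t\in(0,g_{K-1}]$ of the scope-$(K-1)$ workload to system-wide positions. Concretely, keep the tree but staff a positive-work scope-$(K-1)$ share of scale proportional to $t$ at the root position. Coherence is unaffected, since one can equivalently merge these nodes into the root: the root stops delegating part of its scope-$(K-1)$ interface work. Then
\[
C(t)=C^M+t(\ell_K-\ell_{K-1}),\qquad
C(t)B(t)=C^MB^M+t\bigl(\ell_Kb_K-\ell_{K-1}b_{K-1}\bigr).
\]
The second identity holds because $CB$ is the labor-weighted understanding $\sum_r \ell_r g_r b_r$.

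A technical point needs checking here. Moving a scope-$(K-1)$ node to the root leaves its children's requirements unchanged, so the resolved work is the same work, now performed with fit $\beta(s_K,p_{K-1})$. If done naively this is less than one. The correct construction instead has the root resolve those interfaces itself: the root's workload rises by $t$ and the scope-$(K-1)$ workload falls by $t$, with fit one at both. This is exactly the "workload performed in system-wide positions" of the cutoff theorem, so I would invoke that theorem's feasibility of any split at the cutoff $\widehat r=K-1$ rather than rebuild the tree.

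\emph{Welfare derivative.} Write $\delta=\ell_K-\ell_{K-1}$. Since $W_0=1/C-\log C+\epsilon\log B$ and $B=(CB)/C$,
\[
\frac{d}{dt}W_0\Big|_{t=0}=-\frac{\delta}{(C^M)^2}-\frac{\delta}{C^M}
+\epsilon\Bigl(\frac{\delta\,\sigma_{K-1}}{C^MB^M}-\frac{\delta}{C^M}\Bigr)
=\frac{\delta}{C^MB^M}\Bigl[\epsilon(\sigma_{K-1}-B^M)-B^M\bigl(1+\tfrac{1}{C^M}\bigr)\Bigr].
\]
This is strictly positive under the hypothesis, so for small $t>0$ we get $W_0(t)>W_0(1/C^M,B^M)$. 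The supremum over labor minimizers is therefore not optimal, and every labor minimizer is strictly dominated. By Proposition~\ref{thm:competitive-implementation}, competitive equilibria use only labor minimizers, so they are dominated as well.

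\emph{Optimal cutoff.} By Theorem~\ref{thm:optimal-responsibility} an optimum exists in cutoff form. If its cutoff assigned no scope-$(K-1)$ workload to system-wide positions, it would be the exactly matched productive hierarchy, since $\widehat r=K-1$ with zero split. That hierarchy is a labor minimizer, which is dominated, a contradiction. Hence some scope-$(K-1)$ workload goes to system-wide positions. The system-wide workload then exceeds $g_K$, and $g_K$ is the system-wide workload of every labor minimizer by Theorem~\ref{thm:efficient-organization}, whose workloads are invariant across minimizers.

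The step I expect to be the main obstacle is feasibility: showing that the reassignment is a coherent finite organization with exact matching. The fallback is to cite the construction underlying Theorem~\ref{thm:optimal-responsibility}.
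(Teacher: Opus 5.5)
Your proof is correct and follows the paper's argument. You start from the labor-minimizing cutoff organization attaining $(C^M,B^M)$ and move a small amount of scope-$(K-1)$ workload to system-wide positions along the adjacent-cutoff segment. You then show the welfare derivative is a positive multiple of $\epsilon(\sigma_{K-1}-B^M)-B^M(1+1/C^M)$ and conclude via Theorem~\ref{thm:optimal-responsibility} and the invariance of workloads across labor minimizers. Your fallback for feasibility, the partial-cutoff (adjacent-cutoff mixture) construction in the proof of Theorem~\ref{thm:optimal-responsibility}, is exactly what the paper uses. It correctly sidesteps the fit-below-one problem you identified in naively restaffing a scope-$(K-1)$ node at the root.
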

The condition says that the electoral value of the understanding the
first reassignment adds exceeds its cost in income and public
resources. Since \(B^M\) is the greatest understanding attainable at
\(C^M\), the conclusion holds whichever labor minimizer competition
implements. If every policy issue requires all \(K\) domains, then
\(b_{K-1}=0\) and \(\sigma_{K-1}=1/\kappa\), so the condition holds
for any \(\epsilon>0\) once \(\kappa\) is small enough: when only
system-wide knowledge counts for policy and broad coordination is
cheap, competition over-specializes. At the port, the planner keeps
port-wide positions responsible for work that firms would delegate
once its interfaces are resolved; more workers then acquire
system-wide knowledge through their jobs, at the cost of doing that
work less efficiently.

As policy becomes more integrated \citep{cejudo-trein-2023}, the
planner--market wedge widens: with more weight on cross-domain issues,
the planner assigns more work to system-wide positions and the welfare
cost of the market's organization grows. Let the agenda place weight
\(1-\chi\) on the \(K\) domain-specific issues, equally weighted, and
\(\chi\in[0,1]\) on a fixed distribution over cross-domain issues;
production is unaffected. At \(\chi=0\), every profile delivers
understanding \(1/K\), so every labor minimizer is socially optimal. As
\(\chi\) or \(\epsilon\) rises, the least and greatest amounts of
system-wide work among optimal cutoff allocations increase weakly, and
strictly within a unique partial-cutoff segment. The planner's welfare
advantage over the labor-minimizing benchmark also rises weakly with
\(\chi\); welfare itself need not, since cross-domain issues carry a
larger processing burden. Online Appendix OC states these results
precisely.

Less delegation buys understanding with output. Political institutions
might instead draw on the expertise production already creates, by
giving informed citizens more influence or by sharing their knowledge.
The next section introduces targeted spending, which changes both the
value of broadening knowledge and the case for either response.

\hypertarget{targeted-spending-and-the-politics-of-knowledge}{%
\section{Targeted Spending and the Politics of
Knowledge}\label{targeted-spending-and-the-politics-of-knowledge}}

\label{sec:institutions}

With common-interest policies, greater authority for informed citizens
or wider sharing of their knowledge strengthens accountability. Targeted
spending complicates both responses. A port expansion raises regional
capacity, but its access roads, procurement, and permits favor
particular groups; candidates then compete over who benefits as well as
policy quality \citep{lizzeri-persico-2001}. Because resources follow
electorally responsive groups
\citep*{lindbeck-weibull-1987,dixit-londregan-1996}, understanding also
confers a distributive advantage \citep{blumenthal-2026}. I examine how
it changes the value of authority for the informed, of knowledge
sharing, and of broader occupational knowledge.

\hypertarget{targeted-spending-and-political-influence}{%
\subsection{Targeted Spending and Political
Influence}\label{targeted-spending-and-political-influence}}

Fix an organization \(A\in\Omega(q)\). Each occupied position is a
separately targetable occupational group, indexed by \(g\). Group \(g\)
has mass \(m_g>0\), understanding \(b_g=b(s_g)\in(0,1]\), and formal
authority per person \(v_g\geq 0\).\footnote{Allowing candidates to
target individuals directly leaves all results unchanged. Since workers
within a position share \(b_g\) and \(v_g\), individual-level targeting
gives every member \(i\) the same equilibrium provision,
\(t_i=fRv_gb_g/B_v=t_g\).}
Authority may take the form of equal voting, issue-specific voting
weights, or a restricted franchise; \(v_g=0\) excludes the group.
Suppressing dependence on \(A\), normalize population and authority and
define politically weighted understanding by
\[
    \sum_g m_g=1,
    \qquad
    \sum_g m_gv_g=1,
    \qquad
    B_v=\sum_g m_gv_gb_g.
\]

Maintain \(0<\epsilon<1\). Let \(f\in[0,1]\) be the fraction of
equally valued public services whose beneficiaries candidates can choose;
the remaining services benefit everyone equally. Logarithmic
utilities make \(f\) also the resource share devoted to targetable services.\footnote{Let \(x_{gj}(a)\) denote group \(g\)'s provision of service
  \(a\in[0,1]\), with
  \(u_i(j)=y_i+\int_0^1\log x_{gj}(a)\,da\).
  With unit provision costs, total common provision \(R_j^c\) satisfies
  \(R_j^c+\sum_gm_gt_{gj}=R_j\). Concavity implies equal provision
  within each category. Holding relative allocations across groups fixed,
  every voter prefers \(R_j^c=(1-f)R_j\), so candidates choose this
  split. Each common service then provides \(R_j\), while each
  targetable service provides \(t_{gj}/f\) to group \(g\), giving
  equation~\eqref{eq:targetable-utility}.}
Candidate \(j\) chooses effort \(e_j\) and per-capita targeted
provision \(t_{gj}\) for each group \(g\). A member \(i\) of group \(g\) receives
\(u_i(j)=y_i+U_{gj}^{\mathrm{svc}}\), with
\(\int_0^1y_i\,di=Y\) and the following reduced form for utility
from public services:

\[
U_{gj}^{\mathrm{svc}}
=
\begin{cases}
\log R_j, & f=0,\\[2pt]
(1-f)\log R_j+f\log\dfrac{t_{gj}}{f}, & f>0,
\end{cases}
\qquad
R_j=Ye_j.
\tag{18}\label{eq:targetable-utility}
\]
Equal per-capita provision, \(t_{gj}=fR_j\), recovers the
common-interest benchmark. As \(f\) rises, more of the services citizens
already value become subject to distributive politics. For \(f>0\),
candidate \(j\) jointly chooses effort and provision across groups:

\[
\begin{aligned}
\max_{e_j\geq0,(t_{gj})_g\geq0}\quad
&\left\{\sum_gm_gv_g\pi_{gj}-c(e_j)\right\}\\
\text{subject to}\quad
&\sum_gm_gt_{gj}=fYe_j,
\end{aligned}
\tag{19}\label{eq:targeted-candidate-problem}
\]
where \(\pi_{gj}\) uses the evaluation technology in
Section~\ref{sec:political-competition}, applied to group \(g\)'s
utility with precision \(b_g\). At \(f=0\), the objective remains
\(\sum_gm_gv_g\pi_{gj}-c(e_j)\), with effort the only choice.
Online Appendix OD proves that the unique symmetric equilibrium
satisfies \(ec'(e)=B_v/4\). With the isoelastic cost and
\(t_g\equiv t_{g1}=t_{g2}\) for \(f>0\),

\[
e=B_v^\epsilon,
\qquad
R=YB_v^\epsilon,
\qquad (f>0):\quad
t_g=fR\frac{v_gb_g}{B_v}.
\tag{20}\label{eq:targeted-allocation}
\]
Policy understanding now affects both the quality and the distribution
of public spending: \(B_v\) governs effort, while \(v_gb_g\) governs
targeted spending per person. Under equal voting, \(B_v=B\) and
\(t_g=fRb_g/B\): better-informed groups receive more, as their votes
respond more to spending. Since utility from targeted spending is
concave, an equal per-capita allocation of the same total would yield
higher welfare.

\hypertarget{political-authority-and-information-sharing}{%
\subsection{Political Authority and Information
Sharing}\label{political-authority-and-information-sharing}}

Giving informed groups greater authority strengthens effort incentives,
but allows such groups to attract more spending. The planner balances these effects by choosing \(v=(v_g)_g\):

\[
\begin{aligned}
\max_{v}\quad
&Y+\log Y+(\epsilon-f)\log B_v
+f\sum_gm_g\log(v_gb_g)\\
\text{subject to}\quad
& v_g\geq0 \quad\text{for every }g,\\
& \sum_gm_gv_g=1.
\end{aligned}
\tag{21}\label{eq:authority-policy}
\]
Related considerations are faced by the most knowledgeable groups when they evaluate disclosing a policy analysis that would raise the policy understanding of the less informed: sharing increases equilibrium policy quality by improving other groups' evaluations but it also strengthens their claims on spending. The next proposition explores these trade-offs.

\begin{proposition}[Political authority and information sharing]\label{prop:institutions}
The two political responses to unequal understanding have the following properties:

\begin{enumerate}
\renewcommand{\labelenumi}{\textup{(\roman{enumi})}}
\item \textup{\textbf{Authority.}} At \(f=0\), every optimum assigns formal authority only to groups with maximal understanding. For \(f>0\), the optimum is unique and interior; denote it by \(v^*\). If \(b_g>b_h\), then \(v_g^*>v_h^*\) when \(0<f<\epsilon\), \(v_g^*=v_h^*=1\) when \(f=\epsilon\), and \(v_g^*<v_h^*\) when \(f>\epsilon\). In every case with \(f>0\), group \(g\) has greater per-person effective influence: \(v_g^*b_g>v_h^*b_h\).

\item \textup{\textbf{Information sharing.}} Suppose group \(h\) can provide a verifiable policy analysis that raises every other group's understanding and changes politically weighted understanding from \(B_v\) to \(B_v'>B_v\), holding the occupational distribution, authority, output, \(f\), and \(b_h\) fixed; when \(f>0\), assume \(v_h>0\). Group \(h\)'s gross utility gain is

\[
(\epsilon-f)\log\frac{B_v'}{B_v}.
\tag{22}\label{eq:information-sharing}
\]

At a communication cost \(k\geq0\), the group strictly prefers to share when this expression exceeds \(k\), strictly prefers to withhold when it is below \(k\), and is indifferent at equality. Hence, with no communication costs, the analysis is shared when \(f<\epsilon\), and withheld when \(f>\epsilon\).
\end{enumerate}
\end{proposition}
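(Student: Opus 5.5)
Both parts rest on the welfare decomposition behind problem~\eqref{eq:authority-policy}, which I take as given from equation~\eqref{eq:targeted-allocation}. Using $t_g=fRv_gb_g/B_v$ and $R=YB_v^\epsilon$, group $g$'s service utility is $\log R+f\log(v_gb_g/B_v)$. This equals
\[
\log Y+(\epsilon-f)\log B_v+f\log(v_gb_g).
\]
Averaging over groups with weights $m_g$ and adding income gives the objective in~\eqref{eq:authority-policy}.

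\textbf{Authority.} At $f=0$ the objective reduces to $\epsilon\log B_v$. This is increasing in $B_v=\sum_g m_gv_gb_g$, which is linear in $v$ on the simplex $\{v\ge0,\sum m_gv_g=1\}$. Its maximum is therefore attained exactly when all authority sits on groups with $b_g=\max_h b_h$.

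For $f>0$, the term $f\sum m_g\log v_g$ forces $v_g>0$. For uniqueness I will check concavity of the objective in $v$. When $\epsilon\ge f$, concavity is immediate, since both $\log B_v$ and $\log v_g$ are concave. When $f>\epsilon$, the term $-(f-\epsilon)\log B_v$ is convex, so this is the main obstacle. My first attempt is to substitute $x_g=v_gb_g$ and work on the budget set $\sum m_gx_g/b_g=1$, which is still linear. Along the ray $x\mapsto\lambda x$ one has $B_v=\lambda B_v$, so the objective splits as $\epsilon\log\lambda+f\sum m_g\log x_g-(f-\epsilon)\log\sum m_gx_g$. If this does not settle concavity, the fallback is to show the first-order conditions have a unique solution, which also gives uniqueness because the maximum is interior.

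The FOCs are
\[
(\epsilon-f)\frac{m_gb_g}{B_v}+\frac{fm_g}{v_g}=\mu m_g .
\]
Multiplying by $v_g$ and summing gives $\mu=\epsilon$. Hence
\[
v_g^*=\frac{f}{\epsilon-(\epsilon-f)b_g/B_v}.
\]
For fixed $B_v$ the right-hand side is monotone in $b_g$. The fixed-point equation for $B_v$ (imposing $\sum m_gv_g=1$) has a unique solution by monotonicity in $B_v$, which settles uniqueness. The ordering claims follow from the sign of $\epsilon-f$: $v^*$ is increasing in $b_g$ if $f<\epsilon$, constant at one if $f=\epsilon$, and decreasing if $f>\epsilon$.

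For effective influence, write
\[
v_g^*b_g=\frac{f}{\epsilon/b_g-(\epsilon-f)/B_v},
\]
which is increasing in $b_g$ because $\epsilon/b_g$ decreases. Positivity of the denominators comes from $v_g^*>0$.

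\textbf{Information sharing.} By the decomposition, group $h$'s utility is $y_h+\log Y+(\epsilon-f)\log B_v+f\log(v_hb_h)$, with the last term present only when $f>0$ and then well defined since $v_h>0$. Sharing leaves $Y$, $y_h$, $v_h$ and $b_h$ unchanged and changes only $B_v$. The gain is therefore exactly~\eqref{eq:information-sharing}, and comparing it with $k$ gives the stated preference. With $k=0$ and $B_v'>B_v$, the sign of the gain is the sign of $\epsilon-f$, which gives sharing when $f<\epsilon$ and withholding when $f>\epsilon$.
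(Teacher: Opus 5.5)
Apart from one omission, your argument is the paper's. It uses the same service-utility decomposition, and pins the multiplier to $\epsilon$ by multiplying the first-order conditions by $v_g$ and summing. It reaches the same closed form $v_g^*=f/[\epsilon-(\epsilon-f)b_g/B_v]$ and gets uniqueness from a monotone scalar equation; the paper writes it in $\vartheta=(\epsilon-f)/(\epsilon B_v^*)$ rather than in $B_v$, which is the same equation. The orderings come from the sign of $\epsilon-f$, followed by the same effective-influence formula and the same one-line computation for information sharing.

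The omission is existence of an optimum for $f>0$. The statement asserts it, and your fallback presupposes it (``because the maximum is interior''). Uniqueness of the solution to the first-order conditions only shows that \emph{any} optimum coincides with it. When $f>\epsilon$, a unique critical point need not be a maximizer.

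Your concavity attempt cannot fill this. The substitution $x_g=v_gb_g$ is a linear change of variables, so it cannot create concavity. The objective is also genuinely non-concave in general. With two equal-mass groups, $f>2\epsilon$ and $b_2/b_1$ small, its second derivative along the constraint is positive at some points with small $v_1$. This is why the paper stresses that concavity is not needed.

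The fix is the paper's Step 1. Since every $b_g>0$, we have $\min_g b_g\le B_v\le\max_g b_g$, so $(\epsilon-f)\log B_v$ is bounded on the simplex even for $f>\epsilon$. Meanwhile $fm_g\log v_g\to-\infty$ as $v_g\to0$. Extending the objective by $-\infty$ on the boundary makes it upper semicontinuous on a compact set, and its finite value at $v=\mathbf 1$ yields an interior maximizer.

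The same boundedness is what justifies your claim that the $\log v_g$ term ``forces'' interiority. With existence in hand, your monotonicity argument gives uniqueness, and the rest of your proof goes through. In that argument, positivity of the denominators follows from $v_g^*>0$.
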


In equilibrium, equations~\eqref{eq:targetable-utility} and
\eqref{eq:targeted-allocation} give group \(g\)'s service utility as
\(\log Y+(\epsilon-f)\log B_v+f\log(v_gb_g)\): understanding
elsewhere in the electorate improves policy for everyone, with
elasticity \(\epsilon\), but dilutes the group's share of targeted
spending, which has weight \(f\). When \(f>\epsilon\) the second
effect dominates. The planner then gives better-informed groups less
formal authority, since their understanding already attracts spending,
though their effective influence \(v_gb_g\) remains greater; and an
informed group withholds even costless, verifiable analysis, since
better policy is worth less to it than the spending it would forgo.
The social return keeps its sign: under equal voting, raising a
below-average group's understanding at fixed output raises welfare for
every \(f\). Since neither authority nor
communication closes the gap, I return to its source, the organization
of production.

\hypertarget{the-organization-of-production-with-targeted-spending}{%
\subsection{The Organization of Production with Targeted
Spending}\label{the-organization-of-production-with-targeted-spending}}

I now return to equal voting, \(v_g=1\) for every group. An organization
\(\mathcal A\in\Omega(q)\) induces occupational groups with masses
\(m_g(\mathcal A)\), understanding levels \(b_g(\mathcal A)\), and mean
understanding \(B(\mathcal A)\). When every positive-mass group has
positive understanding, define

\[
\Delta(\mathcal A)
\equiv
\log B(\mathcal A)
-\sum_gm_g(\mathcal A)\log b_g(\mathcal A)
\geq0.
\]
The gap \(\Delta\) measures inequality in understanding and vanishes
when all groups understand equally well. Because targeted
provision is proportional to understanding, \(f\Delta\) is the welfare
loss from its unequal allocation.\footnote{If the targetable groups were instead exogenous and
ascriptive, group identity would have no productive role, so the planner
could assign each group proportionally across positions. All groups
would then have mean understanding \(B\), implying \(t_g=fR\) and
eliminating the allocative loss; the planner would therefore choose a
common-interest optimum. Competition could support but would not select
this independent assignment, because group identity would affect
neither labor demand nor wages.} The planner chooses \(\mathcal A\in\Omega(q)\) to maximize

\[
\begin{aligned}
W_f(\mathcal A)
&=
\underbrace{Y(\mathcal A)+\log Y(\mathcal A)}_{\substack{\text{welfare value}\\\text{of output}}}
+
\underbrace{\epsilon\log B(\mathcal A)}_{\substack{\text{quality return}\\\text{to understanding}}}
-
\underbrace{f\Delta(\mathcal A)}_{\substack{\text{allocative loss}\\\text{from targeted spending}}}\\
&=W_0(\mathcal A)-f\Delta(\mathcal A).
\end{aligned}
\tag{23}\label{eq:targeted-organization-welfare}
\]
Understanding now has two welfare margins: policy quality and allocative efficiency. 

As more provision becomes targetable, the planner is willing to accept a lower common-interest payoff to reduce inequality in understanding. Among
organizations with positive understanding in every group, if optima
exist at \(0\leq f<f'\leq1\), every optimum at \(f'\) has weakly
lower \(\Delta\) and \(W_0\) than every optimum at \(f\). At fixed
output and mean understanding, any feasible, nontrivial mean-preserving
equalization strictly raises welfare when \(f>0\). Online Appendix OD
proves these comparisons.

Under the conditions of Theorem~\ref{thm:optimal-responsibility},
broadening responsibilities is the least costly way to improve average
understanding. Targeted spending creates a further return to learning:
holding output and group sizes fixed, the marginal welfare gain per
worker from raising \(b_g\) is \((\epsilon-f)/B+f/b_g\). The
\(f/b_g\) term gives learning greater marginal value in less-informed
groups, which can justify cross-training despite its cost in productive
fit.

The next result characterizes an optimal organization for
\(0\leq f\leq\epsilon\).\footnote{For \(f>\epsilon\), greater
\(B\) lowers welfare when output and mean log understanding are held
fixed, so the profile-dominance argument used below no longer applies.} An organization is \emph{pruned} if it omits unused scopes and
\emph{broad-to-narrow} if the domains required by active responsibilities
are nested along each delegation path, possibly skipping scopes. The
construction below may join several such component hierarchies as parallel
branches of one organization.

\begin{theorem}[An optimal organization with targeted spending]\label{thm:targetable-production}
Suppose voting weights are equal, \(q=(1/K,\ldots,1/K)\), the agenda is
symmetric and dimensional with positive single-domain weight, and
\(\ell_r=1+\kappa(r-1)\), with \(0<\kappa<1\).
For every \(f\in[0,\epsilon]\), some optimum is implemented by one
organization containing at most three pruned, broad-to-narrow component
hierarchies; at the endpoints, two suffice. In each component hierarchy,
every responsibility profile is uniform on the domains it requires, and
the scope-\(K\) position is exactly matched. Every positive-work
scope-\(r<K\) responsibility is exactly matched
or staffed by a cross-trained worker who assigns a total share in
\((r/K,1)\) equally across its required domains and the remainder equally
across the other domains. Within each component hierarchy, understanding
increases weakly with scope.
\end{theorem}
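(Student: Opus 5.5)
The plan is to reduce the planner's problem with targeted spending to a finite-dimensional problem over the joint distribution of workers' scope, staffing fit, and understanding, and then apply a Carathéodory-type counting argument to bound the number of component hierarchies. Under symmetry, the welfare in \eqref{eq:targeted-organization-welfare} depends on the organization only through three objects: the labor requirement $C$, the group masses $m_g$, and the understanding levels $b_g$. So I would first show a symmetrization step. Averaging any feasible organization over the $K!$ permutations of domains, realized as parallel permuted copies, leaves $C$ unchanged. It also preserves $q$, and by concavity of $b(\cdot)$ and of $\log$ it weakly raises $W_f$ for $f\le\epsilon$. The reason is that $\epsilon\log B-f\Delta=(\epsilon-f)\log B+f\sum_g m_g\log b_g$ is increasing and concave in each $b_g$ when $f\le\epsilon$; this is exactly where the restriction $f\le\epsilon$ enters. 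As a result, each responsibility may be taken uniform on its support, and each worker's profile may be taken symmetric with respect to the required domains versus the others.

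Second, I fix a responsibility of scope $r$ and ask how it should be staffed. Symmetry and the min in $\beta$ imply that, among profiles with a given fit $\beta$, the dominant choice puts total share $\beta r/K$ equally on the required domains and the remainder equally elsewhere. Any other profile has weakly lower understanding on every symmetric issue, by a profile-dominance argument: under the symmetric dimensional agenda, understanding is a Schur-concave function of the profile. Hence staffing collapses to a one-parameter family indexed by the required-domain share $\sigma\in[r/K,1]$. Here $\sigma=1$ is exact matching and $\sigma\in(r/K,1)$ is cross-training. The endpoint $\sigma=r/K$ would give uniform knowledge; I would show that it is dominated by reassigning the work to a scope-$K$ position, which is why the scope-$K$ position is exactly matched. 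Using Theorem~\ref{thm:efficient-organization} and the frontier characterization within each broad-to-narrow chain, any achievable pattern of scope workloads can be realized by nested uniform hierarchies, possibly with scopes skipped. Theorem~\ref{thm:division-understanding}'s concavity argument then gives monotone understanding within each chain, provided matched or cross-trained shares are ordered consistently; I would impose this by reordering, which keeps costs unchanged.

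Third, the problem becomes the choice of a probability measure over "component types". Each type is a pruned broad-to-narrow chain with staffing shares, and it contributes linearly to two aggregates: total labor $C$ and the labor-weighted understanding sum that yields $B$. Each type also contributes linearly to $\sum_g m_g\log b_g$ once we note that $m_g$ is labor share over $C$. Welfare is a function of a small number of linear moments, namely labor, labor-weighted understanding, and labor-weighted log understanding. Carathéodory's theorem, applied to the constraint that the chain weights are a probability mixture reproducing $q$ (automatic under symmetrization), then yields an optimal mixture supported on at most three types. At the endpoints only two moments matter, so two types suffice: at $f=0$ the log term drops, and at $f=\epsilon$ the $\log B$ term drops.

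The main obstacle I anticipate is existence together with the reduction to finitely many types. The paper notes that finite optima need not exist in general (Online Appendix OC), so I would work in the relaxed, compact formulation over type distributions, proving upper semicontinuity of welfare. I would then verify that the maximizer has finite support by Carathéodory, and that the extremal types can be taken to be the described matched or cross-trained chains. The delicate part is the compactness near $b_g\to0$, where the log terms diverge. I would handle it using the positive single-domain weight, which guarantees $b_g\ge\beta/(K\ell_1)>0$ uniformly on the relevant set.
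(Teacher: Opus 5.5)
There is a genuine gap at the step the whole proof rests on: reducing an arbitrary organization to a mixture of uniform chains with two-level staffing. You argue node by node, but $W_f$ is not separable across nodes. Suppose a node gets a profile with higher fit and the same understanding. Its labor $x_v$ falls, and so does its group's employment weight. If $b_v>B$, mean understanding falls, and at $f=0$ welfare falls whenever $\epsilon(b_v/B-1)>1+Y$ (the same derivative as in the proof of Proposition~\ref{prop:ai-sorting}). So ``higher fit and weakly higher understanding'' need not raise welfare, and your staffing collapse does not follow.

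Your symmetrization does not supply the reduction either. The agenda is permutation invariant, so parallel permuted copies leave every group's understanding unchanged. They reproduce the same $(Y,B,S)$, use no concavity, and make no individual responsibility uniform or profile two-level. Uniformizing a non-uniform responsibility requires changing the tree, and hence the workloads.

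Monotonicity has the same problem. The conditional-Jensen argument of Theorem~\ref{thm:division-understanding} needs each parent's knowledge profile to be the mean of its children's, which fails in general under cross-training. Moving staffing profiles across scopes changes fit, so ``reordering'' is not cost-neutral.

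Your ingredients are the right ones: parallel-branch convexity, two-level staffing, root dominance at $\alpha=t/K$, $\nu_1>0$ for compactness, and a Carath\'eodory count. What is missing is linearization at the optimum.
\begin{itemize}
\item Parallel branches make the closure $\mathcal O$ of outcomes $(Y,B,S)$ convex. The bound $b\ge\nu_1/K$ (not $\beta/(K\ell_1)$) makes it compact.
\item The welfare optimum then maximizes $c^*\cdot(Y,B,S)$ with $c^*=(1+1/Y^*,(\epsilon-f)/B^*,f)$. Here $f\le\epsilon$ makes $h(b)=\lambda b+f\log b$ nondecreasing.
\item At the support value $H$, the residual $\eta+\sum_vx_v[h(b(s_v))-H]$ is separable, and every bracket is negative because $H>h(b_K)$. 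Only here are node-level comparisons valid.
\item Those comparisons are as follows. For any $p$ on $T$, $\beta(s,p)\le\alpha=\sum_{k\in T}s_k$; the dominant required-domain share equals the fit, not $\beta r/K$. Symmetrizing within $T$ and $T^c$ weakly raises understanding. The choice $\alpha=t/K$ loses to the root because $K\ell_t/t>\ell_K$.
\item Combined with the workload-prefix (upward-transport) bound, these give $\sum_jg_j\max_{s\ge j}r_s(H)$. One constructed nested chain attains this bound.
\item Monotonicity comes from comparing per-scope optimizers through $Q_t(b)=[D_t+b_K-b]h'(b)-[H-h(b)]$. Here $D_t=t(b_K-b_t)/(K-t)$ is increasing by discrete convexity of $b_t$.
\end{itemize}

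Your count also needs this structure. Plain Carath\'eodory in $\mathbb R^3$ gives four points. Three points (two at the endpoints) require the face of $\mathcal O$ exposed by $c^*$ to equal the convex hull of structured maximizers. The paper proves this by showing that a compact set of structured hierarchies attains the support function on a neighborhood of $c^*$, and then applies Carath\'eodory within the supporting plane.
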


The planner can attain an optimum with a few parallel component
hierarchies in which responsibilities narrow down each branch but
knowledge can extend beyond the job. Cross-trained workers place at least
as much knowledge in each required domain as in each other domain, while
learning about the latter to improve their policy understanding at some
productive cost
(Figure~\ref{fig:targetable-profiles}).

\begin{figure}[H]
\centering
\includegraphics[width=\linewidth]{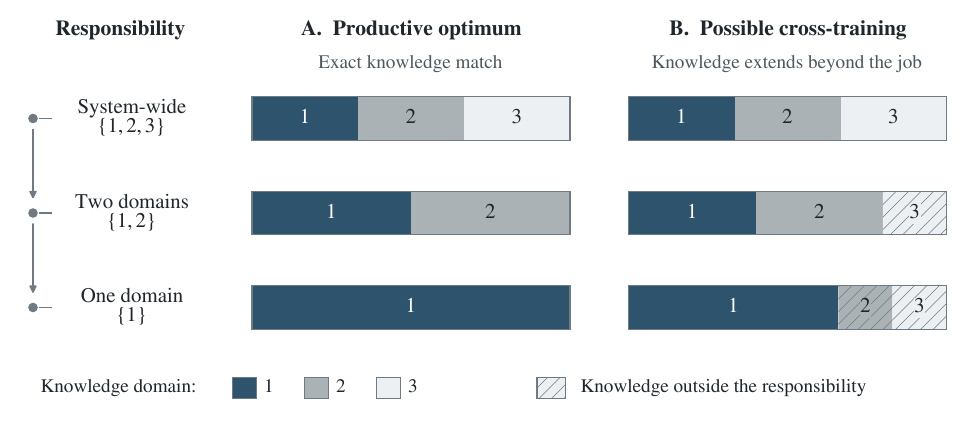}
\caption{Responsibilities and knowledge along a nested branch, from system-wide coordination at the top to specialist work at the bottom. Each bar allocates a worker's unit knowledge stock across domains. Panel A shows exact matching in a labor-minimizing organization. Panel B illustrates cross-training within the bounds of Theorem~\ref{thm:targetable-production}; the shares are illustrative, and the theorem also permits exact matching.}
\label{fig:targetable-profiles}
\end{figure}
\FloatBarrier

Cross-training need not occur at every optimum; Online Appendix OD
gives a three-domain example where the global optimum strictly
cross-trains both nonroot groups. Outside knowledge need not vary
monotonically with scope, although understanding rises weakly within
each hierarchy. 

So far the technology of production has been fixed: organizations and
institutions have varied around it. By changing which work humans
perform, technology shocks change what humans must know, and through it
both accountability and the distribution of public spending, even when
every displaced worker is re-employed. The next section analyzes a
leading case. AI substitutes for human work within responsibilities, and
where firms place it decides which knowledge remains human.

\hypertarget{artificial-intelligence-and-human-responsibility}{%
\section{Technology Shocks: The Case of AI}\label{artificial-intelligence-and-human-responsibility}}

\label{sec:ai}

AI reallocates work within knowledge hierarchies.
\citet{ide-talamas-2025} trace the effects of this reallocation on
output and labor incomes; here it also changes what the remaining human
jobs require, and so voters' expertise. At the port, automating
inspections leaves humans to coordinate; automating coordination leaves
them narrower jobs. Both raise output but move policy understanding in
opposite directions, so firms and the planner may favor different uses
of AI.

\hypertarget{the-use-of-ai-and-human-understanding}{%
\subsection{The Use of AI and Human
Understanding}\label{the-use-of-ai-and-human-understanding}}

AI substitutes for human work within a responsibility. Because it can be
matched to any domain mix, the capacity it needs depends only on scope:
automating one unit of work at scope \(r\) requires
\(a_r^{\mathrm{AI}}>0\) units, with
\(0<a_1^{\mathrm{AI}}\leq\cdots\leq a_n^{\mathrm{AI}}\) and \(n=r(q)\).
Given capacity \(A\) per unit of output, the firm chooses an
organization and the work \(\xi_v\in[0,z_v]\) to automate at each node
so as to minimize the labor left to humans:
\[
\begin{aligned}
\min_{\mathcal A\in\Omega(q),\,\xi}\quad
&\sum_{v\in T}(z_v-\xi_v)a(p_v,s_v)\\[-2pt]
\text{subject to}\quad
&0\leq\xi_v\leq z_v\quad(v\in T),
\qquad
\sum_{v\in T}a_{r(p_v)}^{\mathrm{AI}}\xi_v\leq A.
\end{aligned}
\]
Proposition~\ref{prop:ai-sorting} shows that a global minimum can always be
attained by adopting AI in a productively optimal pre-AI hierarchy, where scope \(r\)
performs \(g_r(q)\) units of work per unit of output with average
understanding \(b_r\), weakly increasing in scope by
Theorem~\ref{thm:division-understanding}.

Writing \(x_r\) for the work automated at scope \(r\), the set of feasible
allocations on this hierarchy is
\[
\mathcal X(A)
=
\left\{
x\in\mathbb R_+^n:
x_r\leq g_r(q)\ \text{for every }r,\quad
\sum_{r=1}^n a_r^{\mathrm{AI}}x_r\leq A
\right\},
\qquad
0<A<\sum_{r=1}^n a_r^{\mathrm{AI}}g_r(q).
\]
One unit of AI capacity at scope \(r\)
releases \[
\rho_r\equiv\frac{\ell_r}{a_r^{\mathrm{AI}}}
\] units of human labor. Thus \(\rho_r\) measures AI’s comparative
advantage at that scope, and the technology is \textit{scope-neutral} when all
ratios are equal. Their ordering is an empirical question: AI’s
productivity gains differ across workers and activities
(\citealt*{brynjolfsson-li-raymond-2025};
\citealt{dellacqua-et-al-2026}).

Displaced workers are re-employed in the remaining positions, exhausting
the unit labor supply, and understanding reflects the knowledge these
jobs require. Since all positions are exactly matched, the firm is
indifferent to which ones within a scope lose work; assume automated
work is drawn from them in proportion to their work, so scope \(r\)
keeps average understanding \(b_r\) (Online Appendix OE allows targeting
within a scope). Human labor per unit of output, output, and average
understanding are therefore

\[
\begin{aligned}
C_H(x)
&=\sum_{r=1}^n\ell_r[g_r(q)-x_r],
&
Y(x)&=\frac{1}{C_H(x)},\\
B(x)
&=\frac{\sum_{r=1}^n\ell_r[g_r(q)-x_r]b_r}
{C_H(x)}.
\end{aligned}
\tag{24}\label{eq:ai-outcomes}
\]

Automation expands output by reducing \(C_H(x)\), while its effect on
understanding depends on the jobs in which humans are re-employed.
Assume positive \(B(x)\) at every feasible allocation. On the fixed hierarchy, the firm maximizes output and the planner
maximizes common-interest welfare, with \(f=0\) and equal voting:

\[
\min_{x\in\mathcal X(A)}\quad C_H(x)
\qquad\text{and}\qquad
\max_{x\in\mathcal X(A)}\quad
\left\{
F\!\left(C_H(x)\right)+\epsilon\log B(x)
\right\},
\quad
F(C)=\frac1C-\log C.
\tag{25}\label{eq:ai-objectives}
\]

\begin{proposition}[AI use and human understanding]\label{prop:ai-sorting}
A globally cost-minimizing AI implementation exists on a productively optimal pre-AI hierarchy. On that hierarchy, the firm assigns capacity in descending order of \(\rho_r\); if these ratios are distinct, at most one scope is partially automated. On that fixed hierarchy, with displacement proportional within each scope, expanding capacity weakly lowers average policy understanding when AI automates broader before narrower responsibilities and weakly raises it when the order reverses. Under scope neutrality, every placement using the same amount of capacity on that hierarchy produces the same output, and the planner assigns capacity first to the scopes with the lowest \(b_r\).
\end{proposition}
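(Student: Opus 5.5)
The plan has four parts: reduce the general automation problem to a fixed hierarchy, solve a knapsack problem on it, track the re-employment composition, and handle the scope-neutral planner problem.

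\emph{Reduction to a productively optimal hierarchy.} For any feasible organization and automation vector $\xi$, I group nodes by scope and apply the cut/frontier argument behind Theorem~\ref{thm:efficient-organization}. In any coherent tree, the work that must be done at scope at least $r$ is bounded below by the frontier quantities $\Phi_{r-1}(q)$. Human labor then satisfies $\sum_v (z_v-\xi_v)a(p_v,s_v)\ge \sum_v(z_v-\xi_v)\ell_{r(p_v)}$, since $\beta\le 1$. The cheapest way to spend capacity on an arbitrary tree is to remove work where $\ell_r/a^{\mathrm{AI}}_r$ is largest. So the minimum is a concave, knapsack-type value of the scope workload vector. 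This value is monotone: moving work from a broader to a narrower scope lowers $\ell_r$ and weakly lowers $a^{\mathrm{AI}}_r$, and both effects reduce residual labor. The productively optimal hierarchy attains every frontier at once. I will therefore argue that its workload vector $g(q)$ is the minimal element in the appropriate majorization order: any feasible tree's workload can be obtained from $g(q)$ by shifting work to broader scopes. Shifting work upward raises both $\ell$ and $a^{\mathrm{AI}}$, so it never helps. I expect this step to be the main obstacle, specifically showing that automation cannot exploit a non-optimal tree. I would first try a direct exchange argument: take any scope-$r$ work on a non-optimal tree, and show that re-delegating it to scope $r-1$ weakly lowers both residual labor and the AI requirement per unit.

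\emph{Firm's allocation on the fixed hierarchy.} On the hierarchy the problem is the linear program $\max \sum_r \ell_r x_r$ subject to $x\in\mathcal X(A)$. In terms of capacity $y_r=a^{\mathrm{AI}}_r x_r$, this becomes $\max\sum_r\rho_r y_r$ with box constraints. This is a continuous knapsack, so greedy filling in descending $\rho_r$ is optimal. With distinct ratios, at most one scope is partially filled, because two partial scopes would allow a profitable capacity transfer.

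\emph{Understanding along the greedy path.} When the automated scope $r$ absorbs marginal capacity, $C_H$ falls by $\ell_r\,dx_r$ and labor-weighted understanding falls by $\ell_r b_r\,dx_r$. Differentiating the ratio in \eqref{eq:ai-outcomes} gives
\[
dB \propto (B-b_r)\,dx_r .
\]
Suppose the order is broad to narrow. When scope $r$ is being automated, the remaining human labor lies in scopes $\le r$, along with untouched scopes above only if the order skips them, which it does not under descending scope. Hence $B\le b_r$ by Theorem~\ref{thm:division-understanding}, and $B$ weakly falls. The reverse order gives $B\ge b_r$, so $B$ weakly rises. Integrating over capacity gives monotonicity.

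\emph{Scope neutrality.} Under neutrality, $\ell_r=\rho a^{\mathrm{AI}}_r$, so $C_H(x)=C^M-\rho\sum_r a^{\mathrm{AI}}_r x_r$. Output therefore depends only on capacity used. Using all capacity is optimal for the output term, and it is also optimal for the understanding term: remaining capacity can always be placed on the lowest-$b_r$ scope with slack, which lowers $C_H$ and weakly raises $B$ whenever $b_r\le B$. Given the labor level, the planner maximizes $B$. That amounts to minimizing $\sum_r\ell_r x_r b_r=\rho\sum_r a^{\mathrm{AI}}_r x_r b_r$ for fixed total capacity, which is again a knapsack solved by filling scopes in ascending $b_r$.
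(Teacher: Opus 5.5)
Your treatment of the fixed hierarchy matches the paper's: the continuous knapsack in $\rho_r$; the sign of $dB\propto(B-b_r)\,dx_r$ under each ordering, using $b_1\le\cdots\le b_n$ and continuity across exhaustion points; and, at neutrality, exhaustion of capacity followed by minimizing $\sum_r b_rA_r$. The real difference is the global step.

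The paper prices capacity with a multiplier $\lambda$, so each unit of scope-$r$ work costs at least $c_r(\lambda)=\min\{\ell_r,\lambda a_r^{\mathrm{AI}}\}$, which is nondecreasing in $r$. After refining trees to single-domain terminals, summation by parts against the cut bounds $T_h\ge\sum_{r\ge h}g_r(q)$ gives $\sum_r c_rZ_r\ge\sum_r c_rg_r$. LP strong duality on the hierarchy then restores the capacity constraint.

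You stay primal instead. You bound human labor by the knapsack value $V(Z)$ of the scope-workload vector, show that $V$ weakly falls when work shifts to narrower scopes, and use the fact that every tree's workload is an upward transport of $g(q)$. This works. The step you flag as the main obstacle is already closed by these two ingredients, working only with workload vectors, so you do not need any tree-level re-delegation. The tail bounds plus equal totals give a transport $x_{tr}$, $t\le r$, by filling destinations in increasing order, as in the paper's workload-prefix lemma. Undoing each $x_{tr}$ is then a downward shift.

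You still need to supply two details:
\begin{itemize}
\item \emph{Monotonicity of $V$.} Suppose $\delta$ units move from scope $r'$ to $r<r'$, and $e\le\delta$ units of automated work no longer fit at $r'$. Re-automate those $e$ units at $r$. This costs $a_r^{\mathrm{AI}}e\le a_{r'}^{\mathrm{AI}}e$ capacity and fits, because availability at $r$ rose by $\delta$. Residual labor changes by $(\delta-e)(\ell_r-\ell_{r'})\le0$.
\item \emph{Broad terminals.} The cut bound of Theorem~\ref{thm:efficient-organization} is proved only for single-domain terminals. So first split each broad terminal into its interface work and pure-domain execution; this is itself a downward shift, so your monotonicity covers it.
\end{itemize}
Also, $V$ is convex in $Z$, not concave, though nothing in your argument uses this.

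The paper's route needs only the tail inequalities and a single linear bound valid for all organizations. Yours avoids duality and makes the intuition transparent: narrower work is cheaper both for humans and for AI. The cost is the explicit transport construction and a nonlinear exchange lemma.
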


The firm places AI where it releases the most labor. Automating the
broadest responsibilities first shifts human employment toward narrow
jobs, while automating component work shifts it toward coordinators.
Under scope neutrality, output is the same either way, and the planner
can increase the political knowledge in society at no cost. Which tasks remain
with labor is the question studied by \citet{acemoglu-restrepo-2019};
here those tasks also fix the expertise workers acquire, so automation
reallocates the knowledge available to voters even when every displaced
worker is re-employed.

\hypertarget{when-the-firm-and-the-planner-use-ai-differently}{%
\subsection{When the Firm and the Planner Use AI
Differently}\label{when-the-firm-and-the-planner-use-ai-differently}}

Keeping broad responsibilities human can remain socially preferable even
when AI saves more labor at broad scopes. Let
\(\rho_r(\zeta)=\rho[1+\zeta\delta_r]\) and
\(a_r^{\mathrm{AI}}(\zeta)=\ell_r/\rho_r(\zeta)\), where \(\rho>0\)
and \(0=\delta_1<\cdots<\delta_n=1\). At \(\zeta=0\), labor savings
are scope-neutral; a positive \(\zeta\) gives firms a strict incentive
to automate broader work first.

\begin{theorem}[A reversal of the firm's AI allocation]\label{thm:ai-reversal}
Suppose \(b_1<b_n\), every feasible allocation leaves positive human labor and policy understanding, and \(A\leq a_r^{\mathrm{AI}}(\zeta)g_r(q)\) for \(r\in\{1,n\}\) and every \(\zeta\) in a neighborhood of zero. For every sufficiently small \(\zeta>0\), the firm has a globally cost-minimizing allocation on a productively optimal hierarchy that assigns all capacity to the broadest scope \(n\); assigning it to scope \(1\) on the same hierarchy yields strictly higher common-interest welfare. If \(b_1<b_r\) for every \(2\leq r\leq n\), then, holding the pre-AI hierarchy fixed and maintaining proportional displacement within each scope, the planner assigns all capacity to scope \(1\) for every sufficiently small \(\zeta>0\).
\end{theorem}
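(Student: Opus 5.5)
Under \(\rho_r(\zeta)=\rho[1+\zeta\delta_r]\), the ratios are strictly increasing in \(r\) for every \(\zeta>0\). So the firm's problem reduces to a small linear program on the fixed hierarchy, and I would proceed in three steps.

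\textbf{Step 1: the firm's allocation.} On a productively optimal pre-AI hierarchy the firm minimizes \(C_H(x)=\sum_r\ell_r g_r-\sum_r\ell_r x_r\). Each unit of capacity spent at scope \(r\) lowers \(C_H\) by \(\rho_r(\zeta)\). Proposition~\ref{prop:ai-sorting} says the firm fills scopes in descending order of \(\rho_r\), and that a cost-minimizing implementation on such a hierarchy is globally optimal. Descending order starts at scope \(n\). The hypothesis \(A\le a_n^{\mathrm{AI}}(\zeta)g_n\) means the whole capacity fits at scope \(n\), so the allocation \(x^n=(0,\dots,0,A/a_n^{\mathrm{AI}}(\zeta))\) is feasible and cost-minimizing. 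Likewise \(A\le a_1^{\mathrm{AI}}(\zeta)g_1\) makes \(x^1=(A/a_1^{\mathrm{AI}}(\zeta),0,\dots,0)\) feasible.

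\textbf{Step 2: welfare comparison by continuity.} At \(\zeta=0\) both allocations release the same labor \(\rho A\), so \(C_H(x^1)=C_H(x^n)\equiv C_0\). Their understanding levels are
\[
B(x^r)=\frac{\sum_s\ell_s g_s b_s-\rho A\,b_r}{C_0}, \qquad r\in\{1,n\}.
\]
Because \(b_1<b_n\), we get \(B(x^1)-B(x^n)=\rho A(b_n-b_1)/C_0>0\). Hence the welfare \(F(C_H)+\epsilon\log B\) is strictly higher under \(x^1\) at \(\zeta=0\). Both \(C_H(x^r)\) and \(B(x^r)\) are continuous in \(\zeta\), since \(a_r^{\mathrm{AI}}(\zeta)\) is, and \(C_H,B\) stay positive by assumption. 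So the strict inequality persists for small \(\zeta>0\), which gives the first claim. The limit is the only real content here; everything else is bookkeeping.

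\textbf{Step 3: the planner's choice.} This is the main obstacle. I would study the planner's problem \(\max_{x\in\mathcal X(A)}\{F(C_H(x))+\epsilon\log B(x)\}\) and show that \(x^1\) is its unique maximizer at \(\zeta=0\) with a uniform margin, then use upper-hemicontinuity of the argmax (Berge) plus a local first-order argument.

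At \(\zeta=0\), \(C_H\) depends only on total capacity used, and using all capacity is optimal. Full use lowers \(C_H\); \(F\) is decreasing since \(F'(C)=-1/C^2-1/C<0\); and adding capacity at scope 1 also raises \(B\), because \(b_1\le B\). On the capacity-exhausting face, \(C_H\) is fixed, and the planner minimizes \(\sum_r\ell_r x_r b_r=\rho\sum_r a_r^{\mathrm{AI}}x_r b_r\). Because \(b_1<b_r\) for all \(r\ge2\), this is uniquely minimized by putting all capacity at scope 1, which is feasible. So \(x^1\) is the unique maximizer at \(\zeta=0\).

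For small \(\zeta\), Berge's theorem puts maximizers near \(x^1\). To show they equal \(x^1(\zeta)\) exactly, I would parametrize by capacity shares \(\theta_r=a_r^{\mathrm{AI}}x_r\). The directional derivative of the objective from shifting capacity from scope 1 to scope \(r\) is, at \(\zeta=0\), strictly negative. It is proportional to \(-\epsilon\rho(b_r-b_1)/(C_0B)\), with no output term. Being continuous in \((\zeta,x)\), it stays negative on a neighborhood. Using slack capacity is likewise strictly beneficial nearby. Since the objective is concave-like locally, or simply since these strict directional derivatives hold on a neighborhood of \(x^1\), the maximizer is the vertex \(x^1(\zeta)\). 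The delicate points are the uniformity of these sign conditions and the edge case where \(A=a_1^{\mathrm{AI}}g_1\) binds the box constraint; the hypotheses holding on a neighborhood of \(\zeta=0\) handle the latter.
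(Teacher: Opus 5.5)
Your Steps 1 and 2 follow the paper's argument. The firm fills scope $n$ because $\rho_n(\zeta)$ is uniquely largest, and Proposition~\ref{prop:ai-sorting} makes this a global cost minimum; the endpoint comparison is the neutral gap $\epsilon\log[(N^0-\rho Ab_1)/(N^0-\rho Ab_n)]>0$, extended to small $\zeta$ by continuity. You skip one check that the paper makes. The proposition rests on the maintained ordering $a_1^{\mathrm{AI}}\le\cdots\le a_n^{\mathrm{AI}}$, so you should note that $a_r^{\mathrm{AI}}(\zeta)=\ell_r/\rho_r(\zeta)$ is strictly increasing in $r$ at $\zeta=0$ and therefore also for small $\zeta$.

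Your Step 3 takes a genuinely different route. The paper does not localize with Berge. It proves one uniform estimate. For any full-capacity vector placing $A_{>1}$ units outside scope $1$, the labor saving relative to the scope-$1$ allocation is at most $\rho\zeta A_{>1}$. The loss in the understanding numerator is at least $\rho\Delta_bA_{>1}$, with $\Delta_b=\min_{r\ge2}(b_r-b_1)$. Bounding $\partial W/\partial C$ and $\partial W/\partial N$ on a rectangle where $C_H,N_H$ stay uniformly away from zero then gives $W(A')-W_1\le\rho A_{>1}(L_C\zeta-\epsilon\Delta_b/N^0)$, which is negative for all such vectors at once.

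Your route is also valid. It runs as follows: unique maximizer at $\zeta=0$, then upper hemicontinuity of the argmax, then strict first-order improvements at every non-vertex point near $Ae_1$. Those improvements are to raise $A_1$ when capacity is slack, using $b_1\le B$, and otherwise to shift capacity from some $r\ge2$ to scope $1$. The $\zeta=0$ step is especially clean because $C_H$ is constant on the full-capacity face.

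What your route gives up is an explicit threshold on $\zeta$, and it adds bookkeeping the paper avoids. You need continuity of the feasible capacity correspondence in $\zeta$, continuity of welfare on its graph (this is where the positivity hypothesis enters), and one neighborhood valid for all finitely many shift directions. The paper's single inequality covers allocations far from the vertex and those with arbitrarily little off-scope capacity simultaneously.

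Finally, drop the ``concave-like'' justification. Welfare is not concave in capacity, since $1/C_H$ and $-(1+\epsilon)\log C_H$ are convex in the affine $C_H$. The strict-directional-derivative argument is what carries the step.
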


When \(\zeta\) is small, automating broad rather than narrow work gains
the firm little output, while the shift of human employment toward
narrow jobs lowers understanding by an amount that does not vanish with
\(\zeta\). For fixed \(A\) and \(\epsilon>0\), automating narrow work on
the same hierarchy therefore yields higher welfare. The advantage at
scope neutrality vanishes as \(A\) or \(\epsilon\) goes to zero; Online
Appendix OE gives its limit and conditions under which further
automation reduces effective public services or welfare. AI adoption
thus affects accountability through the knowledge that human work
continues to require.

\hypertarget{conclusion}{%
\section{Conclusion}\label{conclusion}}

\label{sec:conclusion}

How firms divide responsibilities determines which parts of an
interdependent production system workers must understand and who must
learn how the parts fit together. Workers bring this knowledge to the
ballot, where its value depends on what policy demands, so the division
of labor shapes accountability before any campaign begins. Wages price
knowledge for production, not policy evaluation; markets therefore
over-specialize when cross-domain issues matter enough, and most workers
become, as Smith feared, ``incapable of judging'' cross-domain policies.

Political institutions are no free remedy: authority for the
knowledgeable can worsen inequalities, and conflicts of interest can
block knowledge sharing. Hence the civic case for liberal education
\citep{nussbaum-2010,zakaria-2015}: learning beyond one's occupation
equips citizens to judge consequences their work gives them no reason to
understand, a value wages ignore. Broad education thus complements
specialized jobs, freeing citizens from dependence on such groups.

The theory suggests that the effect of technological change on democracy depends on which knowledge it substitutes or complements: in the case of AI, automating specialized
tasks shifts employment toward coordinators and raises understanding,
automating coordination lowers it. The direction markets reward need not be right one.

Open questions remain at each link of the mechanism. The first is
whether occupational knowledge has civic returns, as schooling does
\citep{dee-2004,milligan-moretti-oreopoulos-2004}, coming from what jobs require rather
than whom they select: when reorganization or automation changes a job's
responsibilities but not the worker, do cross-domain judgment and the
responsiveness of votes to policy quality change with them? Upstream, large firms and organized occupations could internalize part of the externality, narrowing or widening it; in the long run, the
question is how occupational knowledge adjusts as work and policy change
and whether education and public information compensate. None of this
alters the premise that the knowledge citizens bring to the ballot is
produced by how the economy organizes work; whether a democracy keeps
the gains from specialization while equipping citizens to evaluate
policy is decided where production is organized.

\clearpage

\begin{spacing}{1.15}
\renewcommand{\bibsection}{%
  \hypertarget{references}{\section*{References}\label{references}}%
  \addcontentsline{toc}{section}{References}}
\bibliographystyle{apalike}
\bibliography{references}
\end{spacing}

\clearpage
\appendix
\numberwithin{equation}{section}
\counterwithin{lemma}{section}

\hypertarget{production-and-understanding}{%
\section{Production and
Understanding}\label{production-and-understanding}}

\label{app:proofs-production}

We first solve the scope-frontier problem and construct a tree attaining
all frontiers. We then characterize the minimum-handoff selection and
use its employment weights to compare understanding across scopes.

\hypertarget{proof-of-theorem}{%
\subsection{\texorpdfstring{Proof of Theorem
\ref{thm:efficient-organization}}{Proof of Theorem }}\label{proof-of-theorem}}

\label{app:proof-efficient}

Write \(I=\operatorname{supp}q\) and \(n=|I|\geq2\). Since
\(P\geq0\) and \(\mathbb E[P]=q\) imply \(P_i=0\) almost
surely outside \(I\), all coordinate calculations below are on \(I\).
For \(1\leq r\leq n\), the scope-\(r\) frontier is
\begin{equation}
\begin{aligned}
\Phi_r(q)
=\min_P\quad & \mathbb E\lVert P\rVert_2^2\\
\text{subject to}\quad
&\mathbb E[P]=q,\\
&|\operatorname{supp}P|\leq r\quad\text{almost surely}.
\end{aligned}
\label{eq:proof-scope-frontier}
\end{equation}
Here \(P\) ranges over random profiles in \(\Delta_K\). By coherence,
any cut \(\mathcal C\) satisfies \(\sum_{v\in\mathcal C}\omega_v=1\)
and \(\sum_{v\in\mathcal C}\omega_vp_v=q\). Thus selecting
\(p_v\) with probability \(\omega_v\) gives the mean restriction
in \eqref{eq:proof-scope-frontier}. Since \(d(P)=1-\|P\|_2^2\),
the greatest interface load that can remain at scope \(r\) or below
is \(1-\Phi_r(q)\). We also use work conservation:
summing the model's node-work identities over the tree cancels all
nonroot interface loads and gives \(\sum_vz_v=1+d(q)\).

\begin{lemma}[Scope frontiers and nested attainment]
\label{lem:proof-scope-frontiers}
For \(1<r<n\), let \(\theta_r>0\) be the unique solution of
\[
\sum_{i\in I}\min\{1,q_i/\theta_r\}=r.
\]
Set \(\theta_1=1\) and
\(\theta_n=\min_{i\in I}q_i\). At every scope, a frontier
minimizer \(P\) satisfies
\begin{equation}
\pi_i^r
=
\Pr(P_i>0)
=
\min\left\{1,\frac{q_i}{\theta_r}\right\},
\qquad
\sum_{i\in I}\pi_i^r=r.
\label{eq:proof-frontier-inclusion}
\end{equation}
Every frontier-attaining random profile
has exactly \(r\) active coordinates almost surely. Conditional on
\(P_i>0\), its coordinate \(i\) is almost surely equal to
\begin{equation}
 w_i^r=\frac{q_i}{\pi_i^r}=\max\{q_i,\theta_r\},
\qquad
\Phi_r(q)=\sum_{i\in I}q_iw_i^r.
\label{eq:proof-frontier-weights}
\end{equation}
Every frontier minimizer satisfies \(\|P\|_2^2=\Phi_r(q)\)
almost surely. A finite coherent tree
attains the frontiers successively from scope \(n\) to scope \(1\),
with aggregate workloads
\begin{equation}
g_1(q)=1,
\qquad
g_r(q)=\Phi_{r-1}(q)-\Phi_r(q)
\quad (2\leq r\leq n).
\label{eq:proof-frontier-increments}
\end{equation}
\end{lemma}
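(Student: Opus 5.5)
The proof has two parts. First I solve the frontier problem \eqref{eq:proof-scope-frontier} by a Lagrangian/duality argument. Then I build a single tree that attains all the frontiers at once.

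\emph{Frontier.} For a random profile \(P\), write \(S=\operatorname{supp}P\), with \(|S|\leq r\). For fixed support \(S\) and fixed mean, \(\|P\|_2^2\) is smallest when the mass is spread as evenly as the dual prices allow. I introduce a multiplier vector \(\mu\in\mathbb R^I\) for the constraint \(\mathbb E[P]=q\). Pointwise minimization then reduces to
\[
\psi(\mu)=\min_{p\in\Delta_K,\ r(p)\leq r}\bigl\{\|p\|_2^2-\mu\cdot p\bigr\},
\]
and weak duality gives \(\Phi_r(q)\geq \mu\cdot q+\psi(\mu)\).

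I choose \(\mu_i=2\max\{q_i,\theta_r\}=2w_i^r\) and verify two facts. Every \(p\) supported on \(r\) coordinates with \(p_i=w_i^r\) on its support is a minimizer of the inner problem. Every minimizer has this form and exactly \(r\) active coordinates, by a KKT check that uses \(\sum_i\min\{1,q_i/\theta_r\}=r\).

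For primal feasibility I need a random \(r\)-subset \(S\) with inclusion probabilities \(\pi_i^r=\min\{1,q_i/\theta_r\}\). These sum to \(r\) and lie in \([0,1]\), so such a subset exists by the standard systematic-sampling construction (in the spirit of the unbiased-sparsification literature cited). I then set \(P_i=w_i^r\) on \(S\). This gives \(\mathbb E P_i=\pi_i^rw_i^r=q_i\). It also gives \(\sum_{i\in S}w_i^r=1\), because \(\theta_r\) is exactly the level at which the coordinates outside the capped set fill the remaining mass.

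Complementary slackness then forces every minimizer to be supported almost surely on inner-problem minimizers. This yields all the stated properties: exactly \(r\) active coordinates, the value \(w_i^r\) on coordinate \(i\) whenever it is active, \(\|P\|_2^2=\Phi_r\) almost surely, and \(\Phi_r=\sum_iq_iw_i^r\). The endpoint cases \(r=1\) and \(r=n\) are checked directly.

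\emph{Nested attainment.} I construct the tree top-down. The root is \(q\), which is the scope-\(n\) frontier, since \(w^n=q\). Given a cut whose nodes realize the scope-\(r\) optimum, I split each node \(p\) (uniform-type, with weights \(w^r\) on an \(r\)-set \(S\)) into children that drop one coordinate each. The aim is that the new cut realizes the scope-\((r-1)\) optimum.

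The key step is a martingale coupling. I need the law of the scope-\((r-1)\) optimizer to be a mean-preserving contraction along supports, with each optimal \(r\)-profile equal to a convex combination of optimal \((r-1)\)-profiles supported on subsets of \(S\). I would prove this by showing that the conditional problem given \(S\) is itself a frontier problem for the profile \(w^r|_S\). Its scope-\((r-1)\) solution has weights \(\max\{w^r_i,\theta'\}\), and I would check that these coincide with \(w^{r-1}_i\). This uses the monotonicity \(\theta_{r-1}\geq\theta_r\) together with consistency of the inclusion probabilities: I aggregate conditional inclusion probabilities over the law of \(S\) and recover \(\pi^{r-1}\). To make this work I would pick the conditional sampling designs so that the marginals match. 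Each layer is a cut, so coherence holds.

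Work conservation then gives the workloads. The scope-\(r\) layer removes interface load
\[
(1-\Phi_r)-(1-\Phi_{r-1})=\Phi_{r-1}-\Phi_r,
\]
and the terminal layer contributes the execution work \(g_1=1\). Finiteness of the tree follows because every design has finite support.

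I expect the main obstacle to be this nesting step. A frontier optimizer at scope \(r-1\) always exists, but showing that its law can be chosen so that it is reachable by splitting each scope-\(r\) atom, while keeping the global inclusion probabilities \(\pi^{r-1}\), needs a careful construction, such as a flow or a Hall-type argument on the bipartite support graph.
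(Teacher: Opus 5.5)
\textbf{Frontier part: correct, by a different route.} Your characterization of the frontier is correct but argued differently from the paper. The paper uses conditional Jensen to get \(\mathbb E[P_i^2]\geq q_i^2/\pi_i\). It then solves the resulting convex program in the inclusion probabilities by KKT, rounds that solution to fixed-size supports, and reads the almost-sure characterization off the equality cases. You instead use a single dual certificate with \(\mu=2w^r\). On any support \(S'\) with \(|S'|\leq r\),
\[
\|p\|_2^2-2w^r\cdot p=\sum_{i\in S'}(p_i-w_i^r)^2-\sum_{i\in S'}(w_i^r)^2 ,
\]
and this is minimized exactly when \(S'\) collects \(r\) largest coordinates of \(w^r\) and \(p=w^r\) on \(S'\). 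You should also price the coordinates outside \(I\), for example with \(\mu_i=0\), so that they create no spurious minimizers. The paper's relaxed program has one extra payoff: with a noninteger budget it yields the envelope \(V_q'(z)=-\theta_q(z)^2\), which is used later to order the workloads strictly. For the lemma itself, your certificate is shorter.

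\textbf{Nesting step: a gap.} You leave this step open and expect it to need a flow or Hall-type argument, so that the conditional designs reproduce the global marginals \(\pi^{r-1}\). No such argument is needed, because there is nothing to choose and nothing to match. (Also, the children form a mean-preserving spread of the parent, not a contraction.)

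A child of a scope-\(r\) node with support \(S\supseteq F_r=\{i:q_i\geq\theta_r\}\) drops exactly one coordinate. The conditional design is therefore fully determined by its deletion probabilities, and your conditional-frontier idea forces them to be \(\delta_i^r(S)=1-w_i^r/w_i^{r-1}\). You still have to carry out the checks you only announce.
\begin{itemize}
\item Since \(\theta_{r-1}\geq\theta_r\), we have \(w_i^{r-1}\geq w_i^r>0\), so \(\delta_i^r(S)\in[0,1)\). Moreover \(\delta_i^r(S)=0\) on \(F_{r-1}\subseteq F_r\subseteq S\).
\item The threshold equation at scope \(r-1\) gives
\[
\sum_{i\in S}\frac{w_i^r}{w_i^{r-1}}=|F_{r-1}|+\frac{1-\sum_{i\in F_{r-1}}q_i}{\theta_{r-1}}=r-1 .
\]
So the deletion probabilities sum to one. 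This is exactly the computation showing that the conditional threshold for \(w^r|_S\) is \(\theta_{r-1}\).
\item Each positive-probability child carries \(w^{r-1}\) on \(S\setminus\{i\}\supseteq F_{r-1}\), so it sums to one. The expected child coordinate is \((1-\delta_k^r(S))w_k^{r-1}=w_k^r\). Coherence comes from this mean preservation, not from the layer being a cut.
\end{itemize}

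Attainment of the scope-\((r-1)\) frontier is then automatic, with no global matching. By the tower property the new layer has mean \(q\). Each active coordinate \(i\) takes the constant value \(w_i^{r-1}\), so \(\Pr(P_i>0)=q_i/w_i^{r-1}=\pi_i^{r-1}\) and \(\mathbb E\|P\|_2^2=\sum_iq_iw_i^{r-1}=\Phi_{r-1}(q)\). This explicit deletion rule is the paper's construction. Without it, the nested-attainment claim, and with it the workload formula \(g_r=\Phi_{r-1}-\Phi_r\), remain unproved in your proposal.
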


\bgroup \par\noindent\textbf{Proof.}\ 
We first derive and solve a lower-bound problem in inclusion probabilities.
We then implement its solution by finitely many profiles and join those
profiles across scopes into one coherent tree.

\emph{Step 1. Derive a lower bound on concentration.}
For a feasible \(P\), let \(\pi_i=\Pr(P_i>0)\). The mean
restriction gives \(q_i=\pi_i\mathbb E[P_i\mid P_i>0]\), so
\(\pi_i>0\) and \(\mathbb E[P_i\mid P_i>0]=q_i/\pi_i\).
Conditional Jensen's inequality therefore yields
\begin{equation}
\mathbb E[P_i^2] = \pi_i\mathbb E[P_i^2\mid P_i>0]
\geq
\pi_i\left(\frac{q_i}{\pi_i}\right)^2
=\frac{q_i^2}{\pi_i}.
\label{eq:proof-conditional-jensen}
\end{equation}
Equality holds if and only if \(P_i=q_i/\pi_i\) almost surely
conditional on \(P_i>0\). Moreover, \(0\leq P_i\leq1\) implies \(q_i\leq\pi_i\leq1\),
and the bound on scope implies
\[
\sum_{i\in I}\pi_i
=\mathbb E\left[\sum_{i\in I}\mathbf 1_{\{P_i>0\}}\right]
=\mathbb E|\operatorname{supp}P|\leq r.
\]
Summing \eqref{eq:proof-conditional-jensen} and imposing only these
marginal restrictions gives a lower bound; joint implementation by
profiles remains to be shown. The sum constraint binds, since if
\(\sum_i\pi_i<r\), some \(\pi_i<1\) can be increased to lower
the objective. Thus the relaxed problem is
\begin{equation}
\begin{aligned}
\min_{(\pi_i)_{i\in I}}\quad
&\sum_{i\in I}\frac{q_i^2}{\pi_i}\\
\text{subject to}\quad
&q_i\leq\pi_i\leq1\quad\text{for every }i\in I,\\
&\sum_{i\in I}\pi_i=r.
\end{aligned}
\label{eq:proof-inclusion-problem}
\end{equation}
Its feasible set is nonempty and compact: for example,
\(\pi_i=q_i+[(r-1)/(n-1)](1-q_i)\) is feasible. Strict convexity
of the objective gives a unique minimizer.

\emph{Step 2. Solve for the minimizing probabilities \(\pi_i\).}
At \(r=1\), the bounds force \(\pi_i=q_i\); at \(r=n\),
they force \(\pi_i=1\). At the latter endpoint any
\(0<\theta\leq\min_iq_i\) gives this vector; the lemma fixes
one convention. For \(1<r<n\), the function
\(F(\theta)=\sum_i\min\{1,q_i/\theta\}\) decreases continuously
and strictly from \(n\) to \(1\) on \([\min_iq_i,1]\), since
at least one term strictly decreases for \(\theta>\min_iq_i\).
Hence there is a unique \(\theta_r\) with \(F(\theta_r)=r\).
The vector \(\pi_i=\min\{1,q_i/\theta_r\}\) satisfies the
convex problem's optimality conditions. Its lower bounds are slack;
the multiplier on \(\sum_i\pi_i=r\) is \(\theta_r^2\), and
the upper-bound multiplier is \(0\) when \(\pi_i<1\) and
\(q_i^2-\theta_r^2\geq0\) when \(\pi_i=1\).
These values satisfy stationarity and complementary slackness, proving
optimality and \eqref{eq:proof-frontier-inclusion}.

\emph{Step 3. Attain the bound and characterize equality.}
Let \(F_r=\{i\in I:q_i\geq\theta_r\}\); these coordinates
are included with probability one. The remaining probabilities
\(q_i/\theta_r\) lie in \([0,1]\) and sum to the integer
\(r-|F_r|\). They are therefore a finite mixture of zero-one
vectors with that sum. To verify this, any nonintegral vector with
integer sum has two fractional coordinates. Moving mass between them
in either direction until one reaches a boundary expresses it as a
mixture of two vectors with fewer fractional coordinates. Iteration
terminates at zero-one vectors, which specify the desired subsets.
Choose such subsets, include \(F_r\), and assign the positive
weights \(w_i^r\). Equation \eqref{eq:proof-frontier-inclusion} gives
\(\sum_{i\in F_r}q_i+(r-|F_r|)\theta_r=1\),
so every profile has exactly \(r\) positive coordinates and sums to one.
Its coordinate means are \(\mathbb E[P_i]=\pi_i^rw_i^r=q_i\), and its positive weights are
constant conditional on inclusion. It attains every Jensen bound,
proving \eqref{eq:proof-frontier-weights}.

Conversely, every minimizer must attain the unique relaxed minimum
and every Jensen bound. Its support has size at most \(r\) and
expected size \(r\), hence size exactly \(r\) almost surely.
It includes \(F_r\) and has the stated positive weights, so
\[
\|P\|_2^2=\sum_{i\in F_r}q_i^2+(r-|F_r|)\theta_r^2
=\sum_iq_iw_i^r=\Phi_r(q)
\quad\text{almost surely}.
\]
In particular, \(\Phi_1(q)=1\) and \(\Phi_n(q)=\|q\|_2^2\).

\emph{Step 4. Construct a coherent tree attaining every frontier.}
Take any \(r\)-element support \(S\supseteq F_r\).
Its frontier profile puts
weight \(w_i^r\) on each \(i\in S\) and zero elsewhere.
For \(2\leq r\leq n\), given such a support, delete coordinate
\(i\in S\) with probability
\begin{equation}
\delta_i^r(S)
=1-\frac{w_i^r}{w_i^{r-1}}.
\label{eq:proof-deletion}
\end{equation}
The thresholds strictly decrease with scope, so
\(w_i^{r-1}\geq w_i^r>0\) and \(0\leq\delta_i^r(S)<1\).
Also \(F_{r-1}\subseteq F_r\subseteq S\), and coordinates in
\(F_{r-1}\) have equal weights at both scopes and cannot be deleted.
For the others, \(w_i^{r-1}=\theta_{r-1}\). Thus
\[
\sum_{i\in S}\frac{w_i^r}{w_i^{r-1}}
=|F_{r-1}|+
\frac{1-\sum_{i\in F_{r-1}}q_i}{\theta_{r-1}}=r-1,
\]
where the last equality uses \eqref{eq:proof-frontier-inclusion}
at scope \(r-1\).
Since \(|S|=r\), the deletion probabilities sum to one.
A positive-probability deletion leaves \(r-1\) coordinates containing
\(F_{r-1}\). Assign them the weights \(w^{r-1}\). For
each \(k\in S\), expected child weight is
\((1-\delta_k^r(S))w_k^{r-1}=w_k^r\).
Hence the mean child profile is the parent profile. Giving the child
after deletion of \(i\) scale \(\omega_v\delta_i^r(S)\), and
omitting zero-probability children, verifies coherence. Iterating from
\(q\) to scope one gives a finite tree, with depth \(n-1\) and
at most \(r\) children per scope-\(r\) node. Each layer attains
its frontier by the weights just established.

Every scope-\(r\) parent has squared norm \(\Phi_r(q)\), and all
its children have squared norm \(\Phi_{r-1}(q)\). For \(r\geq2\),
the parent's interface work is therefore
\begin{equation}
z_v
=\omega_v[(1-\Phi_r(q))-(1-\Phi_{r-1}(q))]
=\omega_vg_r(q).
\label{eq:proof-proportional-work}
\end{equation}
Layer scales sum to one, so aggregate work is \(g_r(q)\).
At scope one, \(d(p)=0\) and terminal work is \(z_v=\omega_v\),
giving \(g_1(q)=1\).
\hfill\rule{0.5em}{0.5em}\par\egroup

\medskip
\bgroup \par\noindent\textbf{Proof of Theorem \ref{thm:efficient-organization}.}\ 
We first establish exact matching and single-domain execution. We then
use the frontier lemma to identify the unique cost-minimizing workloads,
characterize the minimum-handoff selection, and prove the strict ordering
of workloads by scope.

\emph{Step 1. Establish exact matching and single-domain execution.}
For any responsibility and knowledge profiles, fit satisfies
\begin{equation}
\beta(s,p)
\leq
\sum_{k:p_k>0}p_k\frac{s_k}{p_k}
=\sum_{k:p_k>0}s_k
\leq1.
\label{eq:proof-exact-match}
\end{equation}
Equality requires all active ratios to be one and no knowledge outside
the responsibility's support, hence \(s=p\). Assigning each node
its own matched position attains the lower bound
\(\ell_{r(p_v)}z_v\) at every node. Since positions can be split
freely, every labor minimizer must attain it wherever \(z_v>0\).
A matched terminal of scale \(\omega\) and scope \(r>1\) costs
\(\omega\ell_r[1+d(p)]\). Instead, let it resolve its interfaces
and delegate execution to pure-domain children of scales
\(\omega p_k\). This is coherent and costs
\(\omega[\ell_rd(p)+\ell_1]\), saving
\(\omega(\ell_r-\ell_1)>0\). Thus every minimizing tree has
only scope-one terminals.

\emph{Step 2. Derive workload bounds attained by every labor minimizer.}
In such a coherent tree with exact matching, let \(Z_r\)
be the sum of the work \(z_v\) at nodes whose scope is \(r\).
Define workload tails and successive labor-cost increments by
\begin{equation}
T_h=\sum_{r=h}^nZ_r,
\qquad
\Delta_h=\ell_h-\ell_{h-1}>0.
\label{eq:proof-workload-tails}
\end{equation}
Coherence implies that child supports are contained in parent supports:
\(p_{vi}=0\) forces \(\sum_c\omega_cp_{ci}=0\).
For \(r<n\), the first node of scope at most \(r\) on each
path therefore defines a cut. Write \(P_r^{\mathrm{cut}}\) for
its scale-weighted profile; it has mean \(q\) and scope at most
\(r\). The nodes above the cut are exactly those of scope greater
than \(r\). Summing their work telescopes to root interface load
minus cut interface load, so
\begin{equation}
\sum_{j=r+1}^nZ_j
=\mathbb E\lVert P_r^{\mathrm{cut}}\rVert_2^2-\lVert q\rVert_2^2
\geq\Phi_r(q)-\lVert q\rVert_2^2
=\sum_{j=r+1}^ng_j(q).
\label{eq:proof-cut-bound}
\end{equation}
The last equality telescopes \eqref{eq:proof-frontier-increments}.
The nested tree attains every bound. Summation by parts, using
\(\sum_rZ_r=1+d(q)\), gives its labor consequence:
\begin{equation}
C
=\ell_1[1+d(q)]
+\sum_{h=2}^n\Delta_hT_h.
\label{eq:proof-tail-cost}
\end{equation}
All \(\Delta_h>0\), so the nested tree minimizes labor, and
every minimizer must bind every cut bound. Adjacent differences of
the tails, together with total work, give
\begin{equation}
Z_1=1,
\qquad
Z_r=g_r(q)
\quad(2\leq r\leq n).
\label{eq:proof-unique-workloads}
\end{equation}
Thus workloads are unique even when the tree is not. Moreover,
\begin{equation}
\sum_{r=2}^ng_r(q)
=\Phi_1(q)-\Phi_n(q)
=1-\lVert q\rVert_2^2
=d(q).
\label{eq:proof-interface-sum}
\end{equation}

\emph{Step 3. Establish the minimum handoff scale \(n-1\) among labor minimizers.}
Every first-scope-\(r\) cut in a labor minimizer attains its
frontier, whose profiles have exactly \(r\) coordinates. No path
can skip scope \(r\), since its first profile of scope at most
\(r\) would then have fewer coordinates.
Choose child \(c\) at node \(v\) with probability
\(\omega_c/\omega_v\). Each node and its incoming edge are
reached with probability \(\omega_v\), so \(J(\mathcal A)\)
is the expected number of external handoffs on this path. For an
internal node, coherence gives
\[
z_v=\sum_{c\in\operatorname{ch}(v)}
\omega_c\|p_c-p_v\|_2^2.
\]
The last node at each scope \(r>1\) on a path has a smaller-scope
child and therefore positive work. The scope-one terminal also has
positive work. Exact matching places these \(n\) nodes in distinct
positions; connectedness precludes leaving and reentering a position.
Every path therefore has at least \(n-1\) handoffs.
The nested tree, with one matched position per node, attains
\(J=n-1\). In any selected minimizer, every path must attain this
bound because it has positive probability. Its positions are therefore
exactly the \(n\) required by the positive-work nodes at distinct
scopes, with no additional position containing only zero-work nodes.

\emph{Step 4. Form a position tree preserving labor, employment, and handoffs.}
The squared-distance expression for \(z_v\) in Step 3 shows that
an internal node has zero work exactly
when all its child profiles equal its own. Suppress a nonroot
zero-work node \(v\) by attaching its children to its parent.
Coherence, work, and labor are unchanged. If this disconnects its
position, give the components separate names with the same knowledge.
If \(v\) shared its parent's position, handoff indicators are
unchanged; otherwise the old incoming handoff cost \(\omega_v\)
covers all replacement-edge costs, whose sum is at most
\(\sum_c\omega_c=\omega_v\). Thus \(J\) cannot rise and must
remain \(n-1\).
Repeat until all nonroot nodes have positive work. A remaining
zero-work root has positive-work children with profile \(q\).
Each must share its position, or its path would have a handoff before
the \(n\) required positions. Contract these children into the
root. Finally, contract each connected position: its positive-work
nodes have identical profiles by exact matching. The contracted node
inherits its top node's scale and profile and the set's boundary
children. Summing coherence and work identities over the set shows
that labor, employment, and external handoffs are preserved.

\emph{Step 5. Show adjacent-scope handoffs and work proportional to scale within each layer.}
The resulting tree has positive work everywhere and one position of
each scope on every path. Handoffs are therefore between adjacent
scopes and drop one domain. Each scope-\(r\) layer is a frontier
cut, so all its profiles have squared norm \(\Phi_r(q)\) and
all its children squared norm \(\Phi_{r-1}(q)\). Hence
\eqref{eq:proof-proportional-work} holds in every selected hierarchy,
including terminals with \(g_1=1\). This proportionality will link
responsibility scales to employment weights.

\emph{Step 6. Show that narrower scopes perform strictly more work.}
In \eqref{eq:proof-inclusion-problem}, replace \(\sum_i\pi_i=r\)
by \(\sum_i\pi_i=z\), where \(z\in[1,n]\), and denote the
minimum by \(V_q(z)\). For \(z\in[1,n)\), let \(\theta_q(z)\) be the
unique positive solution to
\(\sum_{i\in I}\min\{1,q_i/\theta_q(z)\}=z\).
Set \(\theta_q(n)=\min_iq_i\), the continuous limiting value.
The preceding solution of the relaxed problem applies unchanged at
noninteger \(z\). It shows that \(\theta_q\) is continuous,
positive, and strictly decreasing on \([1,n]\), and that
\(V_q(r)=\Phi_r(q)\) at every integer.
On an open interval in \((1,n)\) where \(\{i:q_i\geq\theta_q(z)\}\) is fixed,
let \(c\) be its size and \(S\) the remaining mass of \(q\).
Then
\[
\theta_q(z)=\frac{S}{z-c},
\qquad
V_q(z)=\sum_{q_i\geq\theta_q(z)}q_i^2+\frac{S^2}{z-c},
\qquad
V_q'(z)=-\theta_q(z)^2.
\]
This set changes only when the threshold crosses a coordinate
\(q_i\). There are finitely many such changes, and \(\theta_q\)
and \(V_q\) are continuous across them. Integration therefore gives
\begin{equation}
g_r(q)=\int_{r-1}^{r}\theta_q(z)^2\,dz
\qquad(2\leq r\leq n).
\label{eq:proof-envelope-workloads}
\end{equation}
Positivity and strict decrease of \(\theta_q\) imply
\(0<g_n<\cdots<g_2\) by comparing adjacent integration intervals.
Every simplex profile of scope at most two has squared norm at least
\(1/2\): this follows from Cauchy--Schwarz on its active coordinates.
Hence \(\Phi_2(q)\geq1/2\) and
\(g_2=1-\Phi_2(q)\leq1/2<g_1=1\). Exact matching and the
unique workloads finally give
\(C^M(q)=\sum_{r=1}^n\ell_rg_r(q)\) and
\(Y^M(q)=1/C^M(q)\).
This completes the proof.
\hfill\rule{0.5em}{0.5em}\par\egroup

\hypertarget{proof-of-corollary}{%
\subsection{\texorpdfstring{Proof of Corollary
\ref{cor:employment-hierarchy}}{Proof of Corollary }}\label{proof-of-corollary}}

\label{app:proof-employment}

\bgroup \par\noindent\textbf{Proof.}\ 
If \(q_i=1/K\) for every domain, the frontier lemma gives
\(\pi_i^r=r/K\). Conditional on being active, coordinate \(i\)
therefore equals \(q_i/\pi_i^r=1/r\). Every scope-\(r\) profile
in a selected hierarchy is thus uniform on its \(r\)-element support,
has squared norm \(\Phi_r(q)=1/r\), and generates workloads
\[
g_1=1,
\qquad
g_r=\Phi_{r-1}(q)-\Phi_r(q)=\frac{1}{r(r-1)}
\quad(2\leq r\leq K).
\]

For the employment ordering, allow any production profile \(q\) with
\(n\) active domains.
On any open interval in \((1,n)\) with \(\{i:q_i\geq\theta_q(z)\}\) fixed,
let \(c=|\{i:q_i\geq\theta_q(z)\}|\) and
\(S=\sum_{q_i<\theta_q(z)}q_i\). Then
\[
\theta_q(z)=\frac{S}{z-c},
\qquad
z\theta_q(z)=\frac{Sz}{z-c},
\qquad
\frac{d}{dz}\bigl[z\theta_q(z)\bigr]
=-\frac{Sc}{(z-c)^2}\leq0.
\]
Continuity at the finitely many changes of this set implies that
\(z\theta_q(z)\), and hence
\(H(z)=[z\theta_q(z)]^2\), is nonincreasing throughout
\([1,n]\). For \(2\leq r<n\),
\eqref{eq:proof-envelope-workloads} and
\(\theta_q(z)^2=H(z)/z^2\) give
\[
g_r\geq H(r)\int_{r-1}^{r}\frac{dz}{z^2}
=\frac{H(r)}{r(r-1)},
\qquad
g_{r+1}\leq H(r)\int_r^{r+1}\frac{dz}{z^2}
=\frac{H(r)}{r(r+1)}.
\]
The workloads and \(H(r)\) are positive. Dividing these bounds yields
\begin{equation}
\frac{g_r(q)}{g_{r+1}(q)}\geq\frac{r+1}{r-1}.
\label{eq:proof-workload-ratio}
\end{equation}
The previous proof also gives \(g_2\leq1/2\).
Since \(m_r=\ell_rg_r/C^M\), it remains to compare labor across scopes.
Under \(\ell_r=1+\kappa(r-1)\) with \(0<\kappa<1\),
\[
\begin{aligned}
\ell_2g_2&\leq\frac{1+\kappa}{2}<1=\ell_1g_1,\\
\frac{\ell_{r+1}}{\ell_r}
&=\frac{1+\kappa r}{1+\kappa(r-1)}
<\frac{r+1}{r-1}
\leq\frac{g_r}{g_{r+1}}
\quad(2\leq r<n).
\end{aligned}
\]
Thus \(\ell_rg_r>\ell_{r+1}g_{r+1}\) for every adjacent pair,
which proves the employment ordering. At \(n=2\), only the first
comparison is needed.
\hfill\rule{0.5em}{0.5em}\par\egroup

\hypertarget{proof-of-theorem-1}{%
\subsection{\texorpdfstring{Proof of Theorem
\ref{thm:division-understanding}}{Proof of Theorem }}\label{proof-of-theorem-1}}

\label{app:proof-division-understanding}

\bgroup \par\noindent\textbf{Proof.}\ 
Fix a productively optimal hierarchy and sample a path by choosing
child \(c\) at parent \(v\) with probability \(\omega_c/\omega_v\).
Let \(V_r\) be its scope-\(r\) node and \(P_r=p_{V_r}\).
Exact matching makes \(P_r\) also its occupational profile, and
\(\Pr(V_r=v)=\omega_v\).
Conditional on the parent node \(V_r=v\), coherence gives
\[
\mathbb E[P_{r-1}\mid V_r=v]
=\sum_{c\in\operatorname{ch}(v)}
\frac{\omega_c}{\omega_v}p_c=p_v.
\]
The parent profile \(P_r\) is a function of \(V_r\). Applying
the law of iterated expectations, even when several nodes share the
same profile, therefore yields
\begin{equation}
P_r=\mathbb E[P_{r-1}\mid P_r].
\label{eq:proof-profile-martingale}
\end{equation}
At scope \(r\), node \(v\) employs
\(\ell_r\omega_vg_r/C^M\) workers and the layer employs
\(\ell_rg_r/C^M\). Its conditional employment weight is therefore
\(\omega_v\), exactly its path probability. Thus
\(b_r=\mathbb E[b(P_r)]\).

For each issue \(u\), \(\gamma(s,u)=\beta(s,u)/\ell_{r(u)}\)
is concave in \(s\):
fit is a minimum of linear functions and \(\ell_{r(u)}>0\).
Conditional Jensen gives
\[
\gamma(P_r,u)
\geq\mathbb E[\gamma(P_{r-1},u)\mid P_r].
\]
Taking expectations yields
\begin{equation}
\mathbb E[\gamma(P_r,u)]
\geq
\mathbb E[\gamma(P_{r-1},u)]
\quad(2\leq r\leq n).
\label{eq:proof-layer-understanding}
\end{equation}
Integrating over the agenda gives \(b_r\geq b_{r-1}\).
The interchange is valid since \(0\leq\gamma(s,u)\leq1\).
For \(u=e_k\), where \(e_k\) is the \(k\)th unit vector,
understanding is \(\gamma(s,e_k)=s_k\).
Starting at \(P_n=q\) and taking expectations in
\eqref{eq:proof-profile-martingale} gives \(\mathbb E[P_r]=q\).
Every layer therefore has mean understanding \(q_k\) of this issue.
\hfill\rule{0.5em}{0.5em}\par\egroup

\hypertarget{proof-of-corollary-1}{%
\subsection{\texorpdfstring{Proof of Corollary
\ref{cor:coordinator-premium}}{Proof of Corollary }}\label{proof-of-corollary-1}}

\label{app:proof-coordinator-premium}

\bgroup \par\noindent\textbf{Proof.}\ 
Let \(p\) be the coordinator's profile,
\(p_j\) its children's profiles, and \(\alpha_j>0\) their
scales divided by the parent's scale. Coherence gives
\(p=\sum_j\alpha_jp_j\) and \(\sum_j\alpha_j=1\).
Since \(\gamma(s,e_k)=s_k\), each single-domain issue contributes
zero to the coordinator's premium over this weighted child average.
Only the agenda weight \(\lambda\) on \(u=p\) remains, giving
\[
\frac{\lambda}{\ell_{r(p)}}
\left[1-\sum_j\alpha_j\beta(p_j,p)\right].
\]
By \eqref{eq:proof-exact-match}, \(\beta(p_j,p)\leq1\), with
equality exactly when \(p_j=p\). Thus the premium is nonnegative,
and strictly positive exactly when \(\lambda>0\) and some
positive-share child has a different profile.
\hfill\rule{0.5em}{0.5em}\par\egroup
\hypertarget{political-equilibrium}{%
\section{Political Equilibrium}\label{political-equilibrium}}

\label{app:proofs-politics}

\hypertarget{proof-of-theorem-2}{%
\subsection{\texorpdfstring{Proof of Proposition
\ref{thm:political-equilibrium}}{Proof of Proposition }}\label{proof-of-theorem-2}}

\label{app:proof-political-equilibrium}

At zero effort, we use the following conventions. A voter with \(b_i>0\)
supports each candidate with probability one half at \((0,0)\) and
supports the sole positive-effort candidate with probability one when
exactly one effort is positive. A voter with \(b_i=0\) supports each
candidate with probability one half at every effort profile.

\bgroup \par\noindent\textbf{Proof.}\ 
Suppose \(B>0\), and let \(\mu_+>0\) be the measure of
\(I_+=\{i:b_i>0\}\). The cost assumptions imply that \(c\)
is strictly increasing on positive efforts.
At \((0,0)\), a deviation to any positive
effort raises a candidate's aggregate support by \(\mu_+/2\), whereas
its cost tends to zero with effort. Thus \((0,0)\) is not an
equilibrium. Nor can only one candidate exert positive effort in
equilibrium: she could reduce it while keeping it positive, preserving
all her support and strictly lowering her cost.
Zero is not a best response to a positive effort by the opponent either. Fix \(\bar e>0\), set \(M=\max\{1,\bar e\}\), and take \(0<e\leq1\). For every \(i\in I_+\),
\[
\pi(e,\bar e;b_i)
=\frac{e^{b_i}}{e^{b_i}+\bar e^{b_i}}
\geq\frac{e}{1+M}.
\]
The numerator is at least \(e\) and the denominator at most \(1+M\).
Voters outside \(I_+\) remain indifferent, so aggregate support rises
by at least \(\mu_+e/(1+M)\) relative to zero effort. Since
\(c'(0)=0\) implies \(c(e)=o(e)\), this gain exceeds the cost
for sufficiently small \(e>0\). Every equilibrium must therefore
be interior.

For positive efforts, write \(\pi(e,\bar e;b)=e^b/(e^b+\bar e^b)\). For \(b\in(0,1]\),
\begin{equation}
\frac{\partial\pi}{\partial e}
=\frac{b}{e}\pi(1-\pi),
\qquad
\frac{\partial^2\pi}{\partial e^2}
=\frac{b e^{b-2}\bar e^b[(b-1)\bar e^b-(b+1)e^b]}
{(e^b+\bar e^b)^3}
\leq0.
\label{eq:proof-vote-concavity}
\end{equation}
For \(b=0\), support is constant. Against a positive opponent effort,
support converges pointwise to its zero-effort value as \(e\downarrow0\).
Dominated convergence therefore extends aggregate support continuously
to zero, and its concavity extends to \([0,\infty)\). Subtracting
the strictly convex cost makes the candidate's objective strictly concave.
The cost also tends to infinity, since for any \(e_0>0\),
\(c(e)\geq c(e_0)+c'(e_0)(e-e_0)\) for \(e\geq e_0\).
Bounded support then ensures that a global best response exists; it is
positive because zero is not optimal. On any compact interval \([a,d]\subset(0,\infty)\),
\(0\leq\partial\pi/\partial e\leq1/(4a)\), uniformly in \(b\),
which justifies differentiation under the voter integral.
At positive efforts, \(\pi_{i,-j}=1-\pi_{ij}\), so
\(\pi_{ij}(1-\pi_{ij})\) is the same for both candidates.
Multiplying their first-order conditions by their respective efforts gives
\begin{equation}
e_jc'(e_j)
=\int_0^1b_i\pi_{ij}(1-\pi_{ij})\,di
=e_{-j}c'(e_{-j}).
\label{eq:proof-paired-foc}
\end{equation}
The function \(h(e)=ec'(e)\) is strictly increasing on \((0,\infty)\), since \(h'(e)=c'(e)+ec''(e)>0\). Thus every equilibrium is symmetric.
At equal positive efforts, \(\pi_{ij}=1/2\) for every citizen, and the first-order condition becomes
\[
e^*c'(e^*)=\frac14\int_0^1b_i\,di=\frac B4.
\]
The left-hand side is continuous and tends to zero as \(e\downarrow0\).
It also tends to infinity: for any fixed \(e_0>0\), strict convexity
gives \(c'(e)\geq c'(e_0)>0\) whenever \(e\geq e_0\).
Thus the equation has exactly one positive solution. At that solution,
each candidate satisfies her first-order condition against the other;
strict concavity makes it her unique global best response. This proves
both existence and uniqueness among pure strategies. Implicit
differentiation gives \(\frac{de^*}{dB}=\frac{1}{4h'(e^*)}>0\).
Thus average understanding alone determines equilibrium effort. If
\(B=0\), then \(b_i=0\) almost everywhere: support is constant, and
each candidate uniquely minimizes her cost at zero effort.
\hfill\rule{0.5em}{0.5em}\par\egroup
\hypertarget{competition-and-planning}{%
\section{Competition and Planning}\label{competition-and-planning}}

\label{app:proofs-planning}

\hypertarget{proof-of-theorem-3}{%
\subsection{\texorpdfstring{Proof of Proposition
\ref{thm:competitive-implementation}}{Proof of Proposition }}\label{proof-of-theorem-3}}

\label{app:proof-competitive}

\bgroup \par\noindent\textbf{Proof.}\ 
For the measurable equilibrium wage schedule \(w\), let \(\Lambda\)
be the unit-mass distribution of workers' profiles and \(\Xi\) the
output measure on organizations, so
\(Y=\Xi(\Omega(q))>0\). The labor demand per unit of output from
organization \(\mathcal A\) is the measure
\(L_{\mathcal A}=\sum_{v:z_v>0}z_va(p_v,s_v)\delta_{s_v}\), where
\(\delta_s\) is unit mass at \(s\). Demands add at the same profile
when nodes share a position. Its labor requirement is
\(C_{\mathcal A}=L_{\mathcal A}(\Delta_K)\), and its payroll is
\(c_w(\mathcal A)=\int_{\Delta_K}w(s)\,dL_{\mathcal A}(s)\).
Free entry requires \(c_w(\mathcal A)\geq1\) for every feasible
organization, with equality \(\Xi\)-almost everywhere. Workers'
maximizing choices imply that \(\bar w=\max_s w(s)\) is attained
and that \(\Lambda(E)=0\) for \(E=\{s:w(s)<\bar w\}\).
Market clearing gives
\(0=\Lambda(E)=\int_{\Omega(q)}L_{\mathcal A}(E)\,d\Xi(\mathcal A)\).
Nonnegativity implies \(L_{\mathcal A}(E)=0\) for
\(\Xi\)-almost every organization. All its positive labor demand
therefore earns \(\bar w\), and zero profit gives
\(1=c_w(\mathcal A)=\bar wC_{\mathcal A}\); in particular,
\(\bar w>0\).
By Theorem~\ref{thm:efficient-organization}, a labor minimizer
\(\widetilde{\mathcal A}\) exists. For almost every active organization,
free entry and \(w(s)\leq\bar w\) imply
\[
1
\leq c_w(\widetilde{\mathcal A})
\leq\bar wC^M
\leq\bar wC_{\mathcal A}
=1.
\]
Every inequality binds. Thus \(\Xi\)-almost every organization uses
\(C^M\) units of labor per unit of output, and occupied profiles earn
\(\bar w=1/C^M\). Aggregate labor clearing gives
\(1=\int C_{\mathcal A}\,d\Xi(\mathcal A)=C^M Y\), hence
\(Y=1/C^M\).

Conversely, fix a labor minimizer \(\widetilde{\mathcal A}\), set
\(w(s)=1/C^M\) for all profiles, and produce \(1/C^M\) units
using this organization. Choose the workers' profile distribution as
\(\Lambda=L_{\widetilde{\mathcal A}}/C^M\), a probability measure
because \(L_{\widetilde{\mathcal A}}(\Delta_K)=C^M\). Workers
maximize wages, labor markets clear, and the active organization earns
zero profit. Any alternative has payroll
\(c_w(\mathcal A)=C_{\mathcal A}/C^M\geq1\), so entry cannot
be profitable.
\hfill\rule{0.5em}{0.5em}\par\egroup

\hypertarget{proof-of-theorem-4}{%
\subsection{\texorpdfstring{Proof of Theorem
\ref{thm:optimal-responsibility}}{Proof of Theorem }}\label{proof-of-theorem-4}}

\label{app:proof-optimal-responsibility}

The proof bounds work at narrow scopes and the understanding obtainable
from staffing it, then constructs matched cutoff organizations attaining
the bounds. Throughout, \(q\) is uniform. Write
\(N(\mathcal A)=C(\mathcal A)B(\mathcal A)
=\sum_vz_va(p_v,s_v)b(s_v)\) for the understanding numerator.
For the labor-minimizing workload, set \(g_1=1\),
\(g_r=\frac1{r(r-1)}\) for \(r=2,\ldots,K\), and
\(G_r=\sum_{t=1}^rg_t=2-\frac1r\).

\begin{lemma}[Workload-prefix bounds]
\label{lem:proof-workload-prefix}
Let \(X_t=\sum_{v:r(p_v)=t}z_v\) be the workload performed under
scope-\(t\) responsibilities. Every finite coherent organization satisfies
\begin{equation}
\sum_{t=1}^rX_t\leq G_r
\quad(r<K),
\qquad
\sum_{t=1}^KX_t=G_K.
\label{eq:proof-workload-prefix}
\end{equation}
Consequently, every feasible workload vector can be obtained by moving
work from each scope to weakly broader scopes: there are amounts
\(x_{tr}\geq0\), for \(1\leq t\leq r\leq K\), with
\[
\sum_{r=t}^Kx_{tr}=g_t,
\qquad
\sum_{t=1}^rx_{tr}=X_r.
\]
Here \(x_{tr}\) moves benchmark work from source scope \(t\) to
destination scope \(r\geq t\). The cutoff transports that retain
every source through some \(j\) at its original scope and send all
broader sources to scope \(K\) are feasible, as is every convex
combination of adjacent cutoffs.
\end{lemma}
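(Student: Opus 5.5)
The plan is to reuse the cut argument from the proof of Theorem~\ref{thm:efficient-organization}, this time for an arbitrary finite coherent organization, with or without exact matching or selection. The equality for the full sum is immediate. Work conservation gives \(\sum_vz_v=1+d(q)=2-1/K=G_K\), and every node has some scope in \(\{1,\dots,K\}\). For \(r<K\), I take the cut consisting of the first node of scope at most \(r\) on each root-to-terminal path. This is a cut: coherence makes child supports nested in parent supports, so scope is weakly decreasing along paths, and every terminal has scope at least one. Moreover, the nodes strictly below the cut all have scope at most \(r\), and the nodes above it all have scope greater than \(r\).

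Next I bound the work at or below the cut. Summing the node-work identities over the nodes above the cut telescopes to \(d(q)-\sum_{v\in\text{cut}}\omega_vd(p_v)\). By work conservation, the work at or below the cut therefore equals \(1+\sum_{v\in\text{cut}}\omega_vd(p_v)=2-\mathbb E\lVert P^{\mathrm{cut}}\rVert_2^2\). This is at most \(2-\Phi_r(q)=2-1/r=G_r\), using Lemma~\ref{lem:proof-scope-frontiers} and Corollary~\ref{cor:employment-hierarchy} for \(\Phi_r=1/r\). Since \(\sum_{t\le r}X_t\) is exactly the work at nodes of scope at most \(r\), which are the nodes at or below the cut, the prefix inequality follows.

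For the transport representation, I view \(g\) and \(X\) as two measures on the ordered set \(\{1,\dots,K\}\) with equal total mass. The prefix inequalities say that \(X\) is dominated by \(g\) in the first-order stochastic sense: \(X\) puts weakly less mass on low scopes. A monotone (north-west corner / quantile) coupling therefore moves mass only upward. Concretely, I would define \(x_{tr}\) as the overlap length of the intervals \([G_{t-1},G_t]\) and \([\sum_{s<r}X_s,\sum_{s\le r}X_s]\). Then \(x_{tr}\ge0\), the marginals are correct, and \(x_{tr}=0\) for \(t>r\) because \(\sum_{s<r}X_s\le G_{r-1}\) implies that the destination interval for \(r\) starts at or before source \(t\)'s interval only when \(t\le r\). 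The only step needing care is this last verification: if \(t>r\), then \(G_{t-1}\ge G_r\ge\sum_{s\le r}X_s\), so the intervals overlap in at most a point. This is the main obstacle, and it is elementary.

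Finally, I treat feasibility of the cutoff transports. In the \(j\)-cutoff, sources \(t\le j\) stay put and sources \(t>j\) move to \(K\). Its workload vector is \(X_t=g_t\) for \(t\le j\), zero for \(j<t<K\), and \(X_K=\sum_{t>j}g_t\) (with \(X_K=g_K+\dots\) when \(j=K\), which is just the benchmark). I would realize it by modifying the nested tree of Lemma~\ref{lem:proof-scope-frontiers}. The root, with profile \(q\) and scope \(K\), directly divides into the scope-\(j\) frontier profiles, skipping the intermediate layers. Its work is then \(\Phi_j-\Phi_K=\sum_{t>j}g_t\). Below the root the tree keeps the nested construction from scope \(j\) down, and coherence is preserved because the frontier-\(j\) cut has mean \(q\). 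For the convex combination of adjacent cutoffs \(j-1\) and \(j\), I would split the root into two coherent branches with scales \(\alpha\) and \(1-\alpha\), each a copy of \(q\) assigned to the same system-wide position. One branch follows the \(j\)-cutoff and the other the \((j-1)\)-cutoff. Work is linear in scales, so the workload vector is the corresponding convex combination.
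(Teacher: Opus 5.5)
\textbf{A genuine gap in the prefix bound.} Your transport step and your cutoff trees are sound, and essentially equivalent to the paper's. The transport step is the quantile coupling, with the check \(G_{t-1}\geq G_r\geq\sum_{s\leq r}X_s\) for \(t>r\); the paper uses a greedy filling, and your direct root-to-scope-\(j\) division matches its cutoff trees. The gap is in the prefix bound.

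You assert that the first node of scope at most \(r\) on each root-to-terminal path forms a cut, and you cite that every terminal has scope at least one. What you actually need is that every path \emph{reaches} scope at most \(r\), i.e.\ that no terminal has scope greater than \(r\). That holds in labor minimizers, where Step~1 of the proof of Theorem~\ref{thm:efficient-organization} rules out broad terminals. But the lemma concerns every finite coherent organization, and the planner may leave broad work undivided. The simplest example is the single terminal root, which is the \(j=0\) cutoff; your last paragraph also omits it.

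On a path ending at a terminal of scope above \(r\), nothing is marked. The marked scales then sum to less than one, and their weighted profiles do not average to \(q\). Your telescoping identity, which gives the work at or below the cut as \(2-\mathbb E\lVert P^{\mathrm{cut}}\rVert_2^2\), then fails: the unmarked broad terminals contribute \(\omega[1+d(p)]\) instead of telescoping. The appeal to \(\Phi_r(q)\) fails for the same reason. This case cannot be set aside, because Theorems~\ref{thm:optimal-responsibility} and~\ref{thm:targetable-production} apply the lemma to arbitrary alternatives in \(\Omega(q)\).

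\textbf{The repair.} The paper bounds each marked subtree pointwise rather than through the frontier. By work conservation on its subtree, a marked node with scale \(\omega\) and profile \(p\) carries \(\omega[1+d(p)]\) units of work including its descendants. Since \(r(p)\leq r\), Cauchy--Schwarz gives \(\lVert p\rVert_2^2\geq 1/r\), so \(1+d(p)\leq G_r\). The marked subtrees are disjoint, contain all work at scopes at most \(r\), and have total scale at most one. Hence \(\sum_{t\leq r}X_t\leq G_r\), without needing the marked set to be a full cut.

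Alternatively, first split every terminal of scope above one into pure-domain children. This is coherent and moves \(\omega\) units of work from scope \(r(p)\) to scope one, so it weakly raises every prefix sum. Your complete-cut argument then applies to the refined tree, and the bound transfers back to the original.
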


\bgroup \par\noindent\textbf{Proof.}\ 
\textit{Step 1. Work conservation bounds every scope prefix.}
Coherence implies that a child's support is contained in its parent's,
so scope can only fall along a branch. Fix \(r<K\), and mark the
first node at scope \(r\) or below on each branch that reaches it.
The marked subtrees are disjoint and their scales sum to at most one:
a random path taking child \(c\) with probability \(\omega_c/\omega_v\)
visits node \(v\) with probability \(\omega_v\), and visits at most
one marked node.
A marked responsibility with scale \(\omega\) and profile \(p\)
contains \(\omega[1+d(p)]\) units of work, including its descendants.
Since \(r(p)\leq r\), Cauchy--Schwarz gives
\(1+d(p)=2-\lVert p\rVert^2\leq2-1/r=G_r\).
All work at scope \(r\) or below lies in these subtrees, so summing
proves the prefix bound. Work conservation at the uniform root gives
\(\sum_tX_t=1+d(q)=G_K\).

\medskip
\noindent\textit{Step 2. The prefix bounds characterize upward transport.}
Fill destinations in increasing order. At destination \(r\), the
unassigned mass from sources at most \(r\) is
\(G_r-\sum_{k<r}X_k\geq X_r\).
Assign \(X_r\) of this mass to \(r\), splitting sources if needed.
At \(r=K\), equality of total mass exhausts all sources, yielding the
stated row and column sums. Conversely, such a transport preserves total
mass and satisfies
\(\sum_{r\leq h}X_r=\sum_{r\leq h}\sum_{t\leq r}x_{tr}
\leq\sum_{t\leq h}g_t=G_h\).
This characterizes the transport relaxation; the cutoffs used below have
finite organizational implementations.

\medskip
\noindent\textit{Step 3. Implement cutoffs and adjacent mixtures with finite coherent trees.}
For \(j\in\{1,\ldots,K-1\}\), give the root one child for each
\(j\)-domain subset, with scale \(1/\binom Kj\) and profile uniform
on that subset. The mean child profile has every coordinate equal to
\(\binom{K-1}{j-1}/[j\binom Kj]=1/K\), so the division is coherent.
For \(j=1\), the children are terminal; otherwise use below each child
the nested hierarchy of Lemma~\ref{lem:proof-scope-frontiers} for a
uniform \(j\)-domain profile. Assign each node its own matched position,
which satisfies connectedness. Child scales sum to one, so their
hierarchies generate aggregate workload \(g_t\) at each \(t\leq j\).
The root performs \(\frac1j-\frac1K=\sum_{t=j+1}^Kg_t\)
units of work. At \(j=0\), use a single matched terminal root,
performing all \(G_K\) units.
To mix cutoffs \(j\) and \(j-1\) with weights \(1-x\) and \(x\),
where \(0<x<1\), add a root with profile \(q\) and two children
with profile \(q\), scales \(1-x\) and \(x\), and correspondingly
scaled copies of the cutoff trees. The new root is coherent, has zero
work, and occupies its own matched position; keep positions in the two
branches distinct. Workloads, labor requirements, and understanding
numerators are then the corresponding convex combinations. Mean
understanding is their numerator-to-labor ratio. At \(x=0,1\), use
the endpoint tree directly.
\hfill\rule{0.5em}{0.5em}\par\egroup

Let \(\nu_h\) be the agenda probability of an issue requiring \(h\)
domains. By symmetry, each \(h\)-domain set has conditional probability
\(1/\binom Kh\). A knowledge profile uniform on \(r\) domains has
understanding
\begin{equation}
b_r
=\sum_{h=1}^r
\frac{\nu_h}{\ell_h}
\frac{\binom{r-1}{h-1}}{\binom Kh},
\qquad
n_r:=\ell_rb_r.
\label{eq:proof-br}
\end{equation}
Indeed, an \(h\)-domain issue is covered with probability
\(\binom rh/\binom Kh\), with fit \(h/r\), and
\(\binom rh\,h/r=\binom{r-1}{h-1}\). We set \(\binom ab=0\)
for \(a\geq0\) and \(b<0\) or \(b>a\). The formula implies that
\(b_r\) is nondecreasing and
\(b_K=\sum_h\nu_hh/(K\ell_h)>0\). One unit of matched scope-\(r\)
work contributes \(n_r\) to the understanding numerator.

\begin{lemma}[Assignment envelope and ordered returns]
\label{lem:proof-assignment-envelope}
Suppose \(\ell_r=1+\kappa(r-1)\), with \(0<\kappa<1\).

\textit{(i)} For any \(s\in\Delta_K\), order its domains so that \(s_{(1)}\geq\cdots\geq s_{(K)}\), breaking ties consistently, and set \(s_{(K+1)}=0\). Define \(\alpha_r=r[s_{(r)}-s_{(r+1)}]\). If \(u_r\) is uniform on the first \(r\) domains in this ordering, then
\begin{equation}
s=\sum_{r=1}^K\alpha_ru_r,
\qquad
b(s)=\sum_{r=1}^K\alpha_rb_r,
\qquad
M_t(s):=\sum_{j=1}^ts_{(j)}
=\sum_{r=1}^K\alpha_r\min\left\{1,\frac tr\right\}.
\label{eq:proof-profile-decomposition}
\end{equation}

\textit{(ii)} If a responsibility \(p\) spans \(t\) knowledge domains,
then, for every profile \(s\) with \(\beta(s,p)>0\) and every scalar
\(\bar b\geq b_K\),
\begin{equation}
a(p,s)[b(s)-\bar b]
\leq
\max_{t\leq r\leq K}\ell_r[b_r-\bar b].
\label{eq:proof-assignment-envelope}
\end{equation}

\textit{(iii)} The sequence \(n_r\) is discretely convex. The secant slopes
to scope \(K\) therefore satisfy
\begin{equation}
\sigma_r:=\frac{n_K-n_r}{\ell_K-\ell_r}
\quad(r=1,\ldots,K-1)
\qquad\text{with}\qquad
\sigma_1\leq\cdots\leq\sigma_{K-1},
\quad \sigma_r\geq b_K.
\label{eq:proof-sigma-order}
\end{equation}
\end{lemma}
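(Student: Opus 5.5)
We would prove the three parts in order, since (ii) uses both the decomposition in (i) and the monotonicity of \(b_r\), while (iii) is a direct computation from equation~\eqref{eq:proof-br}.

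\emph{Part (i).} We use a layer-cake decomposition. The weight on coordinate \(s_{(j)}\) in \(\sum_r\alpha_ru_r\) is \(\sum_{r\geq j}(s_{(r)}-s_{(r+1)})=s_{(j)}\), since \(u_r\) puts \(1/r\) on each of the first \(r\) domains. For understanding, we work issue by issue. A symmetric dimensional agenda weights the domains of each issue equally, so an issue with \(h\)-domain support \(H\) has \(\beta(s,u)=h\min_{k\in H}s_k=h\,s_{(m)}\), where \(m\) is the largest rank in \(H\). Each \(u_r\) with \(r\geq m\) has fit \(h/r\) on this issue, and each \(u_r\) with \(r<m\) has fit \(0\). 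Hence \(\sum_r\alpha_r\beta(u_r,u)=h\sum_{r\geq m}(s_{(r)}-s_{(r+1)})=h\,s_{(m)}=\beta(s,u)\). So \(\beta(\cdot,u)\) is linear along this particular decomposition, even though it is only concave in general. Integrating over the agenda, and using symmetry to get \(b(u_r)=b_r\), gives \(b(s)=\sum_r\alpha_rb_r\). The formula for \(M_t\) follows because the first \(t\) coordinates of \(u_r\) sum to \(\min\{1,t/r\}\). We also record that \(\sum_r\alpha_r=\sum_r s_{(r)}=1\).

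\emph{Part (ii).} First, by (i) and monotonicity of \(b_r\), we have \(b(s)=\sum\alpha_rb_r\leq b_K\leq\bar b\). So the factor \(b(s)-\bar b\) is nonpositive, and it suffices to bound \(a(p,s)\) from below. Using the averaging step of \eqref{eq:proof-exact-match}, \(\beta(s,p)\leq\sum_{k\in\operatorname{supp}p}s_k\leq M_t(s)\), so \(a(p,s)\geq\ell_t/M_t(s)\). Substituting the decompositions from (i) gives
\[
a(p,s)[b(s)-\bar b]\leq\frac{\ell_t\sum_r\alpha_r(b_r-\bar b)}{\sum_r\alpha_r\min\{1,t/r\}}.
\]
A ratio of sums with positive denominators is at most the largest termwise ratio. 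For \(r<t\), the termwise ratio is \(\ell_t(b_r-\bar b)\), which is at most \(\ell_t(b_t-\bar b)\) because \(b_r\) is nondecreasing. For \(r\geq t\), it is \(\ell_t(r/t)(b_r-\bar b)\). Since \(b_r-\bar b\leq0\), this is at most \(\ell_r(b_r-\bar b)\) provided \(\ell_tr\geq t\ell_r\). Under affine intensities that condition reads \((1-\kappa)(r-t)\geq0\), which holds because \(\kappa<1\). This is exactly where \(\kappa<1\) enters.

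\emph{Part (iii).} By \eqref{eq:proof-br}, \(n_r=\sum_hc_h\,\ell_r\binom{r-1}{h-1}\) with \(c_h\geq0\). It therefore suffices to show that each term \(\ell_r\binom{r-1}{h-1}\) is discretely convex in \(r\). Here \(f_r=\ell_r\) is affine and increasing, and \(g_r=\binom{r-1}{h-1}\) is nonnegative, nondecreasing and discretely convex. The discrete product rule gives \(\Delta^2(fg)_r=f_{r+2}\Delta^2g_r+2\Delta f\,\Delta g_r\) (with \(\Delta f\) constant), and both terms are nonnegative. Because \(\ell_r\) is affine, convexity of \(n\) in \(r\) is the same as convexity in \(\ell_r\). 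Secant slopes of a convex sequence to its right endpoint are nondecreasing in the left point, which yields \(\sigma_1\leq\cdots\leq\sigma_{K-1}\). Finally, \(\sigma_r\geq b_K\) is equivalent to \(n_K-n_r\geq b_K(\ell_K-\ell_r)\), that is \(\ell_rb_K\geq\ell_rb_r\), which holds since \(b_r\leq b_K\).

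The main obstacle is the linearity claim in (i). It relies on the agenda being uniform within each issue's support and on the consistent tie-breaking in the ordering. We will check both carefully, because the bound in (ii) depends on this exact linearity rather than on mere concavity.
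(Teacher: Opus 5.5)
Your proof is correct and follows essentially the same route as the paper. Part (i) uses the same layer-cake decomposition and the same linearity of understanding on profiles sharing an ordering. Part (ii) uses the same three ingredients: \(\beta(s,p)\le M_t(s)\), monotone \(b_r\), and \(\ell_r/r\) decreasing because \(\kappa<1\); your mediant bound is the paper's \(\alpha_r\)-weighted termwise sum written another way. Part (iii) also matches, with one cosmetic difference: you get discrete convexity of \(\ell_r\binom{r-1}{h-1}\) from a discrete product rule, whereas the paper writes it as a nonnegative combination of binomial sequences via \([1+\kappa(r-1)]\binom{r-1}{h-1}=[1+\kappa(h-1)]\binom{r-1}{h-1}+\kappa h\binom{r-1}{h}\).
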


\bgroup \par\noindent\textbf{Proof.}\ 
\textit{Step 1. Decompose ordered profiles and understanding linearly.}
For part (i), \(\alpha_r\geq0\) and
\(\sum_r\alpha_r=\sum_rs_{(r)}=1\). The \(j\)th coordinate of
the proposed decomposition is
\(\sum_{r=j}^K\frac{\alpha_r}{r}
=\sum_{r=j}^K[s_{(r)}-s_{(r+1)}]=s_{(j)}\).
For an issue uniform on an \(h\)-domain set \(U\), fit is
\(h\min_{k\in U}s_k\). Under a fixed coordinate ordering, the
minimum selects the coordinate with the largest rank index in \(U\), so
understanding is linear on profiles with that ordering. The profiles
\(s,u_1,\ldots,u_K\) share the ordering, proving the decomposition
of \(b(s)\). Summing the first \(t\) coordinates gives \(M_t(s)\).
In particular, \(b(s)\leq b_K\).

\medskip
\noindent\textit{Step 2. Bound staffing returns by the best matched return.}
For part (ii), define
\(Q_t(\bar b)=\max_{t\leq r\leq K}\ell_r[b_r-\bar b]\leq0\),
where the sign follows from \(b_r\leq b_K\leq\bar b\).
If \(T=\operatorname{supp}p\), summing
\(\beta(s,p)p_k\leq s_k\) over \(k\in T\) gives
\begin{equation}
\beta(s,p)\leq\sum_{k\in T}s_k\leq M_t(s).
\label{eq:proof-beta-mass}
\end{equation}
For \(r<t\), monotonicity gives
\(b_r-\bar b\leq b_t-\bar b\leq Q_t(\bar b)/\ell_t\).
For \(r\geq t\), \(\ell_r/r=\kappa+(1-\kappa)/r\) decreases,
so \(1/\ell_r\geq t/(r\ell_t)\). Multiplying by the nonpositive
\(Q_t(\bar b)\) reverses this inequality:
\[
b_r-\bar b
\leq\frac{Q_t(\bar b)}{\ell_r}
\leq\frac tr\frac{Q_t(\bar b)}{\ell_t}.
\]
Combining both cases, multiplying by \(\alpha_r\), and summing gives,
by \eqref{eq:proof-profile-decomposition},
\[
b(s)-\bar b\leq\frac{M_t(s)}{\ell_t}Q_t(\bar b).
\]
Multiplying by \(a(p,s)=\ell_t/\beta(s,p)\) and using
\eqref{eq:proof-beta-mass} gives
\[
a(p,s)[b(s)-\bar b]
\leq\frac{M_t(s)}{\beta(s,p)}Q_t(\bar b)
\leq Q_t(\bar b),
\]
where the last inequality again uses \(Q_t(\bar b)\leq0\).
This proves \eqref{eq:proof-assignment-envelope}.

\medskip
\noindent\textit{Step 3. Returns to moving work to the root rise with source scope.}
For part (iii), the contribution of \(h\)-domain issues, before weighting
by \(\nu_h\), is
\(n_r(h)=\ell_r\binom{r-1}{h-1}/[\ell_h\binom Kh]\).
The identity
\[
[1+\kappa(r-1)]\binom{r-1}{h-1}
=[1+\kappa(h-1)]\binom{r-1}{h-1}
+\kappa h\binom{r-1}{h}
\]
writes its numerator as a nonnegative combination of binomial sequences.
Pascal's identity gives their second differences:
\[
\binom{r+1}{m}-2\binom r m+\binom{r-1}{m}
=\binom{r-1}{m-2}\geq0.
\]
The zero conventions include the points where a sequence first becomes
positive and the cases \(m=0,1\). Division by a positive constant
and mixing with weights \(\nu_h\) preserve convexity. Thus
\[
\sigma_r=\frac1{K-r}\sum_{t=r}^{K-1}
                   \frac{n_{t+1}-n_t}{\kappa}
\]
is an average of nondecreasing increments; increasing \(r\) removes
a weakly smallest term, so \(\sigma_r\) is nondecreasing. Finally,
\(\sigma_r=b_K+\frac{\ell_r(b_K-b_r)}{\ell_K-\ell_r}\geq b_K\).
\hfill\rule{0.5em}{0.5em}\par\egroup

\bgroup \par\noindent\textbf{Proof of Theorem \ref{thm:optimal-responsibility}.}\ 
The lemmas yield a bound for arbitrary organizations. We solve the bound
by cutoffs, show that they trace the whole relevant frontier, and establish
a global optimum on that frontier.

\medskip
\noindent\textit{Step 1. Bound arbitrary organizations by a transport problem.}
Fix an organization with outcomes \((C,N)\), a coefficient
\(\bar b\geq b_K\), and write
\(w_r(\bar b)=n_r-\bar b\ell_r\). Lemma
\ref{lem:proof-assignment-envelope} bounds the contribution of one unit
of scope-\(t\) work to \(N-\bar bC\) by
\(\max_{t\leq r\leq K}w_r(\bar b)\). Choose a maximizer \(R(t)\)
and assign actual scope-\(t\) work to that scope for the purpose of
the bound. The resulting vector \(Z_r=\sum_{t:R(t)=r}X_t\) preserves
total work and, because \(R(t)\geq t\), satisfies
\(\sum_{r\leq h}Z_r\leq\sum_{t\leq h}X_t\leq G_h\). Hence
\begin{equation}
N-\bar bC\leq\sum_{r=1}^KZ_rw_r(\bar b)
\leq\sum_{t=1}^Kg_t\max_{t\leq r\leq K}w_r(\bar b).
\label{eq:proof-support-bound}
\end{equation}
For the second inequality, represent \(Z\) by upward transport amounts:
\(\sum_rZ_rw_r=\sum_t\sum_{r\geq t}x_{tr}w_r
\leq\sum_tg_t\max_{r\geq t}w_r\).

\medskip
\noindent\textit{Step 2. Cutoff transports maximize the relaxed bound.}
Since \(n_r\) is convex and \(\ell_r\) affine, \(w_r(\bar b)\)
is discretely convex. For \(t<r<K\),
\[
w_r(\bar b)\leq
\frac{K-r}{K-t}w_t(\bar b)+\frac{r-t}{K-t}w_K(\bar b)
\leq\max\{w_t(\bar b),w_K(\bar b)\}.
\]
Thus each source \(t\) is optimally retained at \(t\) or sent to
scope \(K\). For \(t<K\), the gain from sending it to the root is
\begin{equation}
w_K(\bar b)-w_t(\bar b)
=(\ell_K-\ell_t)(\sigma_t-\bar b).
\label{eq:proof-root-comparison}
\end{equation}
Retaining source \(t\) is optimal when \(\sigma_t\leq\bar b\),
and sending it to the root is optimal when \(\sigma_t\geq\bar b\),
with either choice allowed at equality. Since \(\sigma_t\) is
nondecreasing, ties can be resolved so that sources sent to the root
form an upper interval. Source \(K\) always remains there.

\medskip
\noindent\textit{Step 3. Feasible cutoffs form a concave chain of outcomes.}
Lemma~\ref{lem:proof-workload-prefix} implements each maximizing cutoff
with matched staffing, attaining the bound. To describe the full chain,
at cutoff \(j\in\{0,\ldots,K-1\}\) write
\begin{equation}
Z_r^j=
\begin{cases}
g_r,&r\leq j,\\
0,&j<r<K,\\
\displaystyle\sum_{t=j+1}^Kg_t,&r=K.
\end{cases}
\label{eq:proof-cutoff-vector}
\end{equation}
The associated labor requirement and understanding numerator are
\begin{equation}
C_j=\sum_{r=1}^jg_r\ell_r+\ell_K\sum_{r=j+1}^Kg_r,
\qquad
N_j=\sum_{r=1}^jg_rn_r+n_K\sum_{r=j+1}^Kg_r.
\label{eq:proof-cutoff-outcomes}
\end{equation}
For \(j\in\{1,\ldots,K-1\}\), if a fraction \(x\in[0,1]\) of the scope-\(j\) source workload is additionally assigned to the root,
\begin{equation}
C_{j,x}=C_j+xg_j(\ell_K-\ell_j),
\qquad
N_{j,x}=N_j+xg_j(n_K-n_j).
\label{eq:proof-partial-cutoff}
\end{equation}
This joins cutoff \(j\) to cutoff \(j-1\), with
\(C_{j-1}-C_j=g_j(\ell_K-\ell_j)>0\) and slope \(\sigma_j\)
in \((C,N)\)-space. From cutoff \(K-1\) to cutoff zero, labor
increases and the slopes \(\sigma_{K-1},\ldots,\sigma_1\) weakly
decrease. The chain is therefore concave.

\medskip
\noindent\textit{Step 4. Show that the chain maximizes understanding at each labor requirement.}
The chain spans \([C_{K-1},C_0]=[C^M,\ell_KG_K]\).
For a point on segment \(j\), choose \(\bar b=\sigma_j\geq b_K\).
Then source \(j\) is indifferent between its original scope and the
root. For \(t<j\), \(\sigma_t\leq\bar b\), so retaining it is
optimal; for \(j<t<K\), \(\sigma_t\geq\bar b\), so sending it
to the root is optimal. Both endpoint cutoffs, and every mixture between
them, therefore attain \eqref{eq:proof-support-bound}. This includes
tied slopes and the outer endpoints, each of which belongs to a segment.
If a point on the chain has outcomes \((C,\widehat N)\) and an arbitrary
feasible organization has the same \(C\) and numerator \(N\), the
support bound gives \(N-\bar bC\leq\widehat N-\bar bC\),
hence \(N\leq\widehat N\).
The bound ranges over all feasible trees and knowledge assignments,
including cross-training. Thus the chain maximizes \(N\), and hence
\(B=N/C\), at every \(C\in[C^M,C_0]\). At its left endpoint,
it also proves that \(B^M=N_{K-1}/C^M>0\) is attained.

\medskip
\noindent\textit{Step 5. The chain contains a global welfare optimum.}
No organization has \(C<C^M\). If \(C>C_0\), the profile
decomposition gives \(b(s_v)\leq b_K\), hence \(B\leq b_K\).
The all-root organization has \(C=C_0\) and \(B=b_K\), so it
has strictly greater output and weakly greater understanding. Since
\(W_0(Y,B)=Y+\log Y+\epsilon\log B\) increases in both arguments,
and \(B=0\) gives welfare \(-\infty\), every feasible organization
is weakly dominated in welfare by a point on the chain.
The chain is a finite union of closed segments. Each cutoff has positive
root workload staffed at \(b_K>0\); mixtures therefore retain positive
numerators and understanding. Welfare is continuous on this compact
chain, so a global maximum is attained, possibly at an endpoint.
Its finite implementation has matched staffing and at most two adjacent
cutoff branches, splitting at most one source workload between its original
scope and system-wide positions.
\hfill\rule{0.5em}{0.5em}\par\egroup

\hypertarget{proof-of-corollary-2}{%
\subsection{\texorpdfstring{Proof of Corollary
\ref{cor:strict-over-specialization}}{Proof of Corollary }}\label{proof-of-corollary-2}}

\label{app:proof-strict-over-specialization}

\bgroup \par\noindent\textbf{Proof.}\ 
The preceding proof attains \((C,N)=(C^M,C^MB^M)\), with \(B^M>0\).
The partial-cutoff construction moves a fraction \(x\in[0,1]\) of
scope-\(K-1\) workload to system-wide positions. Writing
\(\Delta C=g_{K-1}(\ell_K-\ell_{K-1})>0\), its outcomes are
\[
C_x=C^M+x\Delta C,
\qquad
N_x=C^MB^M+x\Delta C\,\sigma_{K-1},
\qquad
B_x=N_x/C_x.
\]
In these coordinates, welfare is
\(\widetilde W_0(C,N)=1/C+\epsilon\log N-(1+\epsilon)\log C\).
Differentiating along the segment gives
\[
\left.\frac{d\widetilde W_0(C_x,N_x)}{dx}\right|_{x=0}
=
\frac{\Delta C}{C^M}
\left[
\epsilon\frac{\sigma_{K-1}-B^M}{B^M}
-\left(1+\frac1{C^M}\right)
\right]>0
\]
under the stated condition. A sufficiently small positive reassignment
therefore improves on the best labor-minimizing outcome, and hence on
every labor minimizer, whose understanding is at most \(B^M\).
Theorem \ref{thm:optimal-responsibility} supplies an optimal cutoff whose
welfare is at least that of the improving reassignment. It cannot be the
labor-minimizing cutoff, so it assigns a positive amount of scope-\(K-1\)
source work to system-wide positions. Every labor minimizer has the
canonical workload vector \(g\); the optimal cutoff therefore assigns
strictly more work to system-wide positions than any labor minimizer.

If every issue requires all \(K\) domains, \(n_r=0\) for \(r<K\)
and \(n_K=1\). Hence \(\sigma_{K-1}=\frac1\kappa\) and
\(B^M=\frac{g_K}{C^M}=\frac1{K(K-1)C^M}\).
As \(\kappa\downarrow0\), \(C^M=\sum_r\ell_rg_r\to G_K>0\)
and \(B^M\to g_K/G_K>0\). For fixed \(\epsilon>0\), the left
side of the corollary's condition, \(\epsilon(1/\kappa-B^M)\),
diverges, while \(B^M(1+1/C^M)\) stays finite. The inequality holds
for all sufficiently small positive \(\kappa\).
\hfill\rule{0.5em}{0.5em}\par\egroup
\hypertarget{targeted-public-spending-and-institutions}{%
\section{Targeted Public Spending and
Institutions}\label{targeted-public-spending-and-institutions}}

\label{app:proofs-targetability}

\hypertarget{proof-of-proposition}{%
\subsection{\texorpdfstring{Proof of Proposition
\ref{prop:institutions}}{Proof of Proposition }}\label{proof-of-proposition}}

\label{app:proof-institutions}

\bgroup \par\noindent\textbf{Proof.}\ 
For positive \(f\), we establish an interior authority optimum and
characterize it by a unique scalar equation, then derive authority
rankings and sharing incentives.

\emph{Step 1. Establish an interior authority optimum for \(f>0\).}
With the proposition's occupational distribution and output fixed,
substituting equilibrium public provision into welfare and dropping
terms independent of authority gives
\[
J_f(v)=(\epsilon-f)\log B_v+f\sum_gm_g\log v_g,
\qquad B_v=\sum_gm_gv_gb_g.
\]
The term multiplied by \(f\) is omitted at \(f=0\). The planner therefore solves
\begin{equation}
\begin{aligned}
\max_{(v_g)_g}\quad & J_f(v)\\
\text{subject to}\quad
&v_g\geq0\quad\text{for every }g,\\
&\sum_gm_gv_g=1.
\end{aligned}
\label{eq:proof-authority-problem}
\end{equation}
At \(f=0\), maximizing \(B_v\) puts all authority on groups with
maximal understanding, since \(B_v\) is a convex combination with
weights \(m_gv_g\).

For \(f>0\), the feasible simplex is compact and
\(\min_g b_g\leq B_v\leq\max_g b_g\), so \(\log B_v\)
remains bounded even when \(f>\epsilon\). If any \(v_g\) approaches
zero, \(fm_g\log v_g\to-\infty\), while the other terms are bounded
above. Extending \(J_f\) to minus infinity on the boundary makes it
upper semicontinuous. The feasible vector \(v_g=1\) has finite value;
hence a global optimum exists and is interior.

\emph{Step 2. Characterize the unique global optimum.}
If \(\varpi\) is the multiplier on the authority constraint,
every optimum must satisfy
\begin{equation}
(\epsilon-f)\frac{b_g}{B_v}+\frac{f}{v_g}=\varpi
\qquad\text{for every }g.
\label{eq:proof-authority-foc}
\end{equation}
Multiplying by \(m_gv_g\), summing over groups, and using both normalizations gives \(\varpi=\epsilon\). Thus
\begin{equation}
v_g^*
=\frac{f}{\epsilon-(\epsilon-f)b_g/B_v^*}.
\label{eq:proof-authority-implicit}
\end{equation}
Define \(\vartheta=(\epsilon-f)/(\epsilon B_v^*)\). Equation \eqref{eq:proof-authority-implicit} becomes
\begin{equation}
v_g^*=\frac{f/\epsilon}{1-\vartheta b_g},
\qquad
\sum_g\frac{m_g}{1-\vartheta b_g}=\frac{\epsilon}{f},
\qquad
\vartheta<\frac{1}{\max_g b_g}.
\label{eq:proof-authority-scalar}
\end{equation}
On \(\vartheta\in(-\infty,1/\max_gb_g)\), every denominator is
positive, and the left-hand side rises continuously and strictly from
zero to infinity. Thus the scalar equation has a unique solution, which
generates a positive, normalized vector \(v^*\). Its implied
\(B_v^*\) satisfies
\[
\vartheta B_v^*
=\frac f\epsilon\sum_gm_g\frac{\vartheta b_g}{1-\vartheta b_g}
=\frac{\epsilon-f}{\epsilon},
\]
so the definition of \(\vartheta\) also holds, including at
\(f=\epsilon\). Every optimum satisfies these conditions, and they
identify a unique feasible vector. Existence therefore makes this vector
the unique global optimum, including when \(f>\epsilon\); concavity
is not needed for this argument.

\emph{Step 3. Order formal authority and effective influence.}
Since the left-hand side of the scalar equation in \eqref{eq:proof-authority-scalar} equals one at \(\vartheta=0\), its solution is positive when \(f<\epsilon\), zero when \(f=\epsilon\), and negative when \(f>\epsilon\). Equation \eqref{eq:proof-authority-scalar} then gives greater authority to better-informed groups when \(f<\epsilon\), equal authority \(v_g^*=1\) when \(f=\epsilon\), and less authority to better-informed groups when \(f>\epsilon\).
Within this fixed allocation, \(\vartheta\) is common to all groups,
so effective influence remains increasing in understanding:
\begin{equation}
v_g^*b_g=\frac f\epsilon\frac{b_g}{1-\vartheta b_g},
\qquad
\frac{d}{db}\frac{b}{1-\vartheta b}=\frac{1}{(1-\vartheta b)^2}>0.
\label{eq:proof-effective-influence}
\end{equation}

\emph{Step 4. Compare sharing and withholding at every cost.}
Under the proposition's information-sharing comparison, the occupational
distribution, authority, output, \(f\), and the source group's own
understanding and private income are fixed. A member \(i\) of the source group \(h\) receives
\(u_i=y_i+\log Y+(\epsilon-f)\log B_v+f\log(v_hb_h)\).
The term multiplied by \(f\) is omitted at \(f=0\); for \(f>0\),
the assumption \(v_h>0\) makes this utility finite. Raising the other groups' understanding changes \(B_v\) but leaves every other term in the source member's payoff fixed. The gross gain is therefore
\((\epsilon-f)\log(B_v'/B_v)\).
Sharing is strictly preferred when this gain exceeds \(k\), withholding
when it is below \(k\), and the source is indifferent at equality.
Because \(B_v'>B_v\), the source shares for sufficiently small costs
when \(f<\epsilon\). At \(f=\epsilon\), it withholds if \(k>0\)
and is indifferent if \(k=0\). At \(f>\epsilon\), it strictly
withholds for every \(k\geq0\).
\hfill\rule{0.5em}{0.5em}\par\egroup

\hypertarget{proof-of-theorem-5}{%
\subsection{\texorpdfstring{Proof of Theorem
\ref{thm:targetable-production}}{Proof of Theorem }}\label{proof-of-theorem-5}}

\label{app:proof-targetable-production}

\bgroup \par\noindent\textbf{Proof.}\ 
Maintain the theorem's equal voting weights, uniform \(q\), symmetric
dimensional agenda with \(\nu_1>0\), affine labor intensities, and
\(f\in[0,\epsilon]\). Steps 1--2 obtain a welfare optimum in the
closed outcome set. Steps 3--6 bound linear objectives and construct
finite hierarchies attaining those bounds. Steps 7--8 recover the
optimum as a mixture of at most three hierarchies, or two at the endpoints.

\emph{Step 1. Express welfare in three outcomes and implement labor-share mixtures.}
For \(\mathcal A\in\Omega(q)\), let \(x_v=z_va(p_v,s_v)\)
be the labor required to perform the assignment at node \(v\) per unit of output, and define
\begin{equation}
C_{\mathcal A}=\sum_vx_v,
\qquad
N_{\mathcal A}=\sum_vx_vb(s_v),
\qquad
\Gamma_{\mathcal A}=\sum_vx_v\log b(s_v).
\label{eq:proof-target-mode-statistics}
\end{equation}
The corresponding economy-wide outcomes are
\(Y_{\mathcal A}=C_{\mathcal A}^{-1}\),
\(B_{\mathcal A}=N_{\mathcal A}/C_{\mathcal A}\), and
\(S_{\mathcal A}=\Gamma_{\mathcal A}/C_{\mathcal A}\).
Node \(v\) employs mass \(x_v/C_{\mathcal A}\); summing across
nodes of the same position gives its occupational share. Thus
\(S_{\mathcal A}\) is mean log understanding with the same employment
weights as \(B_{\mathcal A}\).
Finite labor-share combinations are feasible organizations. Given organizations
\(\mathcal A_a\) and positive labor shares \(\theta_a\) summing to one,
write \(C_a=C_{\mathcal A_a}\) and put
\(Y=\sum_a\theta_a/C_a\) and \(\alpha_a=(\theta_a/C_a)/Y\).
Add a root \((1,q)\) with children \((\alpha_a,q)\), identifying
each child with the root of an \(\alpha_a\)-scaled copy of
\(\mathcal A_a\). The new root is coherent and performs zero work:
\(d(q)-\sum_a\alpha_a d(q)=0\). Branch \(a\)'s labor requirement
is \(\alpha_aC_a\), so
\[
C=\sum_a\alpha_aC_a=1/Y,
\qquad
\frac{\alpha_aC_a}{C}=\theta_a.
\]
Give the new root its own matched zero-employment position and keep
branch positions distinct. Within each branch, relative occupational shares are unchanged,
and the resulting finite organization has
\begin{equation}
Y=\sum_a\theta_aY_a,
\qquad
B=\sum_a\theta_aB_a,
\qquad
S=\sum_a\theta_aS_a,
\qquad
\theta_a\geq0\quad\text{for every }a,
\qquad
\sum_a\theta_a=1.
\label{eq:proof-target-labor-mixture}
\end{equation}
Branches with zero weight are simply omitted. Consequently, the set of
finite-organization outcome triples is convex. Let \(\mathcal O\) denote
its closure. The planner maximizes
\begin{equation}
Y+\log Y+(\epsilon-f)\log B+fS.
\label{eq:proof-target-objective}
\end{equation}
For \(0\leq f\leq\epsilon\), this objective is concave and increasing in \(Y\) and weakly increasing in \(B\).

\emph{Step 2. Obtain a welfare optimum and its supporting gradient.}
Because a domain-specific issue in domain \(k\) is understood at level \(s_k\), the symmetric single-domain component contributes \((\nu_1/K)\sum_ks_k=\nu_1/K\) to every worker's understanding. Thus
\(B\geq\nu_1/K>0\) and \(S\geq\log(\nu_1/K)\), while \(B\leq1\), \(S\leq0\), and \(0\leq Y\leq Y^M\). These bounds are
preserved under closure, so \(\mathcal O\) is compact. Extending welfare
to minus infinity at \(Y=0\) makes it upper semicontinuous. Since a finite
exactly matched organization has finite welfare, a maximizer exists and
has \(Y>0\).
At a welfare optimum \(z^*=(Y^*,B^*,S^*)\), the gradient
\[
c^*=\left(1+\frac1{Y^*},\frac{\epsilon-f}{B^*},f\right)
\]
satisfies \(c^*\cdot(z-z^*)\leq0\) for every \(z\in\mathcal O\):
the segment toward \(z\) lies in the convex set \(\mathcal O\),
and welfare's right derivative along it cannot be positive. Thus
\(z^*\) maximizes the linear objective \(c^*\cdot z\).

\emph{Step 3. Bound the linear objective and establish a zero maximal residual.}
Fix \(\eta>0\) and \(\lambda\geq0\), and let
\begin{equation}
\begin{aligned}
H
&=\sup_{\mathcal A\in\Omega(q)}\quad
\{\eta Y_{\mathcal A}+\lambda B_{\mathcal A}+fS_{\mathcal A}\}\\
&=\sup_{\mathcal A\in\Omega(q)}\quad
\frac{\eta+\lambda N_{\mathcal A}+f\Gamma_{\mathcal A}}
{C_{\mathcal A}}.
\end{aligned}
\label{eq:proof-target-support}
\end{equation}
At the planner's optimum, \(\eta=1+1/Y\) and \(\lambda=(\epsilon-f)/B\). Define \(h(b)=\lambda b+f\log b\).
The residual is \(\eta+\lambda N_{\mathcal A}+f\Gamma_{\mathcal A}
-HC_{\mathcal A}\), or \(C_{\mathcal A}\) times the gap between
the ratio and \(H\). We will show that at the support value,
\begin{equation}
0=\sup_{\mathcal A\in\Omega(q)}\quad
\{\eta+\lambda N_{\mathcal A}+f\Gamma_{\mathcal A}-HC_{\mathcal A}\}.
\label{eq:proof-target-dinkelbach}
\end{equation}
Let \(G=2-1/K\) denote the total work required to produce one unit of
output under the uniform production profile. An organization in which the
exactly matched root performs all the work has
\(C_{\mathcal A}=G\ell_K\), \(B_{\mathcal A}=b_K\), and
\(S_{\mathcal A}=\log b_K\). Hence
\begin{equation}
H\geq \frac{\eta}{G\ell_K}+h(b_K)>h(b_K).
\label{eq:proof-target-root-bound}
\end{equation}
Averaging \(s\) over all domain permutations gives \(q\), so symmetry
and concavity imply \(b(s)\leq b_K\). Since \(h\) is nondecreasing,
\(H-h(b(s))\geq\delta:=H-h(b_K)>0\). This strict gap also
justifies \eqref{eq:proof-target-dinkelbach}. Along any sequence whose ratio approaches \(H\), the identity
\[
\frac{\eta+\sum_vx_vh(b(s_v))}{C_{\mathcal A}}
=\frac{\eta}{C_{\mathcal A}}+
\frac{\sum_vx_vh(b(s_v))}{C_{\mathcal A}}
\leq\frac{\eta}{C_{\mathcal A}}+h(b_K)
\]
implies \(C_{\mathcal A}\leq2\eta/\delta\) once the ratio is at
least \(H-\delta/2\). Multiplying its vanishing gap from \(H\) by
this bounded labor requirement makes the residual converge to zero.
Every residual is nonpositive by definition of \(H\).

\emph{Step 4. Bound assignment returns using uniform responsibilities.}
For any scope-\(K\) assignment, \(\beta(s,p)\leq1\) and
\(b(s)\leq b_K\). Since the bracketed support term is negative,
\[
\frac{\ell_K}{\beta(s,p)}[h(b(s))-H]
\leq \ell_K[h(b_K)-H],
\]
and the uniform root attains equality. This bounds the residual return from any scope-\(K\) assignment by the return at a uniform, exactly matched root.

Now consider a nonroot responsibility with profile \(p\) and scope
\(t<K\). Its required domains form a set \(T\). For an assigned
occupational profile \(s\), let
\(\alpha=\sum_{k\in T}s_k\). Summing \(\beta(s,p)p_k\leq s_k\) over \(k\in T\) gives
\(0<\beta(s,p)\leq\alpha\) at each positive-work assignment. For a bound on this assignment's return,
average its profiles over permutations within \(T\) and its complement.
The resulting responsibility is uniform on \(T\), and the occupational
profile becomes \(s_{T,\alpha}\), with coordinates
\begin{equation}
\bigl[s_{T,\alpha}\bigr]_k=
\begin{cases}
\alpha/t,&k\in T,\\
(1-\alpha)/(K-t),&k\notin T.
\end{cases}
\label{eq:proof-target-two-level-profile}
\end{equation}
The symmetrized assignment has fit \(\alpha\); symmetry and concavity
give \(b(s_{T,\alpha})\geq b(s)\). Since \(h(b)-H<0\), the
larger fit and weakly greater understanding imply
\begin{equation}
\frac{\ell_t}{\alpha}
[h(b(s_{T,\alpha}))-H]
\geq
\frac{\ell_t}{\beta(s,p)}
[h(b(s))-H].
\label{eq:proof-target-two-level-dominance}
\end{equation}
These are bounds on the return per unit of assigned work. Below, a coherent hierarchy with uniform responsibilities attains the bounds simultaneously, so this symmetrization does not require independently replacing nodes of the original tree.
Values \(\alpha<t/K\) can be discarded: replacing them by \(\alpha=t/K\) strictly raises fit and weakly raises understanding, which strictly improves the negative residual return. At \(\alpha=t/K\), the worker is uniform across all domains. Performing scope-\(t\) work with this profile requires \(K\ell_t/t\) units of labor per unit of work, compared with \(\ell_K\) at the root. The affine schedule gives
\[
\frac{K\ell_t}{t}-\ell_K
=(1-\kappa)\left(\frac Kt-1\right)>0.
\]
For \(\alpha\in[t/K,1]\), the two-level profile lies on the segment joining the uniform profiles on \(T\) and on all \(K\) domains. Required-domain knowledge is at least outside knowledge on this interval.
An issue uniform on a \(j\)-domain set \(U\) therefore has fit
\(j\alpha/t\) if \(U\subseteq T\), and \(j(1-\alpha)/(K-t)\)
otherwise. Dividing by \(\ell_j\) and averaging over the agenda makes
understanding affine, with endpoints \(b_K\) and \(b_t\):
\begin{equation}
\bar b_t(\alpha)
=b_K-d_t\left(\alpha-\frac tK\right),
\qquad
d_t=\frac{K(b_K-b_t)}{K-t},
\label{eq:proof-target-understanding-segment}
\end{equation}
The matched understanding levels in \eqref{eq:proof-br} satisfy
\begin{equation}
b_t
=\sum_{j=1}^t
\frac{\nu_j}{\ell_j}
\frac{\binom{t-1}{j-1}}{\binom Kj}.
\label{eq:proof-target-bt}
\end{equation}
For each nonroot scope \(t<K\), choose
\[
\alpha_t
\in
\arg\max_{\alpha\in[t/K,1]}
\quad
\frac{\ell_t}{\alpha}
[h(\bar b_t(\alpha))-H].
\]
The objective is continuous on \([t/K,1]\), since
\(\bar b_t(\alpha)\geq\nu_1/K>0\). Thus a maximizer exists with
\begin{equation}
\frac tK\leq\alpha_t\leq1.
\label{eq:proof-target-own-support}
\end{equation}

\emph{Step 5. Show that optimizing profiles' understanding weakly increases with scope.}
Pascal's identity in \eqref{eq:proof-target-bt} gives nonnegative first
and second differences of \(b_t\): each dimension's contributions are
proportional to \(\binom{t-1}{j-2}\) and \(\binom{t-1}{j-3}\),
respectively, with out-of-range coefficients zero. Hence \(b_t\) is
weakly increasing and discretely convex.

If \(b_t=b_K\), then \(\bar b_t(\alpha)=b_K\) throughout the segment. Because the bracketed support term is negative, \(\alpha=1\) uniquely maximizes the return for scope \(t\). Monotonicity of \(b_t\) implies that every broader scope then also has understanding \(b_K\). To establish the ordering, it remains to compare scopes with \(b_t<b_K\).

For \(f>0\) and \(b_t<b_K\), write
\(R_t(\alpha)=\ell_t[h(\bar b_t(\alpha))-H]/\alpha\).
Differentiating gives
\[
R_t'(\alpha)=\frac{\ell_t}{\alpha^2}
\{H-h(\bar b_t(\alpha))-\alpha d_th'(\bar b_t(\alpha))\}.
\]
An interior maximizer satisfies
\begin{equation}
H-h(\bar b_t(\alpha_t))
=\alpha_td_th'(\bar b_t(\alpha_t)).
\label{eq:proof-target-alpha-foc}
\end{equation}
At \(\alpha=t/K\) a maximizer requires \(R_t'(t/K)\leq0\);
at \(\alpha=1\) it requires \(R_t'(1)\geq0\). Define
\(D_t=t(b_K-b_t)/(K-t)\).
Using \(A_t=b_K+D_t\), write
\(b=\bar b_t(\alpha)=A_t-d_t\alpha\). The return in the
understanding coordinate \(b\in[b_t,b_K]\) is
\[
\widetilde R_t(b)=\ell_td_t\frac{h(b)-H}{A_t-b},
\qquad A_t-b=d_t\alpha>0.
\]
Define
\begin{equation}
Q_t(b)
=[D_t+b_K-b]h'(b)-[H-h(b)].
\label{eq:proof-target-understanding-root}
\end{equation}
Interior stationarity is \(Q_t(b)=0\). The derivative
\(\widetilde R_t'(b)=\ell_td_tQ_t(b)/(A_t-b)^2\) has the sign
of \(Q_t\), and
\(Q_t'(b)=[D_t+b_K-b]h''(b)<0\) since \(h''(b)=-f/b^2<0\). Consequently the unique maximizer is
\(b_t\) if \(Q_t(b_t)\leq0\), \(b_K\) if
\(Q_t(b_K)\geq0\), and otherwise the unique zero in
\((b_t,b_K)\). This also identifies a unique \(\alpha_t\).
The secant \((b_K-b_t)/(K-t)\) averages the successive differences
\(b_{u+1}-b_u\), \(u=t,\ldots,K-1\). Discrete convexity makes
these averages weakly increasing in \(t\), so their nonnegative
products with \(t\), namely \(D_t\), are also weakly increasing.
For \(t<t'\) with \(b_{t'}<b_K\), on the common interval,
\[
Q_{t'}(b)-Q_t(b)=(D_{t'}-D_t)h'(b)\geq0.
\]
If scope \(t\)'s optimum lies below \(b_{t'}\), ordering is
immediate. Otherwise the upward shift in \(Q_t\) cannot move its
maximizing point downward: an interior zero or the condition
\(Q_t(b_K)\geq0\) moves weakly right, and a lower-endpoint optimum
is already no greater than \(b_{t'}\). Scopes with \(b_t=b_K\)
have been handled above. Therefore
\begin{equation}
\bar b_1(\alpha_1)
\leq\bar b_2(\alpha_2)
\leq\cdots\leq b_K.
\label{eq:proof-target-understanding-order}
\end{equation}
At \(f=0\), Step 7 selects matched-cutoff maximizers from Theorem
\ref{thm:optimal-responsibility}, whose understanding levels \(b_t\)
have the same ordering.

\emph{Step 6. Attain all assignment and workload bounds with a finite coherent hierarchy.}
Distribute the constant \(\eta\) over the \(G\) units of work and
define the best residual return at each scope:
\begin{equation}
r_t(H)
=\frac{\eta}{G}
+\max_{\alpha\in[t/K,1]}\quad
\frac{\ell_t}{\alpha}
[h(\bar b_t(\alpha))-H]
\quad(t<K),
\qquad
r_K(H)=\frac{\eta}{G}+\ell_K[h(b_K)-H].
\label{eq:proof-target-scope-return}
\end{equation}
If \(X_t\) is an organization's scope-\(t\) workload, its residual
is at most \(\sum_tX_tr_t(H)\). Lemma
\ref{lem:proof-workload-prefix} represents these workloads as an upward
transport of \(g_1=1\) and \(g_t=1/[t(t-1)]\) for \(t\geq2\).
That is, nonnegative \(T_{js}\), \(j\leq s\), satisfy
\(\sum_{s\geq j}T_{js}=g_j\) and \(\sum_{j\leq s}T_{js}=X_s\).
Hence
\[
\sum_sX_sr_s(H)=\sum_j\sum_{s\geq j}T_{js}r_s(H)
\leq\sum_jg_j\max_{s\geq j}r_s(H).
\]
This workload bound is an accounting representation; a feasible tree
attaining it is constructed next. Write the bound as
\begin{equation}
\mathcal D(H)
=\sum_{j=1}^Kg_j\max_{j\leq s\leq K}r_s(H).
\label{eq:proof-target-exact-residual}
\end{equation}
For each source scope \(j\), let \(k(j)\) be the smallest scope that maximizes \(r_s(H)\) over \(s\geq j\):
\[
k(j)
=\min\left\{
s\in\{j,\ldots,K\}:
r_s(H)=\max_{t\in\{j,\ldots,K\}}r_t(H)
\right\}.
\]
If \(k(j)>j\), removing \(j\) from the maximizing set leaves
\(k(j+1)=k(j)\); if \(k(j)=j\), the next maximizer is larger.
Thus \(k\) is weakly increasing and its distinct values form a chain
\[
0=\widehat r_0<\widehat r_1<\cdots<\widehat r_m=K,
\qquad
k(j)=\widehat r_a\quad
\text{for }\widehat r_{a-1}<j\leq\widehat r_a.
\]
Start with the uniform root of scale one. A node with required set
\(T\), of size \(\widehat r_a\) with \(a\geq2\), delegates equally among all
\(\widehat r_{a-1}\)-element subsets, each child uniform on its subset.
Every domain belongs to a fraction
\(\widehat r_{a-1}/\widehat r_a\) of those subsets, so its average
child weight is \(1/\widehat r_a\): each division is coherent.
Scope \(\widehat r_1\) is terminal. The tree has finitely many levels
and children per node, with total scale one at each active scope.
Nonterminal scope \(\widehat r_a\) therefore performs
\[
\frac1{\widehat r_{a-1}}-\frac1{\widehat r_a}
=\sum_{j=\widehat r_{a-1}+1}^{\widehat r_a}g_j
\qquad(a\geq2)
\]
units of work; terminal scope \(\widehat r_1\) performs \(2-1/\widehat r_1=\sum_{j=1}^{\widehat r_1}g_j\). Give each node its own connected position, match the root exactly,
and assign \(s_{T,\alpha_t}\) at each active nonroot node.
Coherence constrains responsibilities, so these knowledge choices are
feasible independently across positions. Symmetry gives every node of
scope \(t\) its optimized return; each workload block goes to its
maximizing destination. Thus this finite organization attains
\eqref{eq:proof-target-exact-residual}.
No active nonroot scope can use \(\alpha_t=t/K\). At that endpoint, \(\bar b_t(\alpha_t)=b_K\) and
\[
r_t(H)
=\frac{\eta}{G}+\frac{K\ell_t}{t}[h(b_K)-H]
<\frac{\eta}{G}+\ell_K[h(b_K)-H]
=r_K(H),
\]
where the strict inequality follows from \(h(b_K)-H<0\) and
\(K\ell_t/t>\ell_K\). Since \(K\) is an available destination
whenever \(t\) is, such a scope cannot be chosen by \(k(j)\).
Every active nonroot scope therefore satisfies \(t/K<\alpha_t\leq1\).
Thus the hierarchy is pruned and broad-to-narrow, with each active
nonroot position matched or cross-trained as stated.

Each \(r_t(H)\) is continuous by maximization over a fixed compact
interval. If \(H_2>H_1\), each candidate falls by at least
\(H_2-H_1\), so
\[
\mathcal D(H_2)\leq\mathcal D(H_1)-G(H_2-H_1).
\]
Therefore \(\mathcal D\) is continuous, strictly decreasing, and tends
to minus infinity. At \(H=h(b_K)\), the root gives
\(\mathcal D(H)\geq\eta>0\). At the support value,
\eqref{eq:proof-target-dinkelbach} and the bound just attained give
\(\mathcal D(H)=0\). Hence this unique zero is attained by a finite
hierarchy, with the profile ordering from Step 5 when \(f>0\).

\emph{Step 7. The same compact set of hierarchies attains every
supporting maximum.}
Let \(\mathcal P\subset\mathbb R^3\) be the outcomes of the following finite hierarchies. For every chain
\(0=\widehat r_0<\widehat r_1<\cdots<\widehat r_m=K\), assign each source workload \(g_j\) with \(\widehat r_{a-1}<j\leq\widehat r_a\) to scope \(\widehat r_a\), implement these assignments using the canonical pruned, broad-to-narrow hierarchy above, and allow each active nonroot scope \(t\) to use any \(\alpha_t\in[t/K,1]\). Keep the root exactly matched. There are finitely many chains, each
indexed by a subset of \(\{1,\ldots,K-1\}\). For a fixed chain,
parameters range over a finite product of compact intervals; fits are
bounded away from zero and understanding is at least \(\nu_1/K\).
Thus \(C,N,\Gamma\), and hence outcomes, vary continuously.
Their finite union \(\mathcal P\) is compact and consists entirely
of finite-organization outcomes.

For \(\gamma>0\), equations \eqref{eq:proof-target-support}--\eqref{eq:proof-target-exact-residual} apply with \(h(b)=\lambda b+\gamma\log b\). When \(\gamma=0\) and \(\lambda>0\), let \(H\) be the corresponding support value and set \(\bar b=H/\lambda\). The all-root organization gives \(H>\lambda b_K\), so \(\bar b>b_K\); Lemma \ref{lem:proof-assignment-envelope} and the full matched-cutoff construction in the proof of Theorem \ref{thm:optimal-responsibility} then supply a maximizer in \(\mathcal P\). These full cutoffs are the finitely many organizations indexed by
\(j=0,\ldots,K-1\) in \eqref{eq:proof-cutoff-vector}; their chains
are included in \(\mathcal P\), with \(\alpha_t=1\) throughout.
The assignment envelope sends source work either to itself or the root,
and ties can be resolved at a full cutoff. When
\(\lambda=\gamma=0\), a labor-minimizing cutoff supplies a maximizer.
For every \(\eta>0\) and \(\lambda,\gamma\geq0\), we therefore have
\begin{equation}
\begin{aligned}
\mathsf H(\eta,\lambda,\gamma)
&:=
\sup_{(Y,B,S)\in\mathcal O}
\{\eta Y+\lambda B+\gamma S\}\\
&=
\max_{(Y,B,S)\in\mathcal P}
\{\eta Y+\lambda B+\gamma S\}.
\end{aligned}
\label{eq:proof-target-compact-oracle}
\end{equation}
Taking the closure preserves linear suprema, while the residual equation and the construction in Step 6 supply a member of \(\mathcal P\) that attains the unrestricted supremum.

\emph{Step 8. A welfare optimum combines at most three hierarchies,
or two at the endpoints.}
For a nonzero coefficient vector \(c\), write the sets of maximizers
over \(\mathcal O\) and \(\mathcal P\) as
\[
F_{\mathcal O}(c)=\{z\in\mathcal O:c\cdot z=\mathsf H(c)\},
\qquad M(c)=\operatorname*{arg\,max}_{z\in\mathcal P}c\cdot z.
\]
If the two support functions agree on a neighborhood of \(c\), then
\[
F_{\mathcal O}(c)=\operatorname{co}M(c).
\]
To prove the nontrivial inclusion, fix \(z\in F_{\mathcal O}(c)\)
and any direction \(d\). For small \(t>0\), compactness supplies
\(z_t\in\mathcal P\) maximizing \((c+td)\cdot z_t\). Local
support equality gives
\[
\mathsf H(c)+t\,d\cdot z
\leq\mathsf H(c+td)
=c\cdot z_t+t\,d\cdot z_t
\leq\mathsf H(c)+t\,d\cdot z_t.
\]
A subsequence as \(t\downarrow0\) converges to
\(\bar z\in M(c)\), by compactness and continuity of the support
function. Thus \(d\cdot z\leq\max_{w\in M(c)}d\cdot w\) for
every \(d\). Separation from the compact convex set
\(\operatorname{co}M(c)\) proves the inclusion. The reverse inclusion
follows from \(\mathcal P\subseteq\mathcal O\) and convexity.
Thus equality on a neighborhood identifies the whole set of maximizers.

For \(0<f<\epsilon\), the supporting welfare gradient \(c^*\)
in Step 2 has three positive coordinates. Equation
\eqref{eq:proof-target-compact-oracle} holds on a neighborhood of
\(c^*\), so the preceding argument gives
\(z^*\in\operatorname{co}M(c^*)\).
Every hierarchy generating a point of \(M(c^*)\) has zero residual
and binds the assignment and workload bounds: all these inequalities
bound its residual above by \(\mathcal D(H)=0\), so any strict
inequality would contradict its zero residual.
For \(f>0\), each scope with \(b_t<b_K\) has the unique optimizing
understanding characterized by \(Q_t\); a scope with \(b_t=b_K\)
uses \(\alpha_t=1\). Strict root dominance excludes
\(\alpha_t=t/K\) at active nonroot scopes. These equality conditions
give the profile bounds and understanding ordering in every selected
hierarchy; the root is matched by definition of \(\mathcal P\).
All points of \(M(c^*)\) lie in the supporting plane
\(c^*\cdot z=\mathsf H(c^*)\), of affine dimension at most two.
Carath\'eodory's theorem on this plane represents \(z^*\) as a
convex combination of at most three such outcomes.

At \(f=0\), use the finite set of full matched-cutoff organizations constructed in the proof of Theorem \ref{thm:optimal-responsibility}. For every \(\eta,\lambda>0\), the residual equation above and the assignment-envelope bound in that proof show that this set attains
\[
\sup_{(Y,B)\in\operatorname{proj}_{Y,B}\mathcal O}
\{\eta Y+\lambda B\}.
\]
The support function of the \((Y,B)\) outcomes generated by these organizations therefore agrees, in a neighborhood of \((1+1/Y^*,\epsilon/B^*)\), with that of \(\operatorname{proj}_{Y,B}\mathcal O\). The same exposed-face argument places \((Y^*,B^*)\) in the convex hull of exact support maximizers lying on one line, so two outcomes suffice; \(S\) does not enter welfare. At \(f=\epsilon\), apply the argument to the restriction \(\mathsf H(\eta,0,\gamma)\) at \((\eta,\gamma)=(1+1/Y^*,\epsilon)\). Support equality holds on a neighborhood in this two-dimensional coefficient space. The \((Y,S)\) projections of its exact maximizers again lie on one line and their convex hull contains \((Y^*,S^*)\), so two outcomes suffice; \(B\) does not enter welfare. At \(f=0\), the selected components are matched and inherit the
ordering of \(b_t\); at \(f=\epsilon>0\), the same equality
conditions as above give the profile bounds and ordering.
Finally, \eqref{eq:proof-target-labor-mixture} implements each convex
combination as one finite organization, preserving the component
profiles. Its welfare equals the relaxed optimum, so it is a global
optimum over the original feasible class.
\hfill\rule{0.5em}{0.5em}\par\egroup
\hypertarget{artificial-intelligence}{%
\section{Artificial Intelligence}\label{artificial-intelligence}}

\label{app:proofs-ai}

\hypertarget{proof-of-proposition-1}{%
\subsection{\texorpdfstring{Proof of Proposition
\ref{prop:ai-sorting}}{Proof of Proposition }}\label{proof-of-proposition-1}}

\label{app:proof-ai-sorting}

Use the notation and assumptions of Section~\ref{sec:ai}. We first allow
the firm to choose both the organization and the work assigned to AI,
and show that a global cost minimum is attained on a productively optimal
pre-AI hierarchy. The understanding and planner comparisons then hold
that hierarchy fixed, with proportional displacement within each scope.

For the unrestricted problem, write capacity use and remaining human
labor as
\[
I(\mathcal A,\xi)
=\sum_{v\in T}a_{r(p_v)}^{\mathrm{AI}}\xi_v,
\qquad
C_H(\mathcal A,\xi)
=\sum_{v\in T}(z_v-\xi_v)
\frac{\ell_{r(p_v)}}{\beta(s_v,p_v)}.
\]
Here \(\mathcal A\in\Omega(q)\), \(0\leq\xi_v\leq z_v\), and
global feasibility requires \(I(\mathcal A,\xi)\leq A\).
On a productively optimal pre-AI hierarchy, human knowledge is exactly
matched and the scope workloads are \(g_r(q)>0\). Aggregating automated
work gives \(x_r=\sum_{v:r(p_v)=r}\xi_v\). Conversely, every
\(x\in\mathcal X(A)\) is implemented by automating the fraction
\(x_r/g_r(q)\) at each scope-\(r\) node. Proportional displacement
preserves the layer's average understanding \(b_r\) while it retains
human work; a fully automated layer has zero employment weight. For
the proof, denote the remaining understanding numerator by
\[
N_H(x)=\sum_{r=1}^n\ell_r[g_r(q)-x_r]b_r,
\qquad B(x)=\frac{N_H(x)}{C_H(x)},
\]
and write \(I(x)=\sum_ra_r^{\mathrm{AI}}x_r\). The maintained
capacity restriction gives \(C_H>0\) on \(\mathcal X(A)\), and the
assumed positivity of \(B\) gives \(N_H>0\).

\begin{lemma}[Production hierarchy and AI allocation]
\label{lem:proof-ai-separation}
The minimum human labor requirement over all feasible pairs
\((\mathcal A,\xi)\) is attained on a productively optimal pre-AI
hierarchy. On that hierarchy, the firm solves
\begin{equation}
\min_{x\in\mathcal X(A)}\quad
\sum_{r=1}^n\ell_r[g_r(q)-x_r].
\label{eq:proof-ai-firm}
\end{equation}
Equivalently, with \(A_r=a_r^{\mathrm{AI}}x_r\) denoting capacity
assigned to scope \(r\), it solves
\begin{equation}
\begin{aligned}
\max_{(A_r)_r}\quad
&\sum_{r=1}^n\rho_rA_r\\
\text{subject to}\quad
&\sum_rA_r\leq A,\\
&0\leq A_r\leq a_r^{\mathrm{AI}}g_r(q)
\quad\text{for every }r.
\end{aligned}
\label{eq:proof-ai-knapsack}
\end{equation}
\end{lemma}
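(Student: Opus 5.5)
The argument has two parts. First, for any feasible pair \((\mathcal A,\xi)\), I would compare its remaining human labor with what some allocation on a productively optimal hierarchy achieves using no more capacity. Second, on that hierarchy, I would change variables so the firm's problem becomes the fractional knapsack \eqref{eq:proof-ai-knapsack}.

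For the first part, fix \((\mathcal A,\xi)\) with \(I(\mathcal A,\xi)\leq A\). I would start with two reductions. Since \(a(p_v,s_v)\geq\ell_{r(p_v)}\), with equality under exact matching (equation \eqref{eq:proof-exact-match}), re-staffing every node with matched workers weakly lowers human labor and leaves capacity use unchanged. Next, as in Step 1 of the proof of Theorem~\ref{thm:efficient-organization}, a matched terminal of scope \(r>1\) can be replaced by an interface node plus pure-domain children. I would give each child the automated share \(\xi_v\) in proportion to work, so that automated work moves from scope \(r\) to scope \(1\). This weakly lowers capacity use, because \(a^{\mathrm{AI}}_1\leq a^{\mathrm{AI}}_r\), and weakly lowers human labor. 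After these steps the organization has exact matching and scope-one terminals.

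Let \(X_r\) be the total work at scope \(r\) and \(y_r\leq X_r\) the work automated there. Remaining labor is \(\sum_r\ell_r(X_r-y_r)\), and capacity use is \(\sum_r a_r^{\mathrm{AI}}y_r\). By the cut bounds \eqref{eq:proof-cut-bound}, the tails satisfy \(T_h=\sum_{r\geq h}X_r\geq\sum_{r\geq h}g_r(q)\), and total work is the same for every organization. So \(X\) is obtained from \(g\) by moving work to weakly broader scopes, which is the upward-transport representation. I would use this representation for general \(q\), justifying it from the tail inequalities exactly as in Step 2 of Lemma~\ref{lem:proof-workload-prefix}. The key step is then to transport the automation back down with the work. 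If a unit of work in \(X_r\) originated at source scope \(t\leq r\), I assign its automated fraction to source \(t\) on the optimal hierarchy. This requires capacity \(a^{\mathrm{AI}}_t\leq a^{\mathrm{AI}}_r\), so the capacity constraint stays satisfied. The human labor for that unit becomes \(\ell_t\leq\ell_r\) times the unautomated amount. The resulting \(x_t\) is at most \(g_t(q)\), so \(x\in\mathcal X(A)\), and \(C_H(x)\) is weakly below the original labor. I expect this transport step to be the main obstacle. Two things need care: the transport should be chosen so that automated fractions are split consistently within each destination scope (taking equal fractions across all sources sent to \(r\) works), and the general-\(q\) version of the prefix characterization must be verified.

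For the second part, on the hierarchy, I aggregate \(\xi\) by scope, as described before the lemma, and use proportional implementation; this shows the problem equals \eqref{eq:proof-ai-firm}. I then substitute \(A_r=a^{\mathrm{AI}}_rx_r\). This turns the objective into \(\sum_r\ell_rg_r-\sum_r\rho_rA_r\) and the constraints into the bounds in \eqref{eq:proof-ai-knapsack}. Minimizing the first expression is equivalent to maximizing \(\sum_r\rho_rA_r\). Existence follows from compactness of \(\mathcal X(A)\) and continuity of the linear objective.
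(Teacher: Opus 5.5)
Your argument is correct, and it differs from the paper's at the central step. The paper also refines non-pure terminals into pure-domain children and uses the tail bounds $T_h\ge\sum_{r\ge h}g_r(q)$, but it never moves automation between scopes. Instead it prices capacity at a multiplier $\lambda\ge0$, so that each unit of scope-$r$ work costs at least $c_r(\lambda)=\min\{\ell_r,\lambda a_r^{\mathrm{AI}}\}$. This is a single cost sequence that is nondecreasing in scope. Summation by parts against the tail bounds then shows that the productively optimal hierarchy attains the Lagrangian infimum over all organizations. Linear-programming strong duality on the hierarchy's problem, together with complementary slackness, transfers this to the capacity-constrained problem.

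Your route replaces the multiplier with an explicit coupling. You write an arbitrary organization's workload vector as an upward transport of $g(q)$, then send each destination's automated fraction back to its source scopes. This lowers capacity use and human labor at the same time, because both $a_r^{\mathrm{AI}}$ and $\ell_r$ are monotone in scope. The argument is more elementary and needs no duality. It also gives more: every feasible pair $(\mathcal A,\xi)$ is dominated in both coordinates by an allocation on the hierarchy.

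The price is that you need the transport representation for general $q$. As you anticipate, this follows from the greedy fill in Step 2 of Lemma~\ref{lem:proof-workload-prefix}. Given equal totals, the tail bounds are equivalent to the prefix bounds $\sum_{t<h}X_t\le\sum_{t<h}g_t(q)$, and that greedy argument uses nothing else.

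The paper's dual route avoids constructing any transport and produces the shadow price of capacity directly. Its further claims, that capacity binds and that scopes are filled in descending order of $\rho_r$, go beyond the lemma as stated, so your proposal need not cover them.
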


\bgroup \par\noindent\textbf{Proof.}\ 
We establish a lower bound for all organizations and show that the
selected hierarchy attains it. Linear-programming duality then restores
the capacity constraint, and an exchange argument gives the allocation
rule.

\emph{Step 1. Bound every organization's Lagrangian cost.}
For \(\lambda\geq0\), attach the term \(\lambda(I-A)\) to labor
cost and remove the aggregate capacity constraint, retaining
\(0\leq\xi_v\leq z_v\). Thus \(\lambda\) values one unit of AI
capacity in labor units. The cheaper way to perform one unit of
scope-\(r\) work has cost
\(c_r(\lambda)=\min\{\ell_r,\lambda a_r^{\mathrm{AI}}\}\).
Both arguments are nondecreasing in scope, so \(c_r(\lambda)\) is
nondecreasing. Since \(\beta(s_v,p_v)\leq1\), unmatched human
knowledge cannot lower this cost. With
\(Z_r(\mathcal A)=\sum_{v:r(p_v)=r}z_v\), every organization and
nodewise-feasible allocation therefore satisfy
\begin{equation}
C_H(\mathcal A,\xi)+\lambda I(\mathcal A,\xi)
=\sum_{v\in T}\left\{
(z_v-\xi_v)\frac{\ell_{r(p_v)}}{\beta(s_v,p_v)}
+\lambda a_{r(p_v)}^{\mathrm{AI}}\xi_v
\right\}
\geq\sum_rc_r(\lambda)Z_r(\mathcal A).
\label{eq:proof-ai-global-lower-bound}
\end{equation}
Each human or automated unit at scope \(r\) costs at least
\(c_r(\lambda)\); together these units account for the node's full
workload.

\emph{Step 2. Show that one pre-AI hierarchy attains the bound.}
First refine an arbitrary tree until every terminal profile is pure.
At a terminal profile \(p\) of scale \(\omega\) and scope \(r>1\),
retain \(p\) as an internal node and add a pure-domain child \(e_i\)
of scale \(\omega p_i\) for each \(p_i>0\). This is coherent:
the child scales sum to \(\omega\), and their scale-weighted profiles
sum to \(\omega p\). The parent handles \(\omega d(p)\) interface
work, while the children execute \(\omega\) component work. Hence
the contribution to the bound changes from
\(\omega c_r(\lambda)[1+d(p)]\) to
\(\omega[c_r(\lambda)d(p)+c_1(\lambda)]\), a weak reduction.
It suffices to bound costs on the refined trees.
For such a tree, let \(T_h=\sum_{r=h}^nZ_r\) be workload at scope
\(h\) or above. Work conservation gives \(\sum_rZ_r=1+d(q)\).
Expanding \(c_r=c_1+\sum_{h=2}^r(c_h-c_{h-1})\) yields
\[
\sum_{r=1}^n c_r(\lambda)Z_r
=c_1(\lambda)[1+d(q)]
+\sum_{h=2}^n[c_h(\lambda)-c_{h-1}(\lambda)]T_h
\geq\sum_{r=1}^n c_r(\lambda)g_r(q).
\]
The inequality uses the cut bound \eqref{eq:proof-cut-bound} at
scope \(h-1\), namely \(T_h\geq\sum_{r=h}^ng_r(q)\), with
nonnegative coefficients, and \(\sum_rg_r(q)=1+d(q)\).
A productively optimal hierarchy attains all these tail bounds at once.
Matching human knowledge and assigning every unit to the cheaper
technology also attains equality in
\eqref{eq:proof-ai-global-lower-bound}; either technology or a mixture
is permissible at a tie. Consequently, for every \(\lambda\geq0\),
\[
\begin{aligned}
\inf_{\mathcal A\in\Omega(q),\ 0\leq\xi_v\leq z_v}
\{C_H(\mathcal A,\xi)+\lambda[I(\mathcal A,\xi)-A]\}
&=\min_{0\leq x_r\leq g_r(q)}
\{C_H(x)+\lambda[I(x)-A]\}\\
&=\sum_{r=1}^n c_r(\lambda)g_r(q)-\lambda A.
\end{aligned}
\]
At \(\lambda=0\), all work can be automated in this relaxed problem,
whose aggregate capacity constraint has been removed.

\emph{Step 3. Transfer optimality to the constrained global problem.}
Problem \eqref{eq:proof-ai-firm} is a feasible finite linear program
with compact feasible set. Strong duality provides a minimizer \(x^*\)
and a multiplier \(\lambda^*\geq0\) such that
\[
C_H(x^*)
=\min_{0\leq x_r\leq g_r(q)}
\{C_H(x)+\lambda^*[I(x)-A]\},
\qquad
\lambda^*[I(x^*)-A]=0.
\]
The condition \(\lambda^*[I(x^*)-A]=0\) requires full use of
capacity when \(\lambda^*>0\). For any globally feasible pair
\((\widetilde{\mathcal A},\widetilde\xi)\),
\[
\begin{aligned}
C_H(\widetilde{\mathcal A},\widetilde\xi)
&\geq C_H(\widetilde{\mathcal A},\widetilde\xi)
  +\lambda^*[I(\widetilde{\mathcal A},\widetilde\xi)-A]\\
&\geq C_H(x^*)+\lambda^*[I(x^*)-A]
=C_H(x^*).
\end{aligned}
\]
Feasibility gives the first inequality, Step 2 gives the second, and
\(\lambda^*[I(x^*)-A]=0\) gives the equality. Thus the hierarchy's
constrained optimum also minimizes labor over all organizations.
Its capacity constraint binds: if capacity were unused, the restriction
\(A<\sum_ra_r^{\mathrm{AI}}g_r(q)\) leaves some human work, and a
small feasible increase in its automation strictly reduces labor.

\emph{Step 4. Rank scopes by labor savings per unit of capacity.}
Substitution gives \(C_H(x)=C_H(0)-\sum_r\rho_rA_r\), proving
\eqref{eq:proof-ai-knapsack}. If \(\rho_r>\rho_t\), scope \(r\)
is below its cap, and \(A_t>0\), moving
\(0<\delta\leq\min\{a_r^{\mathrm{AI}}g_r(q)-A_r,\ A_t\}\)
units of capacity from \(t\) to \(r\) preserves feasibility and
saves \(\delta(\rho_r-\rho_t)>0\) additional labor. Hence optimal
allocations fill scopes in descending order of \(\rho_r\), with
arbitrary allocation among ties. Conversely, relative to this ordering,
any other full-capacity allocation moves capacity from weakly higher
returns to weakly lower returns, so cannot improve the objective.
With distinct ratios, only the last scope receiving capacity can be
partially automated.
\hfill\rule{0.5em}{0.5em}\par\egroup

\begin{lemma}[Average understanding]
\label{lem:proof-ai-understanding}
For \(x\in\mathcal X(A)\) with \(C_H(x),N_H(x)>0\),
\begin{equation}
\frac{\partial B(x)}{\partial x_r}
=\frac{\ell_r}{C_H(x)}[B(x)-b_r],
\qquad
\frac{\partial B(x)}{\partial A_r}
=\rho_rY(x)[B(x)-b_r].
\label{eq:proof-ai-understanding}
\end{equation}
\end{lemma}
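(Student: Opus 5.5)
The plan is a direct quotient-rule computation, using \(B(x)=N_H(x)/C_H(x)\) with both functions affine in \(x\), followed by a linear change of variables to capacity. The setting is the proof above. Both \(N_H\) and \(C_H\) are affine in \(x\), and both are positive on the relevant domain, so \(B\) is differentiable there. The partial derivatives of the two affine pieces are read off directly:
\[
\frac{\partial C_H}{\partial x_r}=-\ell_r,
\qquad
\frac{\partial N_H}{\partial x_r}=-\ell_rb_r.
\]

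\textbf{Step 1: derivative in \(x_r\).} The quotient rule gives
\[
\frac{\partial B}{\partial x_r}
=\frac{-\ell_rb_rC_H+\ell_rN_H}{C_H^2}
=\frac{\ell_r}{C_H}\left[\frac{N_H}{C_H}-b_r\right]
=\frac{\ell_r}{C_H}\,[B-b_r].
\]
This is the first identity. It has a natural reading: removing a unit of scope-\(r\) labor raises the mean exactly when that layer's understanding is below average.

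\textbf{Step 2: derivative in capacity \(A_r\).} I would write the same objects as functions of the capacity allocation using \(x_r=A_r/a_r^{\mathrm{AI}}\). Holding the other scopes fixed, the chain rule gives \(\partial B/\partial A_r=(1/a_r^{\mathrm{AI}})\,\partial B/\partial x_r\). Then \(\ell_r/a_r^{\mathrm{AI}}=\rho_r\) by definition and \(1/C_H(x)=Y(x)\) by definition, which yields \(\rho_rY(x)[B-b_r]\).

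\textbf{Main obstacle.} The only point needing care is the domain. If \(x\) lies on the boundary of \(\mathcal X(A)\), the derivatives should be read as one-sided derivatives along feasible directions. This causes no difficulty, because \(B\) is a ratio of affine functions with a positive denominator on an open neighborhood of any point where \(C_H,N_H>0\). The formula therefore extends as an ordinary derivative on that neighborhood.
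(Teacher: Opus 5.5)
Your proposal is correct and is essentially the paper's own proof: with the hierarchy and the $b_r$ fixed, apply the quotient rule to $B=N_H/C_H$ using $\partial N_H/\partial x_r=-\ell_rb_r$ and $\partial C_H/\partial x_r=-\ell_r$, then divide by $a_r^{\mathrm{AI}}$ and use $\rho_r=\ell_r/a_r^{\mathrm{AI}}$ and $Y=1/C_H$. Your remark on boundary points matches the paper's note that the same formulas give the feasible one-sided changes there.
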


\bgroup \par\noindent\textbf{Proof.}\ 
With the hierarchy and its \(b_r\)'s fixed, differentiate
\(B=N_H/C_H\), using \(\partial N_H/\partial x_r=-\ell_rb_r\)
and \(\partial C_H/\partial x_r=-\ell_r\). Division by
\(a_r^{\mathrm{AI}}\) gives the second expression. The same formulas
give feasible one-sided changes at boundaries. Automation reduces the
layer's share of human work, so understanding rises precisely when that
layer's understanding is below the current mean.
\hfill\rule{0.5em}{0.5em}\par\egroup

\medskip
\bgroup \par\noindent\textbf{Proof of Proposition \ref{prop:ai-sorting}.}\ 
Lemma~\ref{lem:proof-ai-separation} proves the global firm claim and
allocation rule. On the fixed hierarchy with proportional displacement,
Theorem~\ref{thm:division-understanding} gives
\(b_1\leq\cdots\leq b_n\). When broader scopes are automated
first, the layer currently receiving AI has the greatest \(b_r\)
among layers retaining human employment. Thus \(b_r\geq B\), and
\eqref{eq:proof-ai-understanding} makes \(B\) weakly decreasing
while that layer is automated. Continuity of \(B=N_H/C_H\), with
\(C_H>0\), joins these comparisons across layer exhaustion points.
With narrower scopes first, the current layer has the lowest remaining
\(b_r\), so \(b_r\leq B\) and understanding weakly increases.
The comparisons concern these specified orderings; tied firm returns
alone do not select an ordering among the tied scopes.

For the planner's objective \(W=Y+\log Y+\epsilon\log B\),
Lemma~\ref{lem:proof-ai-understanding} and
\(\partial Y/\partial A_r=\rho_rY^2\) give
\[
\frac{\partial W}{\partial A_r}
=\rho_rY\left[Y+1+\epsilon\left(1-\frac{b_r}{B}\right)\right].
\]
This derivative is strictly positive whenever \(b_r\leq B\).
If capacity is unused, choose the lowest-understanding layer retaining
human work. Its \(b_r\) is at most the employment-weighted mean
\(B\), and a small increase in its automation is feasible and raises
welfare. Thus every planner optimum uses all capacity, whether or not
AI is scope-neutral.
At neutrality, \(\rho_r=\rho>0\) for all \(r\). Write
\(C_H^0=C_H(0)\) and \(N_H^0=N_H(0)\). Every allocation with
\(\sum_rA_r=A\) satisfies
\begin{equation}
C_H=C_H^0-\rho A,
\qquad
N_H=N_H^0-\rho\sum_rb_rA_r.
\label{eq:proof-ai-neutral}
\end{equation}
Output is therefore constant across these allocations. At fixed
\(C_H>0\), welfare increases with \(N_H\), since
\(\partial W/\partial N_H=\epsilon/N_H>0\). The planner minimizes
\(\sum_rb_rA_r\). If a higher-\(b_r\) layer receives capacity
while a lower-\(b_r\) layer is below its cap, transferring capacity
to the latter strictly reduces this sum. Hence the planner fills scopes
in ascending order of \(b_r\), with arbitrary allocation among ties.
\hfill\rule{0.5em}{0.5em}\par\egroup

\hypertarget{proof-of-theorem-6}{%
\subsection{\texorpdfstring{Proof of Theorem
\ref{thm:ai-reversal}}{Proof of Theorem }}\label{proof-of-theorem-6}}

\label{app:proof-ai-reversal}

\bgroup \par\noindent\textbf{Proof.}\ 
We first construct a feasible welfare improvement on a global firm
optimum. Under the stronger understanding-gap hypothesis, a uniform
comparison then establishes optimality against all allocations on the
fixed hierarchy.
Fix a productively optimal pre-AI hierarchy. Let \(C^0\) and
\(N^0\) be its no-AI labor requirement and understanding numerator.
The hypothesis \(b_1<b_n\) implies \(n\geq2\). Consider
\begin{equation}
\rho_r(\zeta)=\rho[1+\zeta\delta_r],
\qquad
0=\delta_1<\delta_2<\cdots<\delta_n=1.
\label{eq:proof-ai-tilt}
\end{equation}
At \(\zeta=0\), \(a_r^{\mathrm{AI}}(0)=\ell_r/\rho\)
is positive and strictly increasing in \(r\). Since there are finitely
many scopes, both properties persist for small \(\zeta\geq0\),
so Lemma~\ref{lem:proof-ai-separation} applies. By hypothesis, the
fixed capacity \(A>0\) can be absorbed by either endpoint layer:
\(A\leq a_s^{\mathrm{AI}}(\zeta)g_s(q)\) for \(s\in\{1,n\}\)
throughout a neighborhood of zero. Both endpoint allocations below are
therefore feasible, with positive labor and understanding.

\emph{Step 1. Exhibit a welfare improvement on a global firm optimum.}
For \(\zeta>0\), \(\rho_n(\zeta)\) is uniquely largest, and
scope \(n\) can absorb all capacity. Thus the firm's unique optimal
capacity vector on the fixed hierarchy has \(A_n=A\), with all other
coordinates zero. By Lemma~\ref{lem:proof-ai-separation}, this
implementation also attains a global cost minimum over organizations.
For \(s\in\{1,n\}\), assigning all capacity to endpoint \(s\)
gives labor, understanding numerator, and welfare
\[
\begin{aligned}
C_s(\zeta)&=C^0-\rho_s(\zeta)A,
&
N_s(\zeta)&=N^0-\rho_s(\zeta)Ab_s,\\
W_s(\zeta)&=\frac{1}{C_s(\zeta)}-\log C_s(\zeta)
+\epsilon\log\!\left(\frac{N_s(\zeta)}{C_s(\zeta)}\right).
\end{aligned}
\]
At neutrality, both endpoints have \(C_s(0)=C^0-\rho A\), so the
material component of welfare and the denominator of understanding
coincide. Since \(b_1<b_n\),
\begin{equation}
W_1(0)-W_n(0)
=\epsilon\log\frac{N^0-\rho Ab_1}{N^0-\rho Ab_n}>0.
\label{eq:proof-ai-neutral-gap}
\end{equation}
The positive endpoint outcomes vary continuously with \(\zeta\), so
this strict ranking persists for sufficiently small \(\zeta>0\).
Scope-\(1\) automation on the same hierarchy therefore strictly
improves welfare over the displayed global firm optimum. This feasible
improvement establishes the reversal without solving the planner's
problem over organizations.

\emph{Step 2. Prove scope-\(1\) optimality on the fixed hierarchy.}
Suppose now \(b_1<b_r\) for all \(r\geq2\), and set
\(\Delta_b=\min_{2\leq r\leq n}(b_r-b_1)>0\). The comparison
must cover allocations assigning arbitrarily little capacity elsewhere,
so the endpoint ranking alone is insufficient. Write
\(A'=(A'_1,\ldots,A'_n)\) for an alternative capacity vector,
with feasible set
\[
\mathcal K(\zeta)=
\left\{A'\in\mathbb R_+^n:
A'_r\leq a_r^{\mathrm{AI}}(\zeta)g_r(q)\ \text{for every }r,
\quad \sum_r A'_r\leq A\right\}.
\]
Its outcomes satisfy
\[
C_H(A';\zeta)=C^0-\sum_r\rho_r(\zeta)A'_r,
\qquad
N_H(A';\zeta)=N^0-\sum_r\rho_r(\zeta)b_rA'_r.
\]
Welfare is continuous on this compact feasible set and attains a
maximum. The capacity-exhaustion argument in the proof of
Proposition~\ref{prop:ai-sorting} applies at each \(\zeta\), so
every maximum has \(\sum_rA'_r=A\).
For any such vector, let \(A_{>1}=\sum_{r=2}^nA'_r\) be capacity
assigned outside scope \(1\); then \(A'_1=A-A_{>1}\). Since
\(\delta_1=0\), comparison with scope-\(1\) automation gives
\[
\begin{aligned}
C_H(A';\zeta)-C_1(\zeta)
&=-\rho\zeta\sum_{r=2}^n\delta_rA'_r,
\\
N_H(A';\zeta)-N_1(\zeta)
&=-\rho\sum_{r=2}^n
   [b_r-b_1+\zeta\delta_rb_r]A'_r
\leq-\rho\Delta_b A_{>1}.
\end{aligned}
\]
Here \(0\leq\delta_r\leq1\) and \(b_r\geq0\). Thus the labor
gain is at most \(\rho\zeta A_{>1}\), while the numerator loss is
at least \(\rho\Delta_b A_{>1}\).

We next bound their welfare effects uniformly over feasible vectors.
Positivity and compactness at zero give positive minima of \(C_H\)
and \(N_H\). These remain bounded away from zero for small
\(\zeta\geq0\): otherwise a sequence of feasible vectors with
\(\zeta\downarrow0\) has a convergent subsequence whose limit is
feasible at zero, by continuity of the capacity bounds, and has
\(C_H=0\) or \(N_H=0\), a contradiction. Hence for some
\(\underline C,\underline N,\bar\zeta>0\), all outcomes with
\(0\leq\zeta\leq\bar\zeta\) lie in
\([\underline C,C^0]\times[\underline N,N^0]\).
This rectangle also contains the horizontal and vertical segments
joining any alternative outcome to the scope-\(1\) outcome.
On it, \(\partial W/\partial C=-1/C^2-(1+\epsilon)/C<0\)
and \(\partial W/\partial N=\epsilon/N>0\).
Therefore \(|\partial W/\partial C|\leq L_C\), where
\(L_C=\underline C^{-2}+(1+\epsilon)\underline C^{-1}\), and
\(\partial W/\partial N\geq\epsilon/N^0\).
Abbreviate the alternative outcome by \((C',N')\). Splitting the
welfare difference at \((C',N_1)\) and integrating these bounds yields
\[
\begin{aligned}
W(C',N')-W(C_1,N_1)
&=[W(C',N')-W(C',N_1)]
 +[W(C',N_1)-W(C_1,N_1)]\\
&\leq-\frac{\epsilon}{N^0}(N_1-N')+L_C(C_1-C')\\
&\leq\rho A_{>1}\left(L_C\zeta-
              \frac{\epsilon\Delta_b}{N^0}\right).
\end{aligned}
\]
For \(0<\zeta<\min\{\bar\zeta,\epsilon\Delta_b/(L_CN^0)\}\),
every allocation with \(A_{>1}>0\) therefore has strictly lower
welfare. A full-capacity vector with \(A_{>1}=0\) assigns all
capacity to scope \(1\). This is the planner's unique optimal capacity
vector on the fixed hierarchy with proportional displacement, whereas
the firm assigns all capacity to scope \(n\).
\hfill\rule{0.5em}{0.5em}\par\egroup

\end{document}


\setstretch{1.5}
\hypersetup{
  pdftitle={The Division of Understanding: Online Appendix},
  pdfauthor={Giampaolo Bonomi},
  linkcolor=LJETBlue, citecolor=LJETBlue,
  filecolor=LJETBlue, urlcolor=LJETBlue
}
\title{Online Appendix\\[3pt] to \textit{The Division of Understanding:\\
Specialization and Democratic Accountability}}
\author{Giampaolo Bonomi\thanks{Princeton University. Email: bonomi@princeton.edu}}
\date{September 2026}
\maketitle
\begin{center}
\begin{minipage}{0.82\textwidth}
\small
This appendix develops additional implications of the model, establishes
robustness to alternative production and political technologies, and reports
the empirical evidence. Proofs of the results stated in the paper appear in
its Appendices A--E. Detailed survey questions and variable inventories are
provided in the accompanying empirical documentation.
\end{minipage}
\end{center}
\makeatletter
\renewcommand*\l@section{\@dottedtocline{1}{1.5em}{3.5em}}
\renewcommand*\l@subsection{\@dottedtocline{2}{5.0em}{4.7em}}
\makeatother
\setcounter{tocdepth}{1}
\tableofcontents
\newpage
\appendix
\renewcommand{\thesection}{O\Alph{section}}
\renewcommand{\thesubsection}{\thesection.\arabic{subsection}}
\section{Production and policy understanding}
\label{processing-and-the-organization-of-production}
\label{app:a-production}

Main Appendix A proves the production and understanding results. Here I
examine their robustness and the political role of selection among labor
minimizers. Profiles lie in
\(\Delta_K\), \(r(p)\) is the number of positive coordinates of
\(p\), and \(1=\ell_1<\cdots<\ell_K\).

\subsection{Assignment conventions and selection}
\label{app:a-productive-selection}

The assignment conventions in Section 3 preserve labor and the occupational
distribution. A
positive-work responsibility with zero productive fit requires infinite
labor. A zero-work nonroot node can be suppressed, while a zero-work root
can be given a separately matched position at no labor cost. Hence fit
can be required to be positive everywhere. Disconnected responsibilities
assigned to one position can likewise be recorded as separate positions
with the same occupational profile. Splitting them into their connected
components preserves labor, employment, and the scale of external
handoffs.

Labor minimization fixes work and employment at each scope, but need not
fix expertise within a scope. Let production and the sole policy issue have profile
\(q=u=(1/2,1/2)\). Direct delegation to the two pure-domain components
assigns \(1/2\) unit of interface work to a matched coordinator with
profile \(q\). Alternatively, divide the root equally into
\(p^+=(0.6,0.4)\) and \(p^-=(0.4,0.6)\), then delegate each
intermediate responsibility to its components. The root now resolves
\(0.02\) units of work and the intermediate positions resolve
\(0.48\). Both organizations therefore require the same labor,
\(C=1+\ell_2/2\), and are labor minimizers.

Yet their workers understand the policy differently. The intermediate
profiles have fit \(\beta(p^+,q)=\beta(p^-,q)=0.8\), while the
single-domain specialists have zero fit. Since a matched scope-two
worker's understanding is \(\beta(p,q)/\ell_2\), aggregate
understanding is \(0.50/C\) under direct delegation and
\([0.02+0.48(0.8)]/C=0.404/C\) under the refinement. The firm is
indifferent although the electorate is better informed under direct
delegation. Other same-scope refinements need not reduce
employment-weighted understanding.

Write \(C^M\) for the minimum labor requirement and \(J(\mathcal A)\)
for the total scale of handoffs. Main Appendix A proves that a
minimum-handoff labor minimizer \(\mathcal A^M\) exists, with
\(J(\mathcal A^M)=r(q)-1\). For each \(\tau>0\), choose
a feasible \(\mathcal A_\tau\) within \(\tau^2\) of the infimum of
\(C+\tau J\). Comparison with \(\mathcal A^M\) gives
\[
\begin{aligned}
0\leq C(\mathcal A_\tau)-C^M
&\leq\tau J(\mathcal A^M)+\tau^2,\\
J(\mathcal A_\tau)&\leq J(\mathcal A^M)+\tau.
\end{aligned}
\]
Consequently, any feasible limit along which \(C\) and \(J\) are
continuous minimizes labor and then handoffs. Thus a vanishing handoff
cost delivers the selection when such a limit exists, without restricting
the planner's choices.

\subsection{Complementary knowledge beyond weakest-link processing}
\label{app:a-production-robustness}

The ordering of understanding across layers rests on concavity. In a
productively optimal hierarchy, draw a root-to-terminal path by choosing
each child according to its share of the parent's responsibility. Let
\(P_r\) be the occupational profile at scope \(r\) along that path.
Coherence gives \(P_r=\mathbb E[P_{r-1}\mid P_r]\), and main
Appendix A establishes that responsibility weights coincide with
employment weights within a layer. Therefore any concave
issue-understanding function \(\gamma(s,u)\) satisfies
\[
\mathbb E[\gamma(P_r,u)]\geq
\mathbb E[\gamma(P_{r-1},u)].
\]
The advantage of holding complementary knowledge together thus extends
beyond the minimum operator. It also holds across any two occupational
distributions related by a mean-preserving spread.

Production can be generalized at the same time. For each requirement
profile \(x\), let \(M_x\) aggregate the required-domain ratios
\((s_k/x_k)_{k:x_k>0}\). Assume it is increasing, concave, homogeneous
of degree one, and essential: it vanishes when any required-domain ratio
is zero. Normalize \(M_x(\mathbf1)=1\), and require
\[
M_x(y)\leq\sum_{k:x_k>0}x_ky_k,
\]
with equality only when the ratios are constant. Set
\(\beta_M(s,x)=M_x((s_k/x_k)_{k:x_k>0})\), use
\(\ell_{r(p)}/\beta_M(s,p)\) for productive labor intensity, and
\(\beta_M(s,u)/\ell_{r(u)}\) for issue understanding. Then
\(\beta_M(s,p)\leq\sum_{k:p_k>0}s_k\leq1\), with equality only at
\(s=p\). Exact matching and the resulting labor-minimizing hierarchy
are therefore unchanged. Concavity gives the same layer ordering.

Under uniform production and the symmetric dimensional agenda introduced
in Section 6, a matched scope-\(r\) worker covering an \(h\)-domain
issue has the same ratio \(h/r\) in every required domain. Homogeneity
gives fit \(h/r\); essentiality gives zero otherwise. Thus the formula
for \(b_r\) in main Appendix C is unchanged as well. Complementary
power means provide examples:
\[
\beta_\rho(s,x)=
\left[\sum_{k:x_k>0}x_k(s_k/x_k)^\rho\right]^{1/\rho},
\qquad \rho<0,
\]
extended continuously to zero when required knowledge is absent. The
geometric mean is the limit as \(\rho\to0\), and weakest-link fit is
the limit as \(\rho\to-\infty\). These extensions preserve the
production-to-understanding mechanism; the planner's assignment rule and
cutoff characterization use the stronger structure of weakest-link fit.

\subsection{Interface technologies beyond the quadratic specification}
\label{separable-concave-interface-resolution}

Any continuous concave interface function with \(d(e_k)=0\) preserves
nonnegative internal work and work conservation. A separable subclass also
preserves the efficient hierarchy itself. Suppose
\(d_\varphi(p)=\sum_k\varphi(p_k)\), where \(\varphi\) is
continuously differentiable and strictly concave on \([0,1]\), with
\(\varphi(0)=\varphi(1)=0\). The same function applies to every
domain. Let \(I=\operatorname{supp}q\), \(n=|I|\), and define
\[
V_r^\varphi(q)=\max\{
\mathbb E[d_\varphi(P)]:\mathbb E[P]=q,\ r(P)\leq r
\text{ almost surely}\},\quad 1\leq r\leq n.
\]

\begin{lemma}[Robustness of the efficient hierarchy]
\label{lem:app-a-separable-hierarchy}
The frontier-maximizing profiles and their nested deletion construction
are the same as in main Appendix A. The resulting aggregate workloads
and minimum labor requirement are
\[
g_1^\varphi=1,\qquad
g_r^\varphi=V_r^\varphi(q)-V_{r-1}^\varphi(q)\ (r\geq2),
\qquad C_\varphi^M(q)=\sum_{r=1}^n\ell_rg_r^\varphi.
\]
Minimum-handoff labor minimizers have adjacent-scope handoffs, exact
matching, and the same layer ordering of understanding as in Theorem 2.
\end{lemma}

\begin{proof}
Write \(\pi_i=\Pr(P_i>0)\). Conditional Jensen gives
\(\mathbb E[\varphi(P_i)]\leq\pi_i\varphi(q_i/\pi_i)\), with
equality only if positive coordinate values are constant. Maximize the
sum of these bounds over \(q_i\leq\pi_i\leq1\) and
\(\sum_i\pi_i\leq r\). Its derivative in \(\pi_i\) is
\(\Psi(q_i/\pi_i)\), where
\(\Psi(t)=\varphi(t)-t\varphi'(t)\) is positive and strictly
increasing for \(t>0\). The inclusion budget therefore binds, and the
first-order conditions give
\[
\pi_i^r=\min\{1,q_i/\theta_r\},\qquad
\sum_{i\in I}\pi_i^r=r,\qquad
w_i^r=q_i/\pi_i^r=\max\{q_i,\theta_r\}.
\]
As in main Appendix A, set \(\theta_1=1\) and
\(\theta_n=\min_{i\in I}q_i\). The same finite distributions over
supports attain these bounds, and the same conditional deletion rule
makes them coherent across scopes. Every frontier profile includes
\(F_r=\{i:q_i\geq\theta_r\}\) and has common interface load
\[
V_r^\varphi(q)=\sum_{i\in F_r}\varphi(q_i)
+(r-|F_r|)\varphi(\theta_r).
\]
The cut bounds and labor-cost decomposition in main Appendix A now
apply with \(V_r^\varphi\) in place of \(1-\Phi_r\), proving
optimality and the workload formula. Strict concavity supplies the same
equality conditions and zero-work pruning argument. Minimum handoffs
therefore select adjacent scopes. Common interface loads within each
layer imply \(z_v=\omega_vg_r^\varphi\), so conditional employment
weights are again responsibility weights. Conditional Jensen then gives
the understanding ordering.
\end{proof}

Changing \(\varphi\) changes the amount of coordination work and hence
employment, average understanding, and the planner's and AI comparisons;
the unchanged object is the efficient division of responsibilities.
Domain-specific or nonseparable interface technologies need not preserve
this division. Nor does the extension alone preserve the baseline
employment ordering or the planner's cutoff.

\section{Electoral accountability}
\label{political-competition-and-welfare}

This section extends the electoral game in Section 5. Maintain
\(b_i\in[0,1]\),
\(B=\int_0^1b_i\,di\), and the baseline effort-cost assumptions:
\(c\in C^1([0,\infty))\cap C^2((0,\infty))\),
\(c(0)=c'(0)=0\), and \(c''(e)>0\) for \(e>0\).

\subsection{From votes to the probability of office}
\label{timing-and-corner-conventions}

Let \(V\in[0,1]\) be candidate 1's realized vote share before a
common popularity shock \(\xi\), independent of evaluation errors and
uniform on \([-1/2,1/2]\). She wins when \(V+\xi\geq1/2\), so
her conditional winning probability is \(V\). Averaging gives
\(\Pr(\text{win})=\mathbb E[V]=\int_0^1\pi_{i1}\,di\).
Normalizing office value to one yields the candidate objective, even with
dependent evaluation errors. Other uniform ranges covering all feasible
margins give an affine function of expected support, whose slope can be
absorbed into the effort cost.

\subsection{General policy technologies}
\label{general-policy-technologies}

Suppose output and effort deliver policy quality \(G(Y,e)\). Voters
compare its logarithm across the two candidates, subject to the same
logistic evaluation errors. The technology can be nonfiscal and can make
effort incentives depend on resources. For fixed
\(Y>0\), define
\[
\Lambda(e,Y)=\frac{c'(e)}{\partial_e\log G(Y,e)}
=c'(e)\frac{G(Y,e)}{G_e(Y,e)}.
\]

\begin{lemma}[General policy technology]
\label{lem:app-b-general-policy-technology}
Suppose \(G(Y,\cdot)\) is continuous on \([0,\infty)\), twice
continuously differentiable on \((0,\infty)\), and satisfies
\(G(Y,0)=0\), \(G_e>0\), and \(G_{ee}\leq0\). Assume
\(\Lambda(e,Y)\to0\) as \(e\downarrow0\) and
\(\Lambda(e,Y)\to\infty\) as \(e\to\infty\). For \(B>0\),
the unique pure-strategy equilibrium is symmetric and positive, with
\[
\Lambda(e^*,Y)=B/4.
\]
Effort increases strictly with \(B\). If \(G\) is twice continuously
differentiable in both arguments, the sign of
\(\partial e^*/\partial Y\), holding \(B\) fixed, equals the sign
of \(\partial_{eY}^2\log G(Y,e^*)\).
\end{lemma}

\begin{proof}
Against positive opponent quality \(\bar G\), a voter's support is
\(G(Y,e)^b/[G(Y,e)^b+\bar G^b]\). For \(b\in(0,1]\), this is
the composition of an increasing concave function of quality with the
concave technology \(G\). Expected support is therefore concave in
effort, and the candidate's objective is strictly concave. The boundary
arguments in main Appendix B apply: a sole positive-effort candidate can
reduce her effort without losing support; against positive opponent
effort, concavity bounds the electoral gain from below by a positive
multiple of small own effort, whereas \(c(e)=o(e)\). Both-zero effort
is defeated by an arbitrarily small positive effort when \(B>0\).

At an interior equilibrium, the two first-order conditions give
\[
\begin{aligned}
\Lambda(e_j,Y)&=\int_0^1b_i\pi_{ij}(1-\pi_{ij})\,di
=\Lambda(e_{-j},Y),\\
\Lambda_e&=c''\frac{G}{G_e}
+c'\left(1-\frac{GG_{ee}}{G_e^2}\right)>0.
\end{aligned}
\]
Thus equilibrium is symmetric. The limits of \(\Lambda\) give a
unique solution to \(\Lambda=B/4\), and strict concavity makes it a
global best response. Finally, with \(g=\log G\),
\(\Lambda_Y=-c'g_{eY}/g_e^2\). Implicit differentiation proves both
comparative statics.
\end{proof}

Better understanding still strengthens accountability. Resources also
raise effort if they increase the marginal return to design in
log-quality units, and reduce it if they lower that return. With a
multiplicatively separable technology \(G(Y,e)=Q(Y)H(e)\), this
interaction vanishes; the baseline \(Ye\) has precisely this property.
The result generalizes the electoral mechanism, without asserting that
the later planner and AI formulas remain unchanged.

\subsection{Alternative voter responses}
\label{alternative-voter-responses}

The sufficient statistic \(B\) also survives beyond logistic errors.
Return to the baseline policy technology and let a voter's support for
candidate \(j\) be
\(\pi_{ij}=F(b_i\log(e_j/e_{-j}))\). Assume \(F\) is twice
continuously differentiable and strictly increasing, satisfies
\(F(-z)=1-F(z)\), has \(F'(0)>0\), and tends to zero and one at
the respective tails. Use the zero-effort conventions in main Appendix B.

\begin{lemma}[Alternative response rule]
\label{lem:app-b-alternative-response}
If \(bF''(z)\leq F'(z)\) for all \(b\in(0,1]\) and
\(z\in\mathbb R\), then for \(B>0\) the unique pure-strategy
equilibrium is symmetric and positive, with
\[
e^*c'(e^*)=F'(0)B.
\]
\end{lemma}

\begin{proof}
Fix opponent effort \(\bar e>0\) and set
\(z=b\log(e/\bar e)\). The second derivative of support is
\(b[bF''(z)-F'(z)]/e^2\leq0\), so each candidate's objective is
strictly concave. The same boundary arguments as above exclude zero
effort. At a positive equilibrium,
\[
e_jc'(e_j)=\int_0^1b_iF'(z_i)\,di,
\qquad
e_{-j}c'(e_{-j})=\int_0^1b_iF'(-z_i)\,di.
\]
Symmetry makes \(F'\) even, so the right-hand sides coincide.
Strict increase of \(ec'(e)\) forces equal efforts, and the
first-order condition at a tie gives the displayed equation. Its
left-hand side rises continuously from zero to infinity; strict
concavity makes its unique solution a global best response.
\end{proof}

Under the curvature restriction, mean understanding remains sufficient
for equilibrium effort: the error distribution changes the strength of
incentives through \(F'(0)\). Logistic errors satisfy the restriction
and have \(F'(0)=1/4\). In both extensions,
\(B=0\) makes support independent of effort and the unique equilibrium
has zero effort.

\section{Common-interest planning and competition}
\label{common-interest-planning-and-competition}

The main appendix proves competitive implementation and the global
delegation-cutoff result. I first show that productive efficiency survives
when sizeable firms maximize their workers' welfare and internalize political
outcomes; the remaining subsections develop the planner's comparative statics.

\subsection{Sizeable firms and free entry}
\label{oa:workforce-entry}

Suppose firms are owned by their workers, may choose their scale freely, and
return all profits to their workforce as dividends. For an organization
\(\mathcal A\), let \(Y_{\mathcal A}=1/C(\mathcal A)\), let
\(\lambda_{\mathcal A}\) be its employment distribution over profiles, and
let \(B_{\mathcal A}=\int b(s)\,d\lambda_{\mathcal A}(s)\). Firm \(j\)
employs mass \(m_j>0\) under \(\mathcal A_j\). At the common wage schedule
\(w(s)\), it pays a nonnegative dividend \(D_j(s)\) and distributes all
revenue to its workers:

\[
\int_{\Delta_K}\bigl[w(s)+D_j(s)\bigr]\,
d\lambda_{\mathcal A_j}(s)=Y_{\mathcal A_j}.
\]

Thus its workers' average private income is \(Y_{\mathcal A_j}\). With a unit
workforce, \(\sum_jm_j=1\),
\(Y=\sum_jm_jY_{\mathcal A_j}\), and
\(B=\sum_jm_jB_{\mathcal A_j}\). The firm maximizes
\(Y_{\mathcal A_j}+\log Y+\epsilon\log B\), accounting for the
economy-wide outcomes and profile prices induced by its choices. Full payout
makes prices irrelevant for average workforce income: they divide revenue
between wages and dividends but do not change its total.

An individual worker takes political outcomes as given and chooses the firm
and profile offering the highest private income, \(w(s)+D_j(s)\). Free
mobility and market clearing therefore give a common private income \(\rho\)
at every occupied job; because any profile can be supplied at its quoted wage,
\(w(s)\leq\rho\) for every \(s\).

Free entry allows any positive-mass group of workers to form a firm. An
equilibrium is \emph{entry-stable} if no such firm can attract its members at
weakly higher private incomes and make them strictly better off after the
resulting market-clearing reallocation.
\begin{proposition}[Free entry and productive efficiency]
\label{prop:oa-workforce-entry}
In any entry-stable equilibrium with \(Y,B>0\), only productively efficient
organizations are implementable: every active firm \(j\) satisfies
\(C(\mathcal A_j)=C^M\). Consequently, aggregate output and every worker's
private income equal \(Y^M=1/C^M\).
\end{proposition}

\begin{proof}
Full profit distribution and the common private income imply
\(Y_{\mathcal A_j}=\rho\) at every active firm, and hence \(Y=\rho\). Suppose
some active firm \(j\) uses a labor-inefficient organization. Then
\(\delta\equiv Y^M-\rho>0\).

Choose a labor-minimizing organization \(\mathcal A^M\) and a mass
\(x\in(0,m_j)\) of firm \(j\)'s workers. Take the same fraction from every
position, so constant returns allow the incumbent to contract proportionally,
and let these workers choose the profiles required by \(\mathcal A^M\). The
entrant pays \(D^M(s)=Y^M-w(s)>0\). These dividends exhaust its profits and
give every recruit private income \(Y^M\). Because the departing group is a
scaled copy of the incumbent's workforce and workers may freely change
profiles, the reallocation is feasible and profile markets continue to clear.
Its aggregate consequences are
\[
Y_x=Y+x(Y^M-\rho),\qquad
B_x=B+x\bigl(B_{\mathcal A^M}-B_{\mathcal A_j}\bigr).
\]
Each recruit's utility gain is therefore
\[
\delta+\log\frac{Y_x}{Y}
+\epsilon\log\frac{B_x}{B}
\longrightarrow\delta>0
\qquad\text{as }x\downarrow0.
\]
For sufficiently small \(x\), the entrant makes all its workers better off,
contradicting entry stability. The remaining conclusions follow immediately.
\end{proof}

The productivity gain remains positive for each recruit, while a vanishingly
small firm's political effect disappears. The argument therefore extends to
\(u_i=y_i+H(Y,\mu)\), where \(\mu\) is the social distribution of
understanding and the common term \(H\) is continuous under a vanishing-mass
reallocation.

\subsection{The general frontier and existence}
\label{the-general-frontier-relaxed-existence-and-the-organizational-change-hurdle}

For a finite organization \(\mathcal A\in\Omega(q)\), its labor and
labor-weighted understanding requirements are
\[
C_{\mathcal A}=\sum_vz_va(p_v,s_v),
\qquad N_{\mathcal A}=\sum_vz_va(p_v,s_v)b(s_v).
\]
Its output and understanding are
\(Y_{\mathcal A}=1/C_{\mathcal A}\) and
\(B_{\mathcal A}=N_{\mathcal A}/C_{\mathcal A}\).
The cost frontier in the paper therefore gives the finite-organization
problem the value
\begin{equation}
\sup_{\bar B>0:\,\mathcal C_q(\bar B)<\infty}
\left\{\frac1{\mathcal C_q(\bar B)}
-\log\mathcal C_q(\bar B)+\epsilon\log\bar B\right\}.
\label{eq:appC-reduced-planner}
\end{equation}
Indeed, every organization supplies a feasible frontier target, and a
sequence approaching the cost infimum at any target approaches at least
the displayed welfare. Neither direction requires attainment.

Finite labor-share combinations are already feasible organizations by
the parallel-branch construction in main Appendix D. For
existence, it therefore remains only to take limits. Define
\begin{equation}
\mathcal O_q=\overline{
\{(Y_{\mathcal A},B_{\mathcal A}):\mathcal A\in\Omega(q)\}}.
\label{eq:appC-outcome-set}
\end{equation}
The set inside the closure is convex. Since
\(\mathcal O_q\subseteq[0,1]^2\), it is compact. Assigning full-support
knowledge supplies a strictly positive outcome, while
\(W_0(Y,B)=Y+\log Y+\epsilon\log B\) tends to minus infinity at either
zero coordinate. Strict concavity thus gives a unique relaxed optimum.
This is uniqueness of the outcome, not of the organization; the closure
can also contain outcomes that no finite organization attains. The
example below shows why that distinction matters. Under the assumptions
of the paper's cutoff theorem, its main-appendix proof instead supplies
finite attainment by one organization with at most two adjacent-cutoff
branches.

\subsection{Policy complexity and the delegation cutoff}
\label{the-delegation-frontier-and-comparative-statics}

Maintain the cutoff theorem's uniform production profile, symmetric
dimensional agenda, and affine labor intensities.
Let \(g_r\) be the efficient workloads, \(b_r\) the understanding of a
matched scope-\(r\) worker, and \(n_r=\ell_rb_r\). At a complete cutoff
\(j\in\{0,\ldots,K-1\}\), the main appendix establishes
\begin{equation}
(C_j,N_j)=\sum_{r\leq j}g_r(\ell_r,n_r)
+\sum_{r>j}g_r(\ell_K,n_K).
\label{eq:appC-cutoff-outcomes}
\end{equation}
The frontier runs from the labor-minimizing cutoff \(j=K-1\) toward
\(j=0\), with adjacent cutoffs joined by partial reassignment of one
source workload. On segment \(j\), its slope and average understanding are
\begin{equation}
\sigma_j=\frac{n_K-n_j}{\ell_K-\ell_j},
\qquad N=\sigma_jC-D_j,
\qquad B=\frac NC,
\label{eq:appC-sigma}
\end{equation}
where \(D_j\geq0\) is constant along the segment. The main appendix
proves \(\sigma_1\leq\cdots\leq\sigma_{K-1}\) and
\(\sigma_j\geq b_K\geq B\): broad source workloads offer the largest
gain in understanding per additional unit of labor.

To vary the political need for breadth, let
\(\nu_\chi=(1-\chi)\nu^{\mathrm{dom}}+\chi\nu^{\mathrm{cross}}\),
where \(\nu^{\mathrm{dom}}\) weights the \(K\) domain-specific issues
equally and \(\nu^{\mathrm{cross}}\) is a fixed dimensional agenda of
cross-domain issues. Every profile has understanding \(b_0=1/K\) under
\(\nu^{\mathrm{dom}}\).

\begin{proposition}[Policy complexity and broader responsibility]
\label{prop:appC-cutoff-comparative-statics}
The least and greatest amounts of system-wide work among optimal cutoff
allocations weakly increase with either \(\epsilon\) or \(\chi\). The
increase is strict within a unique interior partial-cutoff segment. At
\(\chi=0\), labor minimization is socially optimal. The planner's
maximized welfare advantage over the labor-minimizing cutoff also
weakly increases with \(\chi\), strictly when the optimum assigns
additional work to the root and thereby strictly raises understanding
under the cross-domain agenda.
\end{proposition}

\noindent\textbf{Proof.}
Along a segment with \(D_j>0\), differentiation of
\(W_0(1/C,N/C)\) shows that increasing \(C\) raises welfare exactly when
\begin{equation}
\epsilon>\Theta_j(C):=
\frac{B(1+1/C)}{\sigma_j-B},
\qquad
\Theta_j'(C)=\frac{\sigma_j}{D_j}+\frac1{C^2}>0.
\label{eq:appC-threshold-monotone}
\end{equation}
If \(D_j=0\), understanding is constant and welfare strictly falls with
\(C\); set \(\Theta_j=+\infty\). At a vertex, the next segment has a
weakly smaller \(\sigma_j\), so its hurdle weakly rises. Welfare is
therefore single-peaked along the entire frontier. Raising
\(\epsilon\) moves its maximizing labor requirement weakly rightward,
including at vertices and when adjacent slopes tie. The amount of
system-wide work strictly increases along this same path.

Write \(B_{\mathrm{cross}}(C)\) and \(\sigma_{j,\mathrm{cross}}\) for
understanding and the slope under the cross-domain agenda. Linearity in
agenda weights gives
\(B_\chi(C)=b_0+\chi[B_{\mathrm{cross}}(C)-b_0]\) and
\(\sigma_{j,\chi}=b_0+\chi(\sigma_{j,\mathrm{cross}}-b_0)\).
Consequently, whenever a segment offers a strict understanding gain,
\begin{equation}
\Theta_{j,\chi}(C)=
\left(1+\frac1C\right)
\frac{b_0+\chi[B_{\mathrm{cross}}(C)-b_0]}
{\chi[\sigma_{j,\mathrm{cross}}-B_{\mathrm{cross}}(C)]}
\label{eq:appC-cross-domain-hurdle}
\end{equation}
strictly decreases with \(\chi>0\). Thus the maximizing point moves
weakly rightward with \(\chi\), including across vertices. On an
interior partial-cutoff segment, the strict derivatives in
\eqref{eq:appC-threshold-monotone} and
\eqref{eq:appC-cross-domain-hurdle} give strict comparative statics. At
\(\chi=0\), understanding is constant across all organizations, so
only output matters.

Let \(M\) be the labor-minimizing cutoff. For any fixed cutoff point,
\begin{equation}
\frac{\partial}{\partial\chi}
\log\frac{B_\chi(C)}{B_\chi(M)}
=\frac{b_0[B_{\mathrm{cross}}(C)-B_{\mathrm{cross}}(M)]}
{B_\chi(C)B_\chi(M)}\geq0.
\label{eq:appC-welfare-gap-derivative}
\end{equation}
Its production cost is independent of \(\chi\). Maximizing these
weakly increasing welfare differences proves the welfare-gap claim;
holding an optimizer fixed proves strictness whenever the numerator is
positive.\hfill\(\square\)

Greater policy interdependence therefore strengthens the case for
preserving system-wide responsibilities and increases the cost of the
market's specialization. This comparison concerns the organizational
distortion: welfare itself can fall as issues become harder to evaluate.

\subsection{Uneven knowledge requirements within an issue}
\label{policy-issues-with-uneven-knowledge-requirements}

The cutoff result relies on equally weighted requirements within each
issue, not just symmetry across domain names. For example, let \(K=2\),
\(\ell_2=6/5\), and give equal probability to
\(u^1=(9/10,1/10)\) and \(u^2=(1/10,9/10)\). A worker with
\(s=(9/10,1/10)\) performing the specialized responsibility \(p=e_1\)
has
\begin{equation}
a(e_1,s)=\frac{10}{9},
\qquad b(s)=\frac1{2\ell_2}\left(1+\frac19\right)=\frac{25}{54}.
\label{eq:appC-uneven-specialist}
\end{equation}
A matched uniform root worker has the same understanding but costs
\(6/5>10/9\) per unit of work. Thus partly broadening a specialist's
knowledge can provide the same political understanding at lower
processing cost than placing work under system-wide responsibility.
The general frontier accommodates this possibility; the exact-matching
cutoff need not. The coordinator's understanding advantage under a
coherent split still follows from concavity of \(b(s,u)\), without
equal weights within issues.

\subsection{Why a finite optimum need not exist}
\label{two-domain-solution-and-finite-nonattainment}

Finite attainment can fail because successive refinements change the
profiles of coordinating jobs without narrowing their scope. The next
example establishes this directly, without a general recursion.

\begin{proposition}[A finite optimum need not exist]
\label{prop:appC-nonattainment}
Let \(q=(1/4,3/4)\), \(\ell_2=6/5\), and let the agenda weight the two
domain-specific issues by \(1/4\) and \(3/4\). For
\(\epsilon=299/464\), the unique relaxed optimum is
\[
Y^*=\frac{20}{29},\qquad B^*=\frac{299}{464},
\]
but no finite organization attains it.
\end{proposition}

\noindent\textbf{Proof.}
Index both responsibility and knowledge profiles by their first
coordinate. Then \(b(s)=3/4-s/2\). Set \(H=1083/464\). The least cost
\(a(p,s)[H-b(s)]\) of one unit of work is attained at exact matching:
\([H-b(s)]/s\) decreases in \(s\), whereas
\([H-b(s)]/(1-s)\) increases. Its minimized value is
\begin{equation}
k(0)=\frac{735}{464},\quad k(1)=\frac{967}{464},\quad
k(x)=a+bx\ (0<x<1),\qquad
a=\frac{441}{232},\quad b=\frac35.
\label{eq:appC-nonattainment-interior}
\end{equation}
Define the following lower bound on the cost of processing a complete
responsibility:
\[
V(x)=(1-x)k(0)+xk(1)+2ax(1-x)+bx^2(1-x).
\]
Writing \(d(x)=2x(1-x)\), direct expansion gives, for every child law
with \(\mathbb E X'=x\),
\begin{equation}
k(x)[d(x)-\mathbb E d(X')]+\mathbb E V(X')-V(x)
=b\,\mathbb E[(X'-x)^2(1-X')]\geq0.
\label{eq:appC-certificate-identity}
\end{equation}
For interior \(x\), equality requires a degenerate split: the only
possible contacts are \(x\) and \(1\), whose mean can equal \(x<1\)
only if all mass stays at \(x\). Stopping at an interior responsibility
also has strictly positive slack,
\[
[1+d(x)]k(x)-V(x)=\frac{147}{464}+\frac{x}{10}
+\frac35x^2(1-x)>0.
\]
Summing these inequalities over a finite tree rooted at \(1/4\) yields
\(HC-N>V(1/4)=49/20\): after any degenerate divisions, the tree must
either stop or make a nondegenerate split. This bound also covers
unmatched assignments, whose unit costs are weakly higher. Therefore
every finite outcome satisfies \((49/20)Y+B<H\).

To approach the bound, take a chain with interior profiles
\(x_j=j/(4n)\), \(j=n,\ldots,1\). Each node divides between
\(x_{j-1}\) and \(1\), with the unique weights preserving its mean;
\(x_0=0\). Its scale is \((3/4)/(1-x_j)\), so each interior node
performs \(3/(8n)\) units of interface work. Exact matching and terminal
scales \(1/4,3/4\) give
\begin{equation}
C_n=\frac{29}{20},\qquad
N_n=\frac{299-9/n}{320},\qquad
B_n=\frac{299-9/n}{464}.
\label{eq:appC-finite-outcomes}
\end{equation}
Their limit is the stated outcome and attains the supporting line.
There the welfare gradient is
\((1+1/Y^*,\epsilon/B^*)=(49/20,1)\). Concavity therefore makes it a
relaxed optimum; strict concavity makes it unique, and the strict
finite-outcome bound proves nonattainment.\hfill\(\square\)

The example keeps exact matching, positive understanding, and an affine
labor-intensity schedule. Its failure of attainment comes entirely from
costless refinements within the same scope. It explains the closure in
\eqref{eq:appC-outcome-set}; it does not arise in the finite cutoff
implementation proved in the main appendix.

\section{Targeted spending and the value of knowledge}
\label{targeted-spending-and-institutions}
\label{app-e-targetable-services}

This section supplies the remaining results used in Section 7 and an
example of optimal cross-training.

\subsection{Effort and the allocation of public services}
\label{app-e-targetable-allocation}

Maintain the groups and preferences of Section 7: group \(g\) has mass
\(m_g>0\), understanding \(b_g\in(0,1]\), and authority per person
\(v_g\geq0\), with \(\sum_gm_g=\sum_gm_gv_g=1\). Write
\(B_v=\sum_gm_gv_gb_g\). A candidate chooses effort \(e_j\) and
targeted services \(t_{gj}\), subject to
\(\sum_gm_gt_{gj}=fYe_j\).

\begin{lemma}[Effort and targeted provision]
\label{lem:app-e-targetable-allocation}
The candidates' game has a unique pure-strategy equilibrium. It is
symmetric, and its common effort satisfies
\begin{equation}
ec'(e)=B_v/4.
\label{eq:app-e-policy-quality-general}
\end{equation}
Under the isoelastic effort cost, \(e=B_v^\epsilon\) and
\(R=YB_v^\epsilon\). For \(f>0\), equilibrium provision is
\begin{equation}
t_g=fR\frac{v_gb_g}{B_v}.
\label{eq:app-e-targeted-spending-share}
\end{equation}
At \(f=0\), all provision benefits every group equally.
\end{lemma}

\begin{proof}
At \(f=0\), the proof of the paper's Proposition 1 applies with electoral
weights \(m_gv_g\). For \(f>0\), define
\[
Q_{gj}=(Ye_j)^{1-f}(t_{gj}/f)^f,
\qquad
\pi_{gj}=\frac{Q_{gj}^{b_g}}{Q_{gj}^{b_g}+Q_{g,-j}^{b_g}}.
\]
If both effective-service quantities are zero, set support to one half;
if only one is positive, that candidate receives the group's support.
At \(f=1\), define \(Q_{gj}=t_{gj}\). The candidate maximizes \(\sum_gm_gv_g\pi_{gj}-c(e_j)\) subject to
her budget. Against a positive opposing strategy, \(Q_{gj}^{b_g}\)
is concave in \((e_j,t_{gj})\): its Cobb--Douglas exponents sum to
\(b_g\leq1\). Since \(z/(z+a)\) is increasing and strictly concave
for \(a>0\), and \(c\) is strictly convex, the problem is strictly
concave in the politically relevant choices.

Zero effort cannot be an equilibrium choice: when both candidates exert
none, a small positive choice wins informed voters; against positive
effort, the electoral gain dominates the cost near zero because
\(b_g\leq1\) and \(c'(0)=0\). Every positive-authority group must
receive positive provision from both candidates. If both allocations
were zero, a small positive allocation would yield a discrete gain in
support; if only one were positive, that candidate could reduce it
without losing the group's support. Zero-authority groups receive none.

Let \(\mu_j\) be the budget multiplier and set
\(S_j=\sum_gm_gv_gb_g\pi_{gj}(1-\pi_{gj})\). The first-order
conditions are
\[
\frac{1-f}{e_j}S_j-c'(e_j)+\mu_jfY=0,
\qquad
\frac{v_gb_gf}{t_{gj}}\pi_{gj}(1-\pi_{gj})=\mu_j.
\]
Multiplying the allocation conditions by \(m_gt_{gj}\), summing, and
using the budget gives \(\mu_jYe_j=S_j\), hence
\(e_jc'(e_j)=S_j\). Because \(\pi_{g,-j}=1-\pi_{gj}\), both
candidates have the same \(S_j\) and therefore the same effort and
multiplier. Their allocation conditions then give identical provision
in every group. Thus \(\pi_{gj}=1/2\) and \(S_j=B_v/4\), yielding
the stated formulas. They satisfy the first-order conditions of the
strictly concave problems, proving existence and uniqueness.
\end{proof}

Thus \(B_v\) governs the quality of policy, while \(v_gb_g\) governs
per-capita targeted provision. Under equal voting, groups with greater
understanding receive more services.

\subsection{The social return to more equal understanding}
\label{app-e-social-communication}
\label{app-e-dispersion-comparison}

Under equal voting, let \(B=\sum_gm_gb_g\) and
\(\Delta=\log B-\sum_gm_g\log b_g\). The paper's equation (23)
gives welfare as \(W_f=W_0-f\Delta\), where
\(W_0=Y+\log Y+\epsilon\log B\).

\begin{lemma}[Social value of communication]
\label{lem:app-e-social-communication}
Holding output and group masses fixed, the marginal social value per
member of raising group \(g\)'s understanding is
\begin{equation}
\frac{1}{m_g}\frac{\partial W_f}{\partial b_g}
=\frac{\epsilon-f}{B}+\frac{f}{b_g}>0
\quad\text{whenever }b_g\leq B.
\label{eq:app-e-social-communication}
\end{equation}
More generally, a change that raises \(B\) and weakly lowers
\(\Delta\) strictly raises welfare at fixed output.
\end{lemma}

\begin{proof}
Differentiate \(W_f=Y+\log Y+(\epsilon-f)\log B+
f\sum_gm_g\log b_g\). For \(b_g\leq B\), the derivative per member
is at least \(\epsilon/B>0\). For a finite change, subtracting welfare
before and after gives
\(W_f'-W_f=\epsilon\log(B'/B)-f(\Delta'-\Delta)\).
\end{proof}

When \(f>\epsilon\), an informed source loses from raising others'
understanding because its share of targeted spending falls. Yet helping
a below-average group has a positive marginal social return: welfare
also counts the recipients' gains. Thus public provision of analysis or
support for its dissemination can have value even when credibility is
assured. The same concern for distribution changes the planner's choice
of occupational knowledge.

\begin{proposition}[Targeted spending and the dispersion of understanding]
\label{prop:app-e-targetability-dispersion}
Hold the feasible set of organizations fixed, with positive
understanding in every positive-mass group. If optima exist at
\(0\leq f<f'\leq1\), every optimum \(\mathcal A'\) at \(f'\) and
every optimum \(\mathcal A\) at \(f\) satisfy
\[
\Delta(\mathcal A')\leq\Delta(\mathcal A),
\qquad W_0(\mathcal A')\leq W_0(\mathcal A).
\]
At fixed output and mean understanding, a feasible nontrivial
mean-preserving equalization strictly raises welfare for \(f>0\).
\end{proposition}

\begin{proof}
Optimality gives
\[
\begin{aligned}
W_0(\mathcal A)-f\Delta(\mathcal A)
&\geq W_0(\mathcal A')-f\Delta(\mathcal A'),\\
W_0(\mathcal A')-f'\Delta(\mathcal A')
&\geq W_0(\mathcal A)-f'\Delta(\mathcal A).
\end{aligned}
\]
Adding yields the dispersion ordering; substituting it into the first
inequality yields the ordering of \(W_0\). A nontrivial
mean-preserving equalization raises mean log understanding by strict
concavity of the logarithm and thus lowers \(\Delta\).
\end{proof}

As more provision becomes targetable, the planner accepts a lower
common-interest payoff to reduce inequality in understanding. This
comparison needs neither the symmetric production assumptions used
below nor a unique or differentiable optimum.

\subsection{An optimum that cross-trains specialists}
\label{app-e-targetable-cutoff-failure}

Theorem 4 permits workers to acquire knowledge beyond their productive
responsibilities. The following example shows that cross-training can
be necessary for optimality among all feasible organizations.

\begin{example}[Optimal cross-training]
\label{ex:app-e-certified-cutoff-failure}
Take \(K=3\), \(q=(1/3,1/3,1/3)\), and a symmetric dimensional
agenda with
\[
(\nu_1,\nu_2,\nu_3)=(.15,.05,.80),
\qquad (\kappa,\epsilon,f)=(.8,.30,.25).
\]
Then \((\ell_1,\ell_2,\ell_3)=(1,1.8,2.6)\), the canonical
workloads are \((g_1,g_2,g_3)=(1,1/2,1/6)\), and matched
understanding is
\((b_1,b_2,b_3)=(1/20,8/135,2641/7020)\).
An optimal hierarchy uses all three scopes. Its knowledge shares in the
assigned domains, \(\alpha_t\), and resulting understanding are
\[
\begin{array}{c|ccc}
\text{Responsibility scope }t&1&2&3\\ \hline
\alpha_t&.939503518214&.864766636259&1\\
\bar b_t&.079601910959&.187846538884&.376210826211
\end{array}
\]
Both nonroot groups are strictly cross-trained. Scope-two workers devote
more knowledge outside their jobs than scope-one workers, although their
understanding remains higher. Thus an optimal organization can retain
specialized responsibilities while broadening the knowledge used to
evaluate policy; the amount of broadening need not be ordered by scope.
\end{example}

To verify global optimality, write
\(G=\sum_tg_t=5/3\), \(S=\sum_tm_t\log\bar b_t\), and
\(h(b)=\lambda b+f\log b\). For \(t<3\), the two-level profiles in
the proof of Theorem 4 give
\[
\bar b_t(\alpha)=b_3-d_t(\alpha-t/3),
\qquad d_t=\frac{3(b_3-b_t)}{3-t},
\qquad t/3\leq\alpha\leq1.
\]
For any \(\eta>0\), \(\lambda\geq0\), and \(H>h(b_3)\), define
\begin{equation}
\begin{aligned}
r_t(H)&=\frac{\eta}{G}
+\max_{\alpha\in[t/3,1]}\frac{\ell_t}{\alpha}
   [h(\bar b_t(\alpha))-H] &&(t<3),\\
r_3(H)&=\frac{\eta}{G}+\ell_3[h(b_3)-H].
\end{aligned}
\label{eq:app-e-scope-return}
\end{equation}
Appendix D's workload argument bounds the support residual of every
organization by \(\sum_{j=1}^3g_j\max_{t\geq j}r_t(H)\), and a
pruned hierarchy attains that bound. Hence a zero residual certifies
that \(H\) is the global maximum of \(\eta Y+\lambda B+fS\).

For reproduction, put \(\eta=1+\tau\), evaluate the scope returns at
\(\alpha_t\), and solve for \((\alpha_1,\alpha_2,\tau,\lambda,H)\):
\begin{equation}
\begin{aligned}
H-h(\bar b_t)&=\alpha_td_t[\lambda+f/\bar b_t] &&(t=1,2),\\
\tau&=C,\qquad \lambda B=\epsilon-f,
&r_1+\tfrac12r_2+\tfrac16r_3&=0,
\end{aligned}
\label{eq:app-e-certified-residual}
\end{equation}
where \(C=\sum_tg_t\ell_t/\alpha_t\),
\(m_t=g_t\ell_t/(\alpha_tC)\), \(Y=1/C\), and
\(B=\sum_tm_t\bar b_t\), with \(\alpha_3=1\). The solution is
\[
(\tau,\lambda,H)
=(2.538468507031,.286345756920,.965543368930),
\]
and its scope returns are
\((r_1,r_2,r_3)=(.446207134309,-.644852030882,-.742686713206)\).
They are strictly decreasing, so each source workload is assigned to
its own scope. The interior profile conditions give unique maxima:
the derivative of
\(\alpha d_th'(\bar b_t)-[H-h(\bar b_t)]\) is
\(\alpha f d_t^2/\bar b_t^2>0\).

The displayed parameters are rounded values of a unique exact root in
the box of radius \(10^{-8}\) around them. The accompanying
\texttt{verify\_cross\_training.py} verifies this claim by exact
rational interval arithmetic: a preconditioned Newton map has
contraction factor below \(3.454\times10^{-7}\) and maps the box
strictly into itself. It also verifies interior profile choices,
\(r_1-r_2>1.091\), \(r_2-r_3>.0978\), and \(H-h(b_3)>1.102\).

The resulting outcomes are
\((Y,B,S)=(.393938,.174614,-1.913580)\), with
\(W_f=-1.103276\). Since
\((\eta,\lambda,f)=(1+1/Y,(\epsilon-f)/B,f)\) is the welfare
gradient, attainment of the support bound proves global optimality by
concavity of welfare. The strict scope and profile choices exclude an
optimum consisting only of matched-knowledge cutoff organizations.

\section{AI adoption and the social value of human knowledge}
\label{app:ai}

The paper's Appendix E proves that the firm can attain a global cost
minimum on a productively optimal pre-AI hierarchy and establishes the
reversal in Theorem 5. This section gives the social returns to automation
away from scope neutrality and shows why a small technological advantage
need not imply a small welfare loss. Unless stated otherwise, the hierarchy
is fixed and displacement is proportional within each scope, as in
Section 8 of the paper.

\subsection{Productivity gains and losses in policy understanding}
\label{subsec:ai-planner-comparisons}

Write \(A_r=a_r^{\mathrm{AI}}x_r\) for capacity assigned to scope \(r\)
and \(\rho_r=\ell_r/a_r^{\mathrm{AI}}\) for the labor it releases per
unit of capacity. At a feasible allocation, let \(C_H>0\) be the remaining
human labor requirement and \(N_H>0\) its understanding numerator, so
\(Y=1/C_H\), \(B=N_H/C_H\), and \(R=YB^\epsilon\). Automating scope
\(r\) changes \(C_H\) by \(-\rho_r\) and \(N_H\) by
\(-\rho_rb_r\) per unit of capacity. Differentiating gives
\begin{align}
\frac{\partial B}{\partial A_r}
&=\rho_rY(B-b_r), \label{eq:app-ai-understanding-derivative}\\
\frac{\partial W}{\partial A_r}
&=\rho_rY\left[Y+1+\epsilon\left(1-\frac{b_r}{B}\right)\right],
\label{eq:app-ai-welfare-derivative}\\
\frac{\partial\log R}{\partial A_r}
&=\rho_rY\left[1+\epsilon\left(1-\frac{b_r}{B}\right)\right].
\label{eq:app-ai-policy-resource-derivative}
\end{align}
Thus the firm ranks uses of capacity by \(\rho_r\), whereas the planner's
marginal ranking also accounts for the knowledge associated with the jobs
that remain. The productivity gain from automation can coexist with a
fall in effective public services: along a feasible marginal expansion
at scope \(r\), services fall exactly when
\begin{equation}
\epsilon\left(\frac{b_r}{B}-1\right)>1;
\qquad
\text{total welfare falls exactly when}
\quad\epsilon\left(\frac{b_r}{B}-1\right)>1+Y.
\label{eq:app-ai-policy-resource-loss}
\end{equation}
These conditions concern the effect of additional automation, beyond
the comparison between two placements in Theorem 5. The electoral loss
must outweigh the gain in public resources to reduce services; it must
also outweigh the gain in private income to reduce welfare.

The same reasoning applies within a scope when equal-cost responsibilities
require different combinations of knowledge. For a responsibility whose
workers have understanding \(b\), replace \(b_r\) by \(b\) in the
derivatives above. Conditional on the amount automated in that layer,
output is unchanged by its placement, so the planner automates the
lowest-understanding responsibilities first. The firm is indifferent
among them. This comparison allows selection within a layer without
claiming that the proportional-displacement results continue to describe
every such selection.

\subsection{A nonvanishing welfare gap near neutrality}
\label{subsec:ai-neutral-gap}

Maintain the assumptions of Theorem 5. In particular,
\(\rho_r(\zeta)=\rho(1+\zeta\delta_r)\), with
\(0=\delta_1<\cdots<\delta_n=1\), and each endpoint scope can absorb all
capacity \(A\) for small \(\zeta\geq0\). Human labor and understanding
remain positive. Let \(C^0,N^0\) denote the no-AI labor requirement and
understanding numerator. Assigning all capacity to endpoint
\(s\in\{1,n\}\) gives
\[
\begin{gathered}
C_s(\zeta)=C^0-\rho_s(\zeta)A,
\qquad N_s(\zeta)=N^0-\rho_s(\zeta)Ab_s,\\
W_s(\zeta)=\frac1{C_s(\zeta)}-\log C_s(\zeta)
+\epsilon\log\frac{N_s(\zeta)}{C_s(\zeta)}.
\end{gathered}
\]
For sufficiently small \(\zeta>0\), let
\(W^M(\zeta)=W_n(\zeta)\) be welfare in the
globally cost-minimizing implementation displayed in that theorem, and
let \(W^*(\zeta)\) be the supremum over all feasible organizations and
AI assignments. Since assigning capacity to scope 1 is feasible,
\(W^*(\zeta)\geq W_1(\zeta)\). Continuity and the equal output of the
two endpoint allocations at neutrality therefore imply
\begin{equation}
\liminf_{\zeta\downarrow0}
\bigl[W^*(\zeta)-W^M(\zeta)\bigr]
\geq
\epsilon\log\frac{N^0-\rho Ab_1}{N^0-\rho Ab_n}>0.
\label{eq:app-ai-nonvanishing-gap}
\end{equation}
The extra output from automating broad work vanishes with the
technological tilt, while the difference in occupational knowledge need
not. For fixed primitives, the displayed bound does vanish with \(A\)
or \(\epsilon\). This is a welfare comparison with a particular global
cost minimum; it does not characterize the planner's joint choice of
organization, human knowledge, and AI.

\section{Evidence on occupational knowledge and policy understanding}
\label{app:empirical-evidence}

The two analyses reported in Section 4 of the paper examine different
parts of the occupational-knowledge mechanism. The Dutch LISS panel asks
whether a worker knows more about policies close to her occupational
expertise. The ECB Consumer Expectations Survey asks whether workers with
coordinating responsibilities know more about inflation and the ECB's
role. This section describes the measures, comparisons, and results;
the accompanying \texttt{empirical\_documentation.pdf} supplies the full
question inventories, answer keys, domain assignments, and source fields.

\subsection{Occupational expertise and policy-specific accuracy}
\label{app:empirical-liss}

I link the 2021 Work and Schooling and Family Policy modules of the Dutch
Longitudinal Internet Studies for the Social Sciences (LISS), a
probability-based internet panel \citep{streefkerk-2022,bartova-2022}.
The sample contains 503 paid workers in 168 four-digit ISCO-08
occupations and 12,553 respondent--question observations. The outcome
\(C_{ij}\) is 100 if respondent \(i\) answers component \(j\) correctly
and zero otherwise. The 26 components concern maternity and parental
leave, childcare allowances, and child benefits under the rules in force
in 2021. Scoring incorporates forced-guess follow-ups; unresolved
\emph{do not know} responses are incorrect. Questions not administered
or technically missing are excluded. Each respondent
receives one of two randomized childcare-allowance vignettes. Mean
accuracy in the estimation sample is 49.3 percent.

Occupational expertise comes from the 33 knowledge domains in O*NET
version 25.0 \citep{onet-2020}. For occupation \(o\) and domain \(k\),
let \(IM_{ok}\in[1,5]\) and \(LV_{ok}\in[0,7]\) be the rated importance
and required level. I form knowledge stocks
\(z_{ok}=[(IM_{ok}-1)/4](LV_{ok}/7)\) and normalize them to shares
\(\widehat s_{ok}=z_{ok}/\sum_dz_{od}\). Profiles are mapped from U.S.
SOC to four-digit ISCO occupations using the official crosswalk
\citep{bls-2012}, giving available SOC profiles equal weight within
each ISCO occupation. An aggregate O*NET occupational record is used
when available; otherwise its specialties receive equal weight.

For each component \(j\), a set \(\mathcal R_j\) records the two to four
knowledge domains its content draws on. For example, a question about
the duration of partner leave draws on Law and Government and Personnel
and Human Resources; a question about earnings replacement also draws
on Economics and Accounting. The regressor
\(M_{oj}=\sum_{k\in\mathcal R_j}\widehat s_{ok}\) measures the share of
occupational knowledge devoted to these domains. It equals the coverage
measure \(\sum_k\min\{\widehat s_{ok},\widehat u_{jk}\}\), where
\(\widehat u_{jk}=1/|\mathcal R_j|\) on \(\mathcal R_j\) and zero
otherwise: the cap never
binds in this sample. The assignments are common across respondents and
occupations and yield eight distinct requirement profiles. All include
Law and Government, whose occupational share is absorbed by the
respondent fixed effects.

Writing \(o(i)\) for respondent \(i\)'s occupation, I estimate
\begin{equation}
C_{ij}=\alpha_i+\lambda_j+\theta\widetilde M_{o(i)j}+\varepsilon_{ij},
\label{eq:app-empirical-liss-regression}
\end{equation}
where \(\widetilde M\) is \(M\) divided by its residual standard
deviation after projection on respondent and component fixed effects.
The respondent effects \(\alpha_i\) absorb stable differences in
ability, education, and occupation; the component effects \(\lambda_j\)
absorb common differences in difficulty. Thus \(\theta\) asks whether
the same person performs relatively better on questions whose domains
are better represented in her occupational knowledge. Estimation is
unweighted because these modules provide no analysis weight. Standard
errors use a finite-sample-adjusted two-way cluster-robust covariance
estimator, clustered by four-digit occupation and by the eight requirement
profiles, with a \(t(7)\) reference distribution.

The coefficient is 1.628 percentage points per residual standard
deviation of coverage (Table~\ref{tab:empirical-results}). Excluding
occupations directly related to law, human resources, finance, statutory
advice, or childcare leaves a coefficient of 1.508 (\(p=.024\)). The
association therefore also appears among workers whose occupations are
less directly connected to the policies being tested.

\subsection{Coordinating responsibilities and economic-policy knowledge}
\label{app:empirical-ecb}

The ECB Consumer Expectations Survey \citep{ecb-2026} links reported
workplace responsibilities in May to knowledge questions in August for
workers across industries and employer sectors in 11 European countries.
Pooling the four comparable waves, 2022--2025, gives 36,099
respondent-year observations. The outcome \(K_{ict}\) is the percentage
of eight equally weighted questions answered correctly by person \(i\)
in country \(c\) and year \(t\). One asks what inflation measures; seven
ask whether particular objectives or tasks belong to the ECB, including
its inflation target, bank supervision, and national taxation or
spending. The score counts \emph{do not know} as incorrect and is missing
if any of the eight answers is technically missing.

The focal count \(R_{it}\in\{0,1,2\}\) adds two reported duties:
supervising or managing a team, and involvement in company investment
or future strategy. The former coordinates other people's work; the
latter connects decisions across the firm's activities and over time.
To distinguish these duties from other workplace authority, the regression
also includes \(P_{it}\), the count of price- and wage-setting duties,
and \(H_{it}\), the count of senior- and junior-staff hiring duties.
Each count requires both of its constituent indicators to be observed.
The pooled specification is
\begin{equation}
K_{ict}=\beta R_{it}+\delta P_{it}+\eta H_{it}
+X_{it}'\gamma+\lambda_{ct}+\varepsilon_{ict}.
\label{eq:app-empirical-ecb-regression}
\end{equation}
Here \(\lambda_{ct}\) are country-by-year fixed effects. The controls
\(X_{it}\) include indicator sets for age group, gender, education,
household-income quintile, employment type, industry, employer sector,
and firm size, plus a four-question financial-literacy score covering
compound interest, real savings returns, diversification, and compound
debt. I use positive August population weights, normalized to mean one
within each wave, and cluster standard errors by respondent.

Each focal duty is associated with 2.574 percentage points greater
accuracy. Adding one-digit occupation fixed effects in 2023--2025, when
the comparable occupation field is available, gives 2.185 points.
Adding respondent fixed effects and retaining time-varying controls
for age, income, employment type, industry, employer sector, and firm
size gives 1.298 points. This last comparison uses changes within the
same person and contains 19,648 observations for 9,810 people.
The price/wage-setting and hiring counts have smaller coefficients
(.52, \(p=.178\), and .13, \(p=.682\)); the focal coefficient exceeds
each (\(p=.0003\) and \(p<.001\)).

\begin{table}[!htbp]
\centering
\caption{Occupational knowledge and policy understanding}
\label{tab:empirical-results}
\small
\begin{tabular*}{\textwidth}{@{\extracolsep{\fill}}p{2.85in}rrrr@{}}
\toprule
Specification & Coefficient & SE & \(p\)-value & Observations \\
\midrule
\multicolumn{5}{l}{\textit{Panel A. LISS family-policy accuracy}} \\
Occupational coverage of question domains & 1.628 & .419 & .006 & 12,553 \\
\addlinespace
\multicolumn{5}{l}{\textit{Panel B. ECB institutional knowledge}} \\
Pooled specification & 2.574 & .327 & \(<.001\) & 36,099 \\
2023--2025, occupation fixed effects & 2.185 & .388 & \(<.001\) & 26,981 \\
Respondent fixed effects & 1.298 & .482 & .007 & 19,648 \\
\bottomrule
\end{tabular*}
\vspace{3pt}
\noindent\begin{minipage}{\textwidth}
\footnotesize\setstretch{1.0}
\textit{Notes:} Coefficients are percentage points per residual standard
deviation of occupational coverage in Panel A and per additional focal
duty in Panel B. Observations are respondent--component pairs in Panel A
and respondent-years in Panel B. Both LISS fixed effects are included;
ECB controls and weights are described above. LISS inference clusters by
occupation and eight requirement profiles and uses \(t(7)\); ECB
inference clusters by respondent. Clustering the pooled ECB regression
by country-by-year cell instead gives SE .432 (\(p<.001\)).
\end{minipage}
\end{table}

These observational comparisons do not separate learning through work
from selection into occupations or changing roles. They nevertheless
support the two patterns emphasized in the paper: workers know more
about policies close to their occupational expertise, and those with
coordinating responsibilities know more about system-level economic
questions. LISS measures occupational coverage of policy domains; the
ECB questions measure institutional and inflation knowledge. Neither
directly measures the ability to trace a policy's consequences, voting,
or candidate effort.

\clearpage

\begin{spacing}{1.10}
\setlength{\bibhang}{1em}
\setlength{\bibsep}{0pt plus .3pt}
\renewcommand{\bibsection}{%
  \section*{References}\addcontentsline{toc}{section}{References}}
\bibliographystyle{plainnat}
\bibliography{references}
\end{spacing}